\documentclass[11pt, twoside, sort, a4paper]{article}

\usepackage{enumitem}
\usepackage{geometry}
 \geometry{
 left=1.4cm,
 right=1.4cm,
 top=1.5cm,
 bottom=1.5cm,
 }
 \usepackage{nccmath}
 \usepackage{wrapfig}
\usepackage{empheq}
\usepackage{graphicx}
\usepackage{bm}
\usepackage{mathtools}
\linespread{1.13}
\baselineskip=5pt

\newcommand{\sifin}[1]{\color{black}{#1}}

\usepackage{geometry}
\usepackage{empheq}
 \usepackage{mathrsfs}
 \usepackage{tikz}

\newcommand{\e}{\mathrm{e}}

\AtBeginDocument{
\addtolength{\abovedisplayskip}{-0.75ex}
\addtolength{\abovedisplayshortskip}{-0.75ex}
\addtolength{\belowdisplayskip}{-0.75ex}
\addtolength{\belowdisplayshortskip}{-0.75ex}
}
\makeatletter
\def\thm@space@setup{\thm@preskip=3pt
\thm@postskip=3pt}
\makeatother
\usepackage{titlesec}
\titlespacing*{\section}
  {0pt}{0.6\baselineskip}{0.2\baselineskip}
\titlespacing*{\subsection}
  {0pt}{0.6\baselineskip}{0.2\baselineskip}
\titlespacing*{\subsubsection}
  {0pt}{0.6\baselineskip}{0.2\baselineskip}

\newcommand{\subalign}[1]{%
  \vcenter{%
    \Let@ \restore@math@cr \default@tag
    \baselineskip\fontdimen10 \scriptfont\tw@
    \advance\baselineskip\fontdimen12 \scriptfont\tw@
    \lineskip\thr@@\fontdimen8 \scriptfont\thr@@
    \lineskiplimit\lineskip
    \ialign{\hfil$\m@th\scriptstyle##$&$\m@th\scriptstyle{}##$\crcr
      #1\crcr
    }%
  }
}

%
%

\usepackage{natbib}
\usepackage{amsmath, amsthm, amssymb}
\usepackage{caption}
\setlength{\captionmargin}{20pt}

\usepackage{subfigure}
\usepackage{tikz}
\usetikzlibrary{matrix, positioning, calc}
\usepackage{multirow, nccmath}
\usepackage{algorithmic}
\usepackage{algorithm}

\usepackage{comment}
\usepackage{color}
\usepackage[titletoc,title]{appendix}

\newcommand{\purple}[1]{{\color{black}{#1}}}

\usepackage[running, mathlines]{lineno}


\newcommand{\EQ}{\begin{equation}}
\newcommand{\EN}{\end{equation}}
\newcommand{\EQS}{\begin{equation*}}
\newcommand{\ENS}{\end{equation*}}
\newcommand{\EQA}{\begin{eqnarray}}
\newcommand{\ENA}{\end{eqnarray}}
\newcommand{\EQAS}{\begin{eqnarray*}}
\newcommand{\ENAS}{\end{eqnarray*}}
\newcommand{\ds}{\displaystyle}

\usepackage{accents}
\newlength{\dhatheight}

\newcommand{\myin}{\scalebox{0.7}{\text{in}}}

\newcommand{\Ebb}{\mathbb{E}}

\newcommand{\Pbb}{\mathbb{P}}
\numberwithin{equation}{section}
\numberwithin{table}{section}
\numberwithin{figure}{section}



\newtheorem{defn}{Definition}
\newtheorem{theorem}{Theorem}
\newtheorem{lemma}{Lemma}
\newtheorem{remark}{Remark}[section]

\newtheorem{proposition}{Proposition}

\numberwithin{defn}{section}
\numberwithin{theorem}{section}
\numberwithin{lemma}{section}
\numberwithin{remark}{section}
\numberwithin{assumption}{section}
\numberwithin{condition}{section}
\numberwithin{property}{section}
\numberwithin{proposition}{section}
\numberwithin{corollary}{section}
\numberwithin{algorithm}{section}

\def\argmax{\mathop{\rm arg\,max}}
\def\argmin{\mathop{\rm arg\,min}}
\def\argsup{\mathop{\rm arg\sup}}
\def\arginf{\mathop{\rm arg\inf}}
\def\r{\right}
\def\l{\left}



\makeatletter
\renewcommand*\env@matrix[1][c]{\hskip -\arraycolsep
  \let\@ifnextchar\new@ifnextchar
  \array{*\c@MaxMatrixCols #1}}
\makeatother
\usepackage{algorithmic}
\usepackage{algorithm}
\usepackage{colortbl}
\usepackage{multirow}
\usepackage{setspace}
\usepackage{amsfonts}
\usepackage{mathrsfs}
\AtBeginDocument{
\addtolength{\abovedisplayskip}{-0.4ex}
\addtolength{\abovedisplayshortskip}{-0.4ex}
\addtolength{\belowdisplayskip}{-0.4ex}
\addtolength{\belowdisplayshortskip}{-0.4ex}
}

\definecolor{lg}{gray}{0.8}

\usepackage[breaklinks]{hyperref}

\definecolor{darkgreen}{rgb}{0.0, 0.5, 0.0}



\makeatletter
\def\thm@space@setup{\thm@preskip=3pt
\thm@postskip=3pt}
\makeatother
\usepackage{titlesec}
\titlespacing*{\section}
  {0pt}{0.5\baselineskip}{0.1\baselineskip}
\titlespacing*{\subsection}
  {0pt}{0.5\baselineskip}{0.1\baselineskip}
\titlespacing*{\subsubsection}
  {0pt}{0.5\baselineskip}{0.1\baselineskip}

\usepackage[normalem]{ulem}
\usepackage{cancel}

%

\begin{document}

\title{Multi-period Mean-Buffered Probability of Exceedance in \\Defined Contribution Portfolio Optimization}

\author{
Duy-Minh Dang\thanks{School of Mathematics and Physics, The University of Queensland, St Lucia, Brisbane 4072, Australia,
email: \texttt{duyminh.dang@uq.edu.au}
}
\and
Chang Chen \thanks{School of Mathematics and Physics, The University of Queensland, St Lucia, Brisbane 4072, Australia,
email: \texttt{chang.chen1@student.uq.edu.au} 
}
}
\date{\today}
\maketitle
\begin{abstract}
We investigate multi-period mean–risk portfolio optimization for long-horizon Defined Contribution plans, focusing on buffered Probability of Exceedance (bPoE), a more intuitive, dollar-based alternative to Conditional Value-at-Risk (CVaR).
We formulate both pre-commitment and time-consistent Mean–bPoE and Mean–CVaR portfolio optimization problems under realistic investment constraints (e.g.\ no leverage, no short selling) and jump-diffusion dynamics. These formulations are naturally framed as bilevel optimization problems, with an outer search over the shortfall threshold and an inner optimization over rebalancing decisions. We establish an equivalence between the pre-commitment formulations through a one-to-one correspondence of their scalarization optimal sets, while showing that no such equivalence holds in the time-consistent setting.
We develop provably convergent numerical schemes for the value functions associated
with both pre-commitment and time-consistent formulations of these mean-risk control problems.

Using nearly a century of market data, we find that time-consistent Mean–bPoE strategies closely resemble their pre-commitment counterparts. In particular, they maintain alignment with investors' preferences for a minimum acceptable terminal wealth level---unlike time-consistent Mean–CVaR, which often leads to counterintuitive control behaviour.
We further show that bPoE, as a strictly tail-oriented measure, prioritizes guarding against catastrophic shortfalls while allowing meaningful upside exposure—making it especially appealing for long-horizon wealth security.
These findings highlight bPoE's practical advantages for Defined Contribution investment planning.

\vspace{.1in}

\noindent
\noindent
{\bf{Keywords:}} Mean-risk portfolio optimization, Buffered Probability of Exceedance, Conditional Value-at-Risk, Pre-commitment, Time-consistent
\vspace{.1in}

\noindent\noindent {\bf{AMS Subject Classification:}} 91G, 65R20, 93E20, 49M25
\end{abstract}

\section{Introduction}
\label{sc:intro}
The main objective of this paper is to investigate buffered Probability of Exceedance as a viable alternative risk measure to Conditional Value-at-Risk for multi-period
Defined Contribution portfolio optimization.
\subsection{Motivation}
Saving for and managing wealth throughout retirement remains one of the most critical and challenging financial tasks facing individuals.\footnote{Nobel Laureate William Sharpe famously described this challenge as ``the nastiest,
hardest problem in finance'' \cite{ritholz2017tackling}.}
This challenge is magnified by the global shift from Defined Benefit (DB) to Defined Contribution (DC) pension systems, which places the full burden of investment and longevity risk---the possibility that individuals outlive their retirement savings---on plan participants. In many countries, including Australia, the United States, Canada, and parts of Europe, this shift has led to the emergence of ``full-life cycle'' DC plans that span several decades, covering both the accumulation (working years) and decumulation (retirement) phases.

Compounding these challenges is the rapid aging of the world's population~\cite{un2024datasources}, together with the current era of rising inflation and economic turbulence, all of which underscore the urgency of prudent lifecycle financial planning. Moreover, low levels of financial literacy further exacerbate long-horizon financial decision-making.\footnote{Numerous studies consistently show that, even in advanced economies with well-developed financial markets, many adults---including younger individuals---lack foundational financial knowledge, such as an understanding of investment risk, portfolio diversification, and the implications of long-term market fluctuations \cite{oecd2017g20infe, productivity2018super, klapper2015finlit, lusardi2017ordinary, agnew2013financial}.
In Australia, for example, approximately 30\% of adults exhibit low financial literacy, and about one-quarter lack an understanding of basic financial concepts~\cite[page~21]{productivity2018super}.}
Consequently, there is a clear need for risk measures that combine mathematical rigor with intuitive, readily interpretable insights, enabling retail investors---particularly retirees---and DC plan providers to understand, communicate, and manage downside risk within long-horizon portfolio decision-making.

Over the past few decades, a number of risk measures have been proposed for portfolio optimization, with increasing emphasis on left-tail metrics that capture downside risk.
In this context, investors typically aim to maximize some notion of reward--such as the expectation of terminal wealth or total withdrawals---while minimizing a measure of risk, leading to so-called ``reward–risk'' portfolio optimization.

Among the various risk measures proposed, Conditional Value-at-Risk (CVaR), also known as Expected Shortfall, stands out  as a natural means of capturing left-tail risk in
a portfolio's wealth outcomes. Specifically, it measures the average of the worst $\alpha$-fraction of wealth outcomes, where $\alpha\in(0,1)$ is a confidence level~\cite{RT2000}. Mathematically, CVaR at confidence level $\alpha$ can be expressed through a threshold-based optimization formulation, in which a candidate shortfall threshold variable $W_a$ partitions the wealth distribution into its worst $\alpha$-fraction and the remainder.
This threshold-based formulation is widely employed in portfolio optimization due to its significant computational advantages \cite{RT2000, Mafusalov2018}.
Although CVaR is often defined on losses (making lower values preferable), this paper defines CVaR on terminal wealth so that \mbox{higher values are desirable}.

CVaR has gained considerable popularity in risk management due to its coherent risk properties and its ability to be optimized efficiently using scenario-based linear programs or non-smooth convex optimization methods \cite{RT2000, krokhmal2002portfolio}.
As a result, CVaR-based formulations have become standard in institutional portfolio construction and widely adopted in academic research (see, for example, \cite{ban2018machine, cui2019, miller2017optimal, alexander2006, lwin2017mean, PA2020a} among many other publications). Crucially, CVaR's focus on left-tail risk is especially relevant for individuals in the accumulation phase---who face the possibility of not achieving sufficient retirement savings---as well as for retirees in the decumulation phase---who risk outliving their wealth---making CVaR a suitable risk measure across the entire retirement lifecycle \cite{forsyth2020optimal}.\footnote{Beyond retirement planning, CVaR has also found widespread application in fields such as supply chain management, scheduling and networks, energy systems, and medicine~\cite{filippi2020conditional}.}

Despite CVaR's widespread adoption in institutional settings and academic research, in our experience, practitioners frequently observe that its probability-based formulation can be
opaque to many retail investors, who tend to prefer tangible minimum acceptable terminal wealth levels over an abstract confidence level such as~$\alpha$.
This observation aligns with broader research documenting widespread gaps in financial literacy, as mentioned earlier, with \cite{lusardi2019financial} noting that
``\emph{across countries, financial literacy is at a crisis level,\ldots
individuals have the lowest level of knowledge around the concept of risk, and risk diversification.}''

In light of these considerations, the buffered Probability of Exceedance (bPoE) \cite{bpoe2018, Mafusalov2018} emerges as a risk measure that preserves many of CVaR's key advantages while offering a more intuitive, dollar-based perspective on minimum acceptable terminal wealth. More specifically, bPoE directly specifies a ``disaster level'' $D$ in dollar terms and identifies the minimal probability mass in the left tail whose conditional average equals $D$.
As with CVaR, we define bPoE on terminal wealth, ensuring that a lower bPoE value corresponds to a more favorable outcome from the investor's perspective.
Conceptually, bPoE can be viewed as the inverse of CVaR: whereas CVaR fixes $\alpha$ and calculates the average wealth in the left-$\alpha$ tail of the distribution, bPoE instead fixes the disaster level $D$ and determines the corresponding probability level~\cite{bpoe2018}.
Similar to CVaR, bPoE at a fixed disaster level $D$ can also be expressed through a threshold-based optimization formulation, using a candidate shortfall threshold $W_o$ that partitions the terminal wealth distribution into a left-tail
with conditional average equal to $D$, and the remainder~\cite{bpoe2018}.

To illustrate, consider an investor concerned about their DC account balance falling below
\$100{,}000---a level they regard as the minimum acceptable terminal wealth. In a CVaR framework, the investor would first select a probability level $\alpha$ (e.g.\ 5\%) and then compute the average wealth in the worst $\alpha$-fraction of scenarios---an approach that
appears detached from the practical concern of falling below a specific dollar-based terminal wealth level. By contrast, bPoE begins by fixing \$100{,}000 as the disaster level $D$ and determines the likelihood that terminal wealth falls into a region whose conditional average equals \$100{,}000. This dollar-based framing naturally addresses the core question: ``What is the chance that my retirement wealth falls into a disaster region where the average outcome is only \$100{,}000?''. As a result, bPoE is more intuitive for retail investors and supports clearer communication between DC plan holders and providers.

While CVaR-based frameworks have become standard in portfolio optimization \cite{miller2017optimal, gao2017dynamic, PA2020a}, bPoE remains largely unexplored within this domain---despite its promising theoretical properties \cite{bpoe2018, cvar_bpoe_2019}.\footnote{bPoE has seen applications in logistics, natural-disaster analysis, statistics, stochastic programming, and machine learning.} A notable exception is the single-period analysis in \cite{cvar_bpoe_2019}, which derives closed-form expressions for bPoE under certain probability distributions and demonstrates its use in portfolio selection and density estimation.
However, to the best of our knowledge, bPoE has not been explored in either discrete- or continuous-time multi-period portfolio optimization settings, where notions of time-consistency and pre-commitment (the latter being inherently time-inconsistent)
play a central role in investment decisions. As we shall see, the fixed dollar disaster level~$D$ in bPoE not only aligns more naturally with investors' absolute wealth perspectives, but also offers insights into significant advantages over CVaR in developing time-consistent solutions for multi-period portfolio optimization.

Furthermore, while tail-based reward–risk formulations, such as Mean–CVaR, are well established in theory and increasingly explored in practice, there remains a notable lack of provably convergent numerical methods for computing the value functions of these problems in multi-period settings under realistic market dynamics and investment constraints.
This limitation highlights the need for reliable numerical tools that enable the practical adoption of these strategies.

\subsection{Background}
Broadly speaking, two main approaches have been developed for multi-period mean–risk portfolio optimization: the pre-commitment approach and the time-consistent approach.
Pre-commitment strategies are known to exhibit time-inconsistent behavior \cite{bjork2014, bjork2010general, bjork2016theory, bjork2017}. Concretely, time-inconsistency means that there exists $0 \le t_k < t_l < t_n \le T$, where $T$ is the finite investment horizon, such that the optimal decision for time $t_n$, computed at $t_k$, differs from the optimal decision for the same time $t_n$, but computed at a later time $t_l$.

A classic illustration of this phenomenon is Mean-Variance optimization, where
the variance of the terminal wealth serves as the risk-measure. Because the variance term in the Mean–Variance objective is not separable in the sense of dynamic programming, the Bellman principle cannot be applied, thereby resulting in time-inconsistency \cite{duan2000, DangForsyth2014, zhou2000, miller2017optimal}.
Likewise, CVaR and bPoE are nonlinear risk measures, and hence their associated mean–risk objectives are also non-separable in the sense of dynamic programming.

In contrast, the time-consistent approach enforces an explicit time-consistency constraint that, for any \mbox{$0 \le t_k < t_l < t_n \le T$},  the optimal control for time~$t_n$, computed at~$t_l$, must match the optimal control for the same time $t_n$, but computed at earlier time~$t_k$. Imposing this constraint restores dynamic-programming tractability by making the mean-risk objective satisfy the Bellman principle \cite{basak2010, PMVS2019, PMVS2018timeconsistent, PA2020a}.
However, since time-consistent  strategies can be viewed as pre-commitment strategies with an additional constraint, they are generally not globally optimal when evaluated from time zero. The reader is referred to \cite{vigna2022tail} for a broader discussion of the merits and demerits of time-consistent and pre-commitment policies.

Without enforcing time consistency, multi-period portfolio optimization under the pre-commitment framework generally yields strategies that are globally optimal  when viewed from time zero, but may not remain optimal as time evolves. This has led many researchers
to label pre-commitment strategies as non-implementable.
However, in certain specialized settings, the pre-commitment strategy can be reformulated
as a time-consistent solution under an alternative objective.

For example, in Mean–Variance optimization, the pre-commitment strategy can be recast as a time-consistent solution under an alternative objective that minimizes quadratic shortfall relative to a fixed wealth target, often referred to as the Mean–Variance–induced utility maximization problem~\cite{strub2019}. Likewise, in both continuous-time \cite{miller2017optimal, gao2017dynamic} and multi-period \cite{PA2020a} Mean–CVaR portfolio optimization, although pre-commitment strategies are formally time-inconsistent, they can be equivalently viewed at time zero as linear shortfall policies with a fixed wealth threshold---policies that are time-consistent~\cite{PA2020a}. Therefore, in both cases, the investor has no incentive to deviate from the strategy computed at inception,
rendering the pre-commitment solution effectively implementable.

Interestingly, the literature has observed that imposing time-consistency constraints in Mean-CVaR portfolio optimization can lead to undesirable properties in the resulting optimal controls, even under otherwise realistic conditions. In discrete rebalancing settings with no leverage and no short selling, time-consistent Mean-CVaR strategies have been found to be
wealth--independent in lump-sum investment scenarios---and only weakly dependent otherwise---thus behaving similarly to deterministic strategies and offering little or no improvement over simple constant-weight strategies~\cite{PA2020a, ForsythVetzal2019}.

This phenomenon arises because time-consistent Mean-CVaR policies re-optimize the shortfall threshold $W_a$ at each rebalancing time based on current wealth, which can vary substantially from earlier wealth levels. Thus, the shortfall threshold $W_a$ shifts over time in response to these wide fluctuations. In contrast, as noted earlier, most investors naturally prefer anchoring their investment strategy to a fixed minimum terminal wealth level rather than redefining the shortfall threshold whenever their wealth changes. In fact, \cite{PA2020a} states
\begin{center}
\begin{minipage}{0.7\textwidth}
``\emph{At time zero, we have some idea of what we desire as a minimum final wealth.
Fixing this shortfall target for all $t > 0$ makes intuitive sense.}''
\end{minipage}
\end{center}
Therefore, re-optimizing the shortfall threshold $W_a$ across a wide range of possible wealth levels in time-consistent Mean-CVaR solutions can conflict with these absolute goals,
leading to strategies that are often viewed as counterintuitive from a practical standpoint.
We note that related paradoxes concerning time-consistent Mean–Variance formulations under constraints have also been discussed; see \cite{bensoussan2019paradox} for details.

By contrast, pre-commitment Mean-CVaR formulations, while formally time-inconsistent, adopt a fixed shortfall threshold from inception---thus aligning more closely with the
minimum terminal wealth levels  that investors typically prefer and delivering more practically appealing outcomes.

\subsection{Contributions}
While the prevailing narrative in the literature emphasizes that enforcing time-consistency can lead to counterintuitive control behaviour, we argue that the choice of risk measure is equally critical in shaping time-consistent portfolio strategies. In particular, such behaviour can arise not only from the imposition of time-consistency, but also from the way sub-problems are re-solved at each state and time step, an aspect fundamentally governed by the structure of the risk measure.

Unlike its CVaR counterpart, time-consistent Mean-bPoE---anchored by a fixed disaster level $D$---enforces a constrained re-optimization of the bPoE shortfall threshold $W_o$ at each time step. Specifically,  $W_o$ must define a left-tail region whose conditional average equals $D$, thereby limiting how much the threshold can shift in response to wealth fluctuations. As a result, the policy may remain better aligned with an investor's minimum terminal wealth preferences, thus avoiding the counterintuitive behavior of investment controls  often observed in time-consistent formulations.

In response to these observations, this paper sets out to achieve three primary objectives.
First, we formulate  multi-period Mean–bPoE and Mean–CVaR optimization frameworks under both pre-commitment and time-consistent settings, incorporating realistic investment constraints and modeling assumptions appropriate for long-horizon DC plans.
We then develop and analyze a provably convergent numerical scheme to compute the resulting value functions and optimal controls under these frameworks.
For illustration, we focus specifically on the accumulation phase.
Second, we analyze the mathematical connection between Mean–bPoE and Mean–CVaR strategies,
motivated by the fact that these risk measures are inverse to one another,
and explore how this duality influences their behaviour under both optimization paradigms.
Third, using nearly a century of actual market data, we illustrate the properties of both Mean–bPoE and Mean–CVaR strategies, with particular focus on whether bPoE's minimum terminal wealth perspective can avoid the counterintuitive control behaviour often observed in time-consistent mean–risk formulations.

The main contributions of this paper are as follows.
\begin{itemize}

\item We present both the pre-commitment and time-consistent formulations of multi-period Mean–bPoE and Mean–CVaR portfolio optimization problems using a scalarization approach. For the pre-commitment case, we establish the existence of finite optimal thresholds. Building on this result, we then show that pre-commitment Mean–bPoE and Mean–CVaR formulations can be reformulated as time-consistent target-based portfolio problems, thereby ensuring their implementability.\footnote{The existence of a finite optimal threshold is essential not only to this result, but also to the mathematical analysis and the development of numerical methods, yet a formal proof is often overlooked in the existing literature.}

\item
    We then rigorously establish an equivalence between the pre-commitment Mean–bPoE and Mean–CVaR formulations, in the sense of a one-to-one correspondence between points on their scalarization optimal sets (i.e.\ efficient frontiers), under appropriately calibrated scalarization parameters. 
    {\sifin{For each such pair of corresponding points, there exists a threshold/control pair, not necessarily unique, that is simultaneously optimal in both formulations and attains these points. As a result, the induced distributions of terminal wealth coincide.}}


   In contrast, we demonstrate that this equivalence does not hold in the time-consistent setting: the time-consistent Mean–bPoE and Mean–CVaR optimal controls exhibit fundamentally different behavior with respect to wealth dependence.

\item We develop a unified numerical framework for both pre-commitment and time-consistent multi-period Mean–bPoE and Mean–CVaR formulations, based on monotone numerical integration. To the best of our knowledge, this is the first work to establish convergence of a numerical scheme to the value function for these problems under realistic market dynamics and investment constraints.

\item We conduct a comprehensive numerical comparison of Mean–bPoE and Mean–CVaR optimization results, examining the behavior of optimal investment strategies, terminal wealth distributions, and several key performance metrics---including the mean of terminal wealth, CVaR, bPoE, and the 5th, 50th, and 95th percentiles. All numerical experiments use model parameters calibrated to 88 years of inflation-adjusted long-term U.S.\ market data, enabling realistic conclusions to be drawn.

    {\purple{We find that pre-commitment Mean–bPoE and its CVaR counterpart yield
    virtually identical investment outcomes across all comparison metrics,
    consistent with the theoretical equivalence established earlier.}}
    Consequently, pre-commitment Mean–bPoE can be integrated seamlessly into existing Mean–CVaR workflows, allowing institutions to adopt a dollar-based risk measure without altering broader frameworks or results.

    \item {\purple{In contrast, time-consistent Mean–bPoE and Mean–CVaR strategies exhibit fundamentally different control behaviour and investment outcomes,
        consistent with theoretical findings.}}

        For the same mean of terminal wealth, time-consistent Mean–bPoE outperforms its Mean–CVaR counterpart on nearly all comparison metrics, including CVaR, bPoE, and the 5th and 95th percentiles,  while yielding a noticeably lower median.
        This reflects a key distinction: Mean–CVaR tends to compress the wealth distribution toward the median, whereas Mean--bPoE emphasizes tail performance by prioritizing protection against catastrophic shortfalls while allowing meaningful upside exposure. As a result, bPoE may be especially appealing for investors focused on long-term wealth security rather than distributional tightness.

\end{itemize}
The remainder of the paper is organized as follows. Section~\ref{sec:modelsetup} describes the investment setting and the underlying asset dynamics. Section~\ref{sc:risk_measures} introduces the two risk measures---CVaR and bPoE---and discusses their inverse relationship. In Section~\ref{sc:pareto}, we define the scalarization optimal sets for the pre-commitment formulations and examine their key properties.
Section~\ref{sc:precommitment} establishes the equivalence between pre-commitment Mean–bPoE and Mean–CVaR formulations and discusses their respective implementability under a target-based interpretation. In Section~\ref{sc:time_consistent}, we present the formulations for time-consistent Mean–bPoE and Mean–CVaR, and examine key differences in the behavior of their optimal controls. A unified numerical framework applicable to both settings is developed in Section~\ref{sc:num_methods}. Section~\ref{sc:num} presents and discusses the numerical results. Section~\ref{sc:conclude} concludes the paper and outlines directions for future research.

\section{Modelling}
\label{sec:modelsetup}
Except where noted, our modelling framework---including the discrete rebalancing schedule, jump-diffusion dynamics, and investment constraints---closely follows the discrete-time setting of \cite{PA2020a}.
\subsection{Rebalancing discussion}
\label{sec:rebalancing}
We consider a portfolio consisting of a risky asset and a risk-free asset. In practice, the risky asset may represent a broad market index, while the risk-free asset could be a bank account. This setup is justified by the observation that a diversified portfolio of various risky assets, such as stocks, can often be approximated by a single broad index. Long-term investors typically focus on strategic asset allocation—determining how much to allocate to different asset classes—rather than selecting individual stocks.

We consider a complete filtered probability space $(\mathcal{S}, \mathcal{F}, \{\mathcal{F}_t\}_{0 \le t \le T}, \mathbb{P})$, where $\mathcal{S}$ is the sample space, $\mathcal{F}$ is a $\sigma$-algebra, $\{\mathcal{F}_t\}_{0 \le t \le T}$ is the filtration over a finite investment horizon $T>0$, and $\mathbb{P}$ is the real-world probability measure. Let $S(t)$ and $B(t)$ respectively denote the amounts invested in the risky and risk-free assets at time $t \in [0, T]$. The total wealth at time $t$ is $W(t) = S(t) + B(t)$.
For simplicity, we often write $S_t = S(t)$, $B_t = B(t)$, and $W_t = W(t)$. We also denote by $\{X_t\}$, $t \in [0,T]$, where  $X_t= (S_t, B_t)$, the multi-dimensional controlled underlying process, and let $x = (s,b)$ represent a generic state of the system.

We also let $\mathcal{T}$ be the set of $M$ pre-determined, equally spaced,
rebalancing times in $[0, T]$:\footnote{The assumption of equal spacing is made for simplicity. In practice, industry conventions typically follow a fixed rebalancing schedule, such as semi-annually or yearly adjustments, rather than a mix of different intervals.}
\EQ
\label{eq: T_M}
\mathcal{T} = \{ t_m \mid t_m = m \Delta t, \ m = 0, \dots, M-1 \},
\quad
\Delta t = T/M,
\EN
where $t_0 = 0$ is the inception time of the investment.
{\sifin{At each rebalancing time $t_m \in \mathcal{T}$, the investor first adjusts
the portfolio by incorporating a cashflow $q_m$ and then rebalances the
portfolio.
We focus on the accumulation phase and assume
$q_m \ge 0$ for all $m$, so that all cashflows are non--negative
contributions (``cash injections'').}}
At time $t_M = T$, the portfolio is liquidated (no rebalancing), yielding the terminal wealth $W_T$.

For subsequent use, we define the shorthand notation for instants just before and after
time $t \in [0, T]$ respectively as:
\[
t^- = t - \epsilon,
\quad\text{ and } \quad
t^+ = t + \epsilon,
\quad
\text{ where } \epsilon \to 0^+.
\]
For a generic time-dependent function $f(t)$, we write:
\[
f_{m}^- = \lim_{\epsilon\to 0^+} f(t_m - \epsilon),
\quad
f_{m}^+ = \lim_{\epsilon\to 0^+} f(t_m + \epsilon),
\quad \text{where } t_m \in \mathcal{T}.
\]
As noted in \cite{PA2020a}, DC plan savings are typically held in tax-advantaged accounts, where portfolio rebalancing does not trigger immediate tax liabilities.
Given this, we also assume no taxes in our analysis. In addition, rebalancing occurs infrequently on a fixed schedule, such as annually, reducing trading activity and significantly lowers costs associated with bid-ask spreads, brokerage fees, and market impact; hence, we assume no transaction costs. Given these assumptions, the total wealth at time $t_m^+$ after incorporating the {\sifin{cash injection}} $q_m$ is given by
\EQ
\label{eq:cash}
W_{m}^+ = W_{m}^- + q_m, \quad t_m \in \mathcal{T}.
\EN
For a rebalancing time $t_m \in \mathcal{T}$,
we use $u_{m} (\cdot) \equiv  u(\cdot)$ to denote the rebalancing control which is
the proportion of total wealth allocated to the risky asset.
This proportion depends on both the total wealth $W_{m}^+$
(including the {\sifin{cash injection $q_m$}}) and time $t_m$, i.e.\
\[
u_{m} (\cdot) = u_{m}(W_{m}^+) = u(W_{m}^+, t_m), \quad
\text {where $W_{m}^+$ is given by \eqref{eq:cash}}.
\]
We denote by $\mathcal{Z}$ the set of all admissible rebalancing controls, i.e.\ $u_{m} \in \mathcal{Z}$ for every $t_m \in \mathcal{T}$. The set $\mathcal{Z}$ is typically determined by investment constraints. To enforce no leverage and no shorting, we impose $\mathcal{Z} = [0,1]$.

Suppose that the investor applies the rebalancing control $u_{m} \in \mathcal{Z}$ at $t_m \in \mathcal{T}$, and the system is in state $x = (s, b)$ immediately before rebalancing, i.e.\
at time $t_m^-$, hence, $W_{m}^- = s+b$.  We denote by $X_{m}^+ = (S_{m}^+, B_{m}^+) \equiv (S^+(s, b, u_{m}), B^+(s, b, u_{m}))$ the state of the system immediately after applying $u_{m}$. We then have
\EQ
\label{eq: W=W+q}
\begin{aligned}
        S_{m}^+ &\equiv {\purple{s^+(s, b, u_{m})}} = u_{m} \, W_{m}^+, \quad \text{and} \quad
        B_{m}^+ \equiv {\purple{b^+(s, b, u_{m})}} = \left(1-u_{m})\right)\, W_{m}^+,
        \\
        & \qquad \text{ where by \eqref{eq:cash} } W_{m}^+ = W_{m}^- + q_m = s+b + q_m.
\end{aligned}
\EN
Let $\mathcal{A}$ be the set of admissible controls, defined as follows
\begin{eqnarray}
   \mathcal{A} = \left\{\, \mathcal{U} =   \left\{ u_{m}\right\}_{m = 0, \ldots, M-1} \big|
   u_{m} \in \mathcal{Z}, \text{ for } m = 0, \ldots, M-1 \right\}.\label{eq: P}
\end{eqnarray}
For any rebalancing time $t_m$, we define the subset of controls applicable from $t_m$ onward as
\begin{eqnarray}
    \mathcal{U}_{m} & = & \left\{ \left. u_{m'} \ \right| \ u_{m'} \in \mathcal{Z}, \ m' = m, \ldots, M-1 \right\} \subseteq \mathcal{U}_{0} \equiv \mathcal{U}.\label{eq: P_m}
\end{eqnarray}

\subsection{Underlying dynamics}
\label{sec:dynamics}
In practice, real (inflation-adjusted) returns are more relevant to investors than nominal returns \cite{ForsythVetzal2017, PMVS2021c}. Consequently, we model both the risky and risk-free assets in real terms. All parameters, including the risk-free interest rate, are thus taken to be inflation-adjusted. Given that we focus on relatively long investment horizons (often 20 or 30 years) and that real interest rates tend to be mean-reverting, we assume a constant, continuously compounded real risk-free  interest rate $r$.


Between consecutive rebalancing times, the process $\{B_t\}$ follows
\begin{eqnarray}
    dB_t &=& r\,B_t\,dt,
    \quad
    t \in [\,t_m^+,\,t_{m+1}^-], \quad t_m \in \mathcal{T},
    \label{eq: dBt}
\end{eqnarray}
where $r$ is the constant real risk-free rate.
{\purple{In a discrete setting, the amount invested in the risk-free asset remains constant over $[t_m^+,\,t_{m+1}^-]$ and is updated at $t_{m+1}$ to reflect the interest accrued over $[t_m,\,t_{m+1}]$. Specifically, if the risk-free amount at $t_m^+$ is $B_{t_m^+} = b$, it remains at $b$ throughout $[t_m^+,\,t_{m+1}^-]$ and is updated to $b\,e^{r\,\Delta t}$ at $t_{m+1}$. Rebalancing then occurs immediately after settlement, i.e.\ over $[t_{m+1},\,t_{m+1}^+]$.}}

To capture more realistic behavior of the risky asset, we allow for both diffusion and jump components \cite{li2016optimal}.
We let the random variable $\xi$ be the jump multiplier.
If a jump occurs at time $t$, the amount invested in the risky asset
jumps from $S_{t^-}$ to $S_t = \xi \,S_{t^-}$. We adopt the Kou model \cite{kou01, kou2004}, in which $\ln(\xi)$ follows an asymmetric double-exponential distribution.
The probability density function of $\ln(\xi)$ is given by
\begin{eqnarray}
    \rho(\zeta)
    &=&
    p_{up}\,\eta_1\,e^{-\eta_1\,\zeta}\,\mathbb{I}_{\{\zeta \ge 0\}}
    \;+\;
    (1 - p_{up})\,\eta_2\,e^{\eta_2\,\zeta}\,\mathbb{I}_{\{\zeta < 0\}},
    \quad
    p_{up} \in [0,1],
    \ \eta_1 > 1,
    \ \eta_2 > 0.
    \label{eq: pdf for Kou model}
\end{eqnarray}
Between consecutive rebalancing times, in the absence of active control, the process $\{S_t\}$ evolves according to the jump-diffusion dynamics:
\begin{eqnarray}
    \frac{dS_t}{S_{t^-}}
    &=&
    \bigl(\mu - \lambda\,\kappa\bigr)\,dt
    \;+\;
    \sigma\,dZ_t
    \;+\;
    d\Bigl(\!\sum_{i=1}^{\pi_t}\!(\xi_i - 1)\Bigr),
    \quad
    t \in [\,t_m^+,\,t_{m+1}^-], \quad t_m \in \mathcal{T}.
    \label{eq: dSt}
\end{eqnarray}
Here, $\mu$ and $\sigma$ are the (inflation-adjusted) drift and instantaneous volatility, respectively, and $\{Z_{t}\}_{t \in [0, T]}$ is a standard Brownian motion.
The process $\{\pi_t\}_{0\le t \le T}$ is a Poisson process with a constant finite intensity rate $\lambda\geq 0$. All jump multiplier $\xi_i$ are i.i.d.\ with the same distribution as random variable $\xi$, and $\kappa = \mathbb{E}[\xi - 1]$ is the compensated drift adjustment, where $\mathbb{E}[\cdot]$ is the expectation taken under the real-world measure $\mathbb{P}$. It is further assumed that  $\{Z_{t}\}_{t \in [0, T]}$, $\{\pi_t\}_{0\le t \le T}$, and all $\{\xi_i\}$ are mutually independent. Note that GBM dynamics for $\{S_t\}$ can be recovered from \eqref{eq: dSt} by setting the intensity parameter $\lambda$ to zero.

\section{Risk measures}
\label{sc:risk_measures}
In this and the next sections, we introduce two risk measures, namely CVaR and bPoE, along with their corresponding mean-risk portfolio optimization formulations. For simplicity and clarity, we establish the following notational conventions. A subscript $j \in \{a,o\} $ is used to distinguish quantities related to the CVaR risk measure or CVaR-based formulation ($j = a$) from those corresponding to the bPoE counterparts ($j = o$). Additionally, a superscript  ``$p$'' denotes quantities associated with pre-commitment optimizations,
while a superscript ``$c$'' identifies those related to their time-consistent counterparts.

\subsection{Conditional Value-at-Risk}
Recalling that $W_{T}$ is the random variable representing terminal wealth, we let $\psi(w)$ denote its pdf. For a given confidence level $\alpha$, typically $0.01$ or $0.05$, the CVaR of $W_{T}$ at level $\alpha$ is defined as
\begin{eqnarray}
    \text{CVaR}_{\alpha}\left(W_{T}\right) & = & \tfrac{1}{\alpha}\int_{-\infty}^{\text{VaR}_{\alpha}\left( W_{T} \right)} w \  \psi(w)\,   dw.\label{eq: CVaR_alpha}
\end{eqnarray}
Here, $\text{VaR}_{\alpha}(W_T)$ denotes the Value-at-Risk (VaR) of $W_T$ at confidence level $\alpha$, given by
\begin{eqnarray}
    \text{VaR}_{\alpha}\left(W_{T}\right) & = & \left\{ \left.w \ \right| \ \mathbb{P}\left[ \  W_{T}\le w \ \right] = \alpha  \right\}.\label{eq: VaR_alpha}
\end{eqnarray}
That is,
\EQ
\label{eq: VaR_integral}
\int_{-\infty}^{\text{VaR}_{\alpha}\left( W_{T} \right)}  \psi(w)\,   dw = \alpha.
\EN
We can interpret $\text{VaR}_{\alpha}(W_T)$ as the threshold such that $W_T$ falls below this value with probability $\alpha$~\cite{rock2014}.
Intuitively, given a pre-specified $\alpha$, $\text{CVaR}_{\alpha}(W_T)$ represents the average level of $W_T$ in the worst $\alpha$-fraction of all possible outcomes, i.e.\ in the leftmost $\alpha$-quantile of the distribution of $W_T$.

As noted in \cite{RT2000}, the integral-based definition of $\text{CVaR}_{\alpha}\left(W_{T}\right)$ given in \eqref{eq: CVaR_alpha} often becomes cumbersome when embedded in optimization problems, particularly those involving complex or non-smooth probability distributions of terminal wealth.
To address this, \cite{RT2000} shows that $\text{CVaR}_{\alpha}\left(W_{T}\right)$ can be reformulated as a more computationally tractable optimization problem. The key idea in \cite{RT2000} is to introduce a candidate threshold $W_{a}$ that effectively partitions the probability distribution of $W_T$ into its lower tail and the remainder. Optimizing over this threshold leads to an equivalent formulation for $\text{CVaR}_{\alpha}\left(W_{T}\right)$:
\begin{eqnarray}
    \text{CVaR}_{\alpha}\left(W_{T}\right) & = & \sup_{W_{a}} \mathbb{E} \left[ W_{a} + \tfrac{1}{\alpha}\min\left( W_{T} - W_{a}, 0 \right) \right].
    \label{eq: CVaR_optimization}
\end{eqnarray}
{\purple{We note that the theoretical range of $W_{a}$ coincides with the set of all feasible values for $W_T$, which is $[0, \infty)$ under the no-leverage, no-shorting constrain.}}
We denote by $W_{a}^*$ the optimal threshold, i.e.\
\EQ
\label{eq:W_star_a}
    W_{a}^* =  \argmax_{W_{a}}\,
        \mathbb{E} \left[ W_{a} + \tfrac{1}{\alpha}\min\left( W_{T} - W_{\alpha}, 0 \right) \right].
\EN
Because $W_T$ is treated here as a gain-oriented variable, the worst outcomes lie in the left tail; hence, the formulation \eqref{eq: CVaR_optimization} employs $\sup(\cdot, \cdot)$ and $\min(\cdot,0)$ (rather than $\max(\cdot,0)$, which is more common in loss-oriented setups).
Notably, the optimal threshold $W_{a}^*$ that attains the optimum in \eqref{eq: CVaR_optimization} coincides with $\text{VaR}_{\alpha}(W_{T})$ \cite{RT2000}.

From a computational standpoint, the threshold-based formulation \eqref{eq: CVaR_optimization}
is often more tractable than the integral-based definition \eqref{eq: CVaR_alpha} of $\text{CVaR}_{\alpha}\left(W_{T}\right)$. Therefore, we adopt it for the remainder of this paper.
\subsection{Buffered Probability of Exceedance}
For a given disaster level $D$, we define the bPoE
of the terminal wealth $W_T$ at level $D$ by \cite{bpoe2018}
\begin{eqnarray}
    \text{bPoE}_{D}\left(W_{T}\right) & = & \mathbb{P}\left\{  \left. W_{T}\le w  \ \right| \ \mathbb{E}\left[ \left. W_{T} \ \right| \ W_{T}\le w \right]=D \right\}.\label{eq: bPoE}
\end{eqnarray}
That is, $\text{bPoE}_{D}(W_T)$ measures the probability of the left tail of the distribution of the random variable $W_T$, where the conditional mean of $W_T$ within this tail equals the disaster level~$D$. Hence, for a given disaster level~$D$, $\text{bPoE}_{D}(W_T)$ quantifies how large, i.e.\ how probable, this left tail is relative to $D$.

The formulation \eqref{eq: bPoE} is reminiscent of $\text{CVaR}_{\alpha}(W_T)$, which measures the average level of $W_T$ in the worst $\alpha$-fraction of all possible wealth outcomes. However, while $\text{CVaR}_{\alpha}(W_T)$ pre-specifies this probability~$\alpha$ and identifies the corresponding $\alpha$-quantile of the distribution along with its average, $\text{bPoE}_{D}(W_T)$ instead fixes the quantile mean, namely the disaster level $D$, and determines the probability of the portion of the distribution satisfying this condition. Hence, they can be regarded as inverse risk measures \cite{norton2017soft} (see Remark~\ref{rm:duality_cvar_bpoe}).

In addition, as noted earlier,  both risk measures are defined in terms of terminal wealth $W_T$ rather than losses. Thus, a larger $\text{CVaR}_{\alpha}(W_T)$ indicates a more favorable outcome. However, for $\text{bPoE}_{D}(W_T)$, a smaller value is preferred, as it signifies a lower probability of terminal wealth falling below the disaster level $D$.

Direct computation of \eqref{eq: bPoE} requires evaluating the conditional expectation of the tail of a distribution, which can be cumbersome in practice. However, analogous to CVaR's optimization-based formula, $\text{bPoE}_{D}\left(W_{T}\right)$ also admits an alternative infimum formulation (see \cite{bpoe2018, norton2019}):
\begin{eqnarray}
    \text{bPoE}_{D}\left(W_{T}\right)  = \underset{W_{o}\,>\,D}{\inf}  \ \mathbb{E} \left[ \max\left( 1 - \tfrac{W_{T}-D}{W_{o}-D} , 0 \right) \right], \label{eq: bpoe_optimization}
\end{eqnarray}
noting that $W_{o}$ ranges over $(D, \infty)$, covering all feasible values of $W_T$ above $D$.

Computationally, the bPoE formulation in \eqref{eq: bpoe_optimization} is simpler than evaluating the conditional expectation in \eqref{eq: bPoE}. Like the CVaR expression in \eqref{eq: CVaR_optimization}, it treats $W_o$ as a decision variable, enabling a systematic search for a threshold that yields a tail mean of $D$. This structure makes \eqref{eq: bpoe_optimization} \mbox{well-suited for numerical implementation.}

\begin{remark}[Monotonicity and optimal threshold in bPoE]
\label{rm:bpoE_Wstar}
{\purple{In practice, we often assume that the disaster level satisfies $D < \mathbb{E}[W_T]$,
ensuring that $D$ represents an adverse or undesirable outcome \cite{bpoe2018}.}} Let $W_{o}^*$ be the threshold minimizing the bPoE objective:
\EQ
\label{eq:W_star_o}
    W_{o}^*=  \argmin_{W_{o}\,>\, D}\,
        \mathbb{E} \left[ \max\left( 1 - \tfrac{W_{T}-D}{W_{o}-D} , 0 \right) \right].
\EN
It is shown in \cite{bpoe2018} that $W_o^*$ is indeed the VaR of the terminal wealth $W_T$
at a confidence level that corresponds to the disaster level $D$.
Below, in addition to this identity, we also show that the bPoE objective function is strictly decreasing on $(D,\,W_o^*)$ and strictly increasing on $(W_o^*,\infty)$.

Let $f(W_o)$ denote the bPoE objective function, which can be expressed as the integral
below for $W_o>D$
\EQ
\label{eq:bpoe_f}
f\left(W_o\right)  =\mathbb{E}\left[\max \left(1-\tfrac{W_T-D}{W_o-D}, 0\right)\right]  =\int_{-\infty}^{W_o}\left(1-\tfrac{w-D}{W_o-D}\right) \psi(w)\,  d w,
\EN
where $\psi(w)$ is the pdf of $W_T$. The derivative of $f\left(W_o\right)$ for $W_o>D$ is
$\tfrac{\partial f}{\partial W_o}=\tfrac{1}{\left(W_o-D\right)^2} \int_{-\infty}^{W_o}(w-D) \psi(w) \, dw$. Setting this derivative to zero gives $\int_{-\infty}^{W_o}(w-D)\, \psi(w) \, dw = 0$,
or equivalently,
\EQ
\label{eq:DCVar}
\frac{\int_{-\infty}^{W_o} w \, \psi(w)\,  d w}{\int_{-\infty}^{W_o} \psi(w) \, d w}=D,
\EN
which corresponds the defining relation between $\operatorname{CVaR}_\alpha(W_T)$
and $\operatorname{VaR}_\alpha(W_T)$ at a confidence level $\alpha$, where
$\operatorname{CVaR}_\alpha(W_T) = D$ (see \eqref{eq: CVaR_alpha} and \eqref{eq: VaR_integral}). Thus, the  point $W_o = \operatorname{VaR}_\alpha\left(W_T\right)$
solves the first-order optimality condition \eqref{eq:DCVar}.
Furthermore,  examining the sign of $\tfrac{\partial f}{\partial W_o}$ reveals
\[
\begin{aligned}
 W_o<\operatorname{VaR}_\alpha\left(W_T\right) \Rightarrow \tfrac{\partial f}{\partial W_o}<0
 \quad
 \text { and }
 \quad
 W_o>\operatorname{VaR}_\alpha\left(W_T\right) \Rightarrow \tfrac{\partial f}{\partial W_o}>0.
\end{aligned}
\]
Hence, $f(W_o)$ is  strictly decreasing  in $(D, \mathrm{VaR}_\alpha(W_T))$ and
strictly increasing for $(\mathrm{VaR}_\alpha(W_T),\infty)$,
giving it a ``V‐shaped'' profile centered at $\mathrm{VaR}_\alpha(W_T)$.
As a result, the point $W_o = W_o^* = \mathrm{VaR}_\alpha(W_T)$ is then
the unique minimizer of $f(W_o)$ in $(D,\infty)$, and $\text{bPoE}_D(W_T)$ is thus attained at this  point.
\end{remark}

\begin{remark}[Duality between CVaR and bPoE]
\label{rm:duality_cvar_bpoe}
Recall the bPoE objective function $f(W_o)$ in \eqref{eq:bpoe_f}.
By evaluating $f$ at $W_o = \operatorname{VaR}_\alpha(W_T)$, and using \eqref{eq:DCVar}, we obtain
\EQ
\label{eq:F_Var}
\inf_{W_o > D} f(W_o) = f\left(\operatorname{VaR}_\alpha(W_T)\right) = \alpha.
\EN
This equation reveals a  duality relationship  between the definitions of CVaR \eqref{eq: CVaR_optimization} and bPoE \eqref{eq: bpoe_optimization}.
Concretely, if we fix $\alpha$ and set $D = \operatorname{CVaR}{\alpha}(W_T)$ in the bPoE formulation \eqref{eq: bpoe_optimization}, then by \eqref{eq:F_Var}, choosing $W_o = \operatorname{VaR}{\alpha}(W_T)$ yields a bPoE value of exactly $\alpha$, and this threshold is optimal. Conversely, suppose we are given a disaster level $D$, and the corresponding bPoE value obtained from \eqref{eq: bpoe_optimization} is $\alpha$.
Then, by \eqref{eq:F_Var}, the optimal threshold in \eqref{eq: bpoe_optimization} is
$W_o^* = \operatorname{VaR}\alpha(W_T)$, and moreover,
the disaster level $D$ must be $\operatorname{CVaR}\alpha(W_T)$ due to \eqref{eq:DCVar}. Thus, $\operatorname{CVaR}_\alpha$ and $\operatorname{bPoE}_D$ are inverse risk measures: specifying one uniquely determines \mbox{the other}.
\end{remark}

\section{Pareto optimal points}
\label{sc:pareto}
Recall that $\{X_t\}_{t \in [0,T]}$, where $X_t = (S_t, B_t)$, represents the multi-dimensional controlled underlying process, and that $x = (s,b)$
denote a generic state of the system. For $t_m \in \mathcal{T}$,
we write  $X_{m}^- = X(t_m^-)$ and $X_{m}^+ = X(t_m^+)$.

We begin by examining the notion of Pareto optimality in the pre-commitment setting.
For subsequent use, we denote by $\mathbb{E}_{\mathcal{U}_0}^{X_{0}^+,t_0^+}\left[W_{T}\right]$ the mean of $W_{T}$ under the real-world measure, conditioned on the state $X_{0}^+ = (S_{0}^+, B_{0}^+)$ at time $t_0^+$ (after the cash injection $q_0$),
while using the control $\mathcal{U}_{0}$ over $[t_{0}, T]$.

Following \cite{DangForsythLi2016}, we introduce the concepts of the achievable objective sets, Pareto optimal points, and scalarization optimal sets.
We first address the bPoE risk measure.
\subsection{Mean--bPoE optimal sets}
To evaluate trade-offs under the Mean--bPoE framework, we first define the achievable Mean--bPoE \mbox{objective set}.
\begin{defn}[Achievable Mean--bPoE objective set]
Consider a disaster level $D > 0$, and let $(x_0,0) \equiv (X_{0}^-,t_{0}^-)$ be the initial state, where $x_0=(s_0,b_0)$. For an admissible control $\mathcal{U}_0 \in \mathcal{A}$,
we define the conditional bPoE at disaster level $D$
and the expected value of the terminal wealth $W_T$ as
\EQ
\label{eq:EbPoE}
\begin{aligned}
\text{bPoE}_{\mathcal{U}_0}^{x_0,t_0} &=
\inf_{W_{o} > D} \left\{\Ebb_{\mathcal{U}_0}^{X_{0}^+,t_{0}^+}
 \left[ \max\left( 1 - \tfrac{W_{T}-D}{W_{o}-D} , 0 \right) ~\bigg|~ X_{0}^- = x_0 \right]\right\},
 \\
\text{E}_{\mathcal{U}_0}^{x_0,t_0} &= \Ebb_{\mathcal{U}_0}^{X_{0}^+,t_{0}^+}
    \left[W_{T} ~\big|~ X_{0}^- = x_0 \right].
\end{aligned}
\EN
We define the achievable Mean--bPoE objective set as
\begin{equation} \label{eq: Y_o}
\mathcal{Y}_{o}(D)= \left\{\left( \text{bPoE}_{\mathcal{U}_0}^{x_0,t_0}\left[W_{T}\right],\,
\text{E}_{\mathcal{U}_0}^{x_0,t_0} \right): \mathcal{U}_0
\in \mathcal{A}\right\},
\end{equation}
and let $\overline{\mathcal{Y}}_{o}(D)$ be its closure in $\mathbb{R}^2$.
\end{defn}
\begin{remark}[Boundedness of $\mathcal{Y}_o$]
\label{rm:box}
Under no-leverage/no-short constraints, with $W_0 = s_0 + b_0 \ge 0$ and  $q_m \ge 0$ $(\forall m)$, it follows that $W_T$ is almost surely non-negative.
Hence, for any $\mathcal{U}_0 \in \mathcal{A}$,
we have $\text{E}_{\mathcal{U}_0}^{x_0,t_0} \ge 0$.

Next, we establish a finite upper bound for $\text{E}_{\mathcal{U}_0}^{x_0,t_0}$.
Assuming $\mu > r$, which is typical in real market data \cite{PMVS2019, PMVS2021, PMVS2021c, DangForsyth2014}, fully investing in the risky asset yields the highest expected value of terminal wealth. Let $\widehat{\mathcal{U}}_0(\cdot)$ denote the strategy that fully invests in the risky asset, and define $\mathcal{E}_{\max} = \text{E}_{\widehat{\mathcal{U}}_0(\cdot)}^{x_0,t_0}$. We have
\EQ
\label{eq:E_max}
\mathcal{E}_{\max} = \text{E}_{\widehat{\mathcal{U}}_0(\cdot)}^{x_0,t_0} =  W_0 e^{\mu T} +
  \sum_{m=0}^{M-1} \;q_m \, e^{\,\mu\,(T - m\,\Delta t)} < \infty.
\EN
We can  then have the bound $\text{E}_{\mathcal{U}_0}^{x_0,t_0}
\le
 \text{E}_{\widehat{\mathcal{U}}_0(\cdot)}^{x_0,t_0}  =
 \mathcal{E}_{\max}$.
Also, the bPoE risk measure satisfies
$\text{bPoE}_{\mathcal{U}_0}^{x_0,t_0} \in [0,1]$ by definition.
Therefore, the achievable Mean--bPoE objective set satisfies
\[
\mathcal{Y}_o(D) \subseteq [0,1] \times [0,\mathcal{E}_{\max}],
\quad \mathcal{E}_{\max} \text{ given by \eqref{eq:E_max}.}
\]
\end{remark}
\begin{defn}[Mean--bPoE Pareto optimal points]
\label{defn:P_o}
A point $\left(\mathcal{B}_*,\,  \mathcal{E}_*\right) \in \overline{\mathcal{Y}}_{o}(D)$
is a Pareto (optimal) point if there exists no admissible strategy
$\mathcal{U}_0 \in \mathcal{A}$ such that
\EQ
\label{eq:E_bPoE}
  \text{E}_{\mathcal{U}_0}^{x_0,t_0}  \geq \mathcal{E}_*,  \quad \text { and } \quad
  \text{bPoE}_{\mathcal{U}_0}^{x_0,t_0} \leq \mathcal{B}_*~,
\EN
and at least one of the inequalities in \eqref{eq:E_bPoE} is strict.
We denote the set of all Pareto optimal points for the Mean--bPoE problem by
$\mathcal{P}_o(D)$, so $\mathcal{P}_o(D) \subseteq \overline{\mathcal{Y}}_{o}(D)$.
\end{defn}
Intuitively, the set of all Pareto optimal points, $\mathcal{P}_o$, characterizes the efficient trade-offs between
the mean and bPoE of terminal wealth, ensuring that no further improvement
is possible without sacrificing one objective for the other.
In this sense, these points are efficient, as any point outside $\mathcal{P}_o$
is dominated by an alternative achievable Mean--bPoE outcome that provides at least as
much expected terminal wealth while also achieving a smaller probability
of wealth falling below the disaster level $D$, or vice versa.

Although the above definitions are intuitive, determining the points in $\mathcal{P}_o$
requires solving a challenging multi-objective optimization problem involving two conflicting
criteria. A standard scalarization method can be used to transform this into a single-objective
optimization problem.
Specifically, for a given scalarization parameter $\gamma > 0$,
we denote by  $\mathcal{S}_{o}(D, \gamma)$ the set of Mean--bPoE scalarization optimal points
corresponding to $\gamma$ for a given disaster level $D$. This set is defined by
\EQ
\label{eq:S_o_gamma}
\mathcal{S}_{o}(D, \gamma) = \big\{
\left(\mathcal{B}_*, \, \mathcal{E}_*\right)\in \overline{\mathcal{Y}}_{o}(D) ~\big|~
\gamma \mathcal{B}_* -  \mathcal{E}_* = \inf_{\left(\mathcal{B}, \,\mathcal{E}\right) \in
\mathcal{Y}_{o}(D)} \left(\gamma \mathcal{B} -  \mathcal{E}\right)
\big\}.
\EN
We then define the Mean--bPoE scalarization optimal set, denote by $\mathcal{S}_{o}(D)$,
as follows
\EQ
\label{eq:S_o}
\mathcal{S}_{o}(D) = \bigcup_{\gamma > 0} \mathcal{S}_{o}(D, \gamma).
\EN
Mathematically, the scalarization parameter $\gamma$ is simply a device for converting
the multi-objective Mean--bPoE problem into a single-objective optimization.
However, from a financial perspective, $\gamma$ naturally reflects
the investor's preference (or aversion) to bPoE risk.
As $\gamma \searrow 0$, the investor effectively ignores bPoE and prioritizes maximizing expected terminal wealth.
Conversely, as $\gamma \to \infty$,  bPoE is heavily penalized, and the investor's
preference shifts to minimizing the probability of wealth falling below the
disaster level $D$.

It is well known that the set of all Mean--bPoE Pareto optimal points
$\mathcal{P}_o(D)$, given in Definition~\ref{defn:P_o}, and the set of
Mean--bPoE scalarization optimal points $\mathcal{S}_{o}(D)$, defined in \eqref{eq:S_o},
satisfy the general relation
$\mathcal{S}_{o}(D) \subseteq \mathcal{P}_o(D)$.
However, the converse does not necessarily hold
if the achievable Mean--bPoE objective set $\mathcal{Y}_{o}(D)$ is not convex \cite{miettinen1999nonlinear, floudas2008encyclopedia}.
Following \cite{DangForsythLi2016}, we restrict our attention to determining $\mathcal{S}_{o}(D)$, hereafter referred to as the Mean-bPoE efficient frontier.

Next, we establish that the Mean--bPoE scalarization optimal set $\mathcal{S}_{o}(D, \gamma)$
is non-empty.
\begin{lemma}[Non-emptiness of $\mathcal{S}_{o}(D, \gamma)$]
\label{lem:non_emptiness}
Consider a disaster level $D>0$. For any $\gamma > 0$, the Mean--bPoE scalarization optimal set
$\mathcal{S}_{o}(D, \gamma)$ is non-empty, i.e.\
$\exists ( \mathcal{B}', \mathcal{E}') \in \overline{\mathcal{Y}}_{o}(D)$
such that
$\ds \gamma\,\mathcal{B}' - \mathcal{E}' =
\inf_{(\mathcal{B}, \mathcal{E}) \in
\mathcal{Y}_{o}(D)} \left(\gamma\,\mathcal{B} - \mathcal{E}\right)$.
\end{lemma}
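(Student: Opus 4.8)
The plan is to combine the boundedness of the achievable set $\mathcal{Y}_{o}(D)$ established in Remark~\ref{rm:box} with a standard compactness argument. Since the scalarization objective $(\mathcal{B},\mathcal{E}) \mapsto \gamma\mathcal{B} - \mathcal{E}$ is linear, and therefore continuous, and the feasible region is bounded, the relevant infimum is finite and is attained on the compact closure $\overline{\mathcal{Y}}_{o}(D)$; this attainment is exactly the assertion of the lemma. Concretely, Remark~\ref{rm:box} gives $\mathcal{Y}_{o}(D) \subseteq [0,1] \times [0,\mathcal{E}_{\max}]$ with $\mathcal{E}_{\max} < \infty$ from \eqref{eq:E_max}, so that $\gamma\mathcal{B} - \mathcal{E} \ge -\mathcal{E}_{\max}$ for every feasible $(\mathcal{B},\mathcal{E})$. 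Hence the quantity
\[
L \;:=\; \inf_{(\mathcal{B},\mathcal{E}) \in \mathcal{Y}_{o}(D)} \left(\gamma\mathcal{B} - \mathcal{E}\right)
\]
is a finite real number.

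Next I would invoke compactness. The inclusion $\mathcal{Y}_{o}(D) \subseteq [0,1]\times[0,\mathcal{E}_{\max}]$ is preserved under closure, so $\overline{\mathcal{Y}}_{o}(D)$ is a closed and bounded subset of $\mathbb{R}^2$, hence compact by Heine--Borel. Writing $\phi(\mathcal{B},\mathcal{E}) = \gamma\mathcal{B} - \mathcal{E}$, continuity of $\phi$ ensures that its infimum over $\mathcal{Y}_{o}(D)$ coincides with its infimum over $\overline{\mathcal{Y}}_{o}(D)$, so that $\inf_{\overline{\mathcal{Y}}_{o}(D)} \phi = L$. Since a continuous function on a nonempty compact set attains its infimum, there exists $(\mathcal{B}',\mathcal{E}') \in \overline{\mathcal{Y}}_{o}(D)$ with $\gamma\mathcal{B}' - \mathcal{E}' = L$, and this point lies in $\mathcal{S}_{o}(D,\gamma)$ by the definition \eqref{eq:S_o_gamma}. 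An equivalent route is to take a minimizing sequence $(\mathcal{B}_n,\mathcal{E}_n) \in \mathcal{Y}_{o}(D)$ with $\gamma\mathcal{B}_n - \mathcal{E}_n \to L$, extract a convergent subsequence via Bolzano--Weierstrass, and use continuity of $\phi$ to identify its limit as a minimizer in $\overline{\mathcal{Y}}_{o}(D)$.

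There is no serious obstacle in this argument, which is a textbook application of the Heine--Borel and extreme value theorems; the only point deserving explicit care is the mismatch in the definition \eqref{eq:S_o_gamma}, where the infimum is taken over $\mathcal{Y}_{o}(D)$ while the minimizer is required to lie in the closure $\overline{\mathcal{Y}}_{o}(D)$. This is precisely why passing to the closure is essential: the infimum $L$ need not be attained within $\mathcal{Y}_{o}(D)$ itself, but continuity of the linear functional guarantees it is attained on $\overline{\mathcal{Y}}_{o}(D)$, which is exactly what the lemma claims.
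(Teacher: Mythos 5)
Your proposal is correct and follows essentially the same route as the paper's proof: boundedness of $\mathcal{Y}_{o}(D)$ from Remark~\ref{rm:box}, compactness of $\overline{\mathcal{Y}}_{o}(D)$, and the extreme value theorem applied to the continuous map $(\mathcal{B},\mathcal{E})\mapsto \gamma\,\mathcal{B}-\mathcal{E}$. Your explicit remark that $\inf_{\mathcal{Y}_{o}(D)}\phi = \inf_{\overline{\mathcal{Y}}_{o}(D)}\phi$ by continuity is a small point the paper leaves implicit, and it is a welcome clarification rather than a deviation.
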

\begin{proof}
From Remark~\ref{rm:box}, we know
$\overline{\mathcal{Y}}_{o}(D) \subseteq [0,1]\times[0,\mathcal{E}_{\max}]$
is a compact set.
The map $(\mathcal{B},\mathcal{E}) \mapsto \gamma\,\mathcal{B} - \mathcal{E}$
is continuous on $\overline{\mathcal{Y}}_{o}(D)$, so by the extreme value theorem,  it attains its infimum there.
Any point $(\mathcal{B}',\mathcal{E}')\in \overline{\mathcal{Y}}_{o}(D)$
that realizes this infimum belongs to $\mathcal{S}_{o}(D, \gamma)$.
Hence, $\mathcal{S}_{o}(D, \gamma)$ is non-empty for all $\gamma> 0$.
\end{proof}

\subsection{Mean--CVaR optimal sets}
We now outline the Mean--CVaR framework, mirroring the structure of the Mean--bPoE formulation.
Here, the investor seeks to balance the trade-off between the mean and CVaR of terminal wealth.
To start with, we formally define the achievable Mean--CVaR objective set.

\begin{defn}[Achievable Mean--CVaR objective set]
\label{defn:Y_a}
Consider a confidence level $\alpha \in (0,1)$, and let $(x_0,0) \equiv (X_{0}^-,t_{0}^-)$
be the initial state, where $x_0=(s_0,b_0)$. For an admissible control
$\mathcal{U}_0$ in $\mathcal{A}$, we define
\EQ
\label{eq:ECVaR}
\begin{aligned}
  \text{CVaR}_{\mathcal{U}_0}^{x_0,t_0} &=
  \sup_{W_{a} \,\in\,\mathbb{R}}
  \Ebb_{\mathcal{U}_0}^{X_{0}^+,t_{0}^+}
  \!\Bigl[
      W_{a} ~+~
      \tfrac{1}{\alpha}\,\min\!\bigl(W_{T} - W_{a},\,0\bigr)
    ~\big|\,
    X_{0}^- = x_0
  \Bigr],
   \\
  \text{E}_{\mathcal{U}_0}^{x_0,t_0}&=
  \Ebb_{\mathcal{U}_0}^{X_{0}^+,t_{0}^+}\!\bigl[ W_{T}\,\big|\,
  X_{0}^- = x_0 \bigr].
\end{aligned}
\EN
The achievable Mean--CVaR objective set is then defined as
\begin{equation}
\label{eq:Y_a}
\mathcal{Y}_{a}(\alpha) =
\Bigl\{\,\bigl(\text{CVaR}_{\mathcal{U}_0}^{x_0,t_0},~
\text{E}_{\mathcal{U}_0}^{x_0,t_0}\bigr)
~:~
\mathcal{U}_0\in \mathcal{A}\Bigr\},
\end{equation}
and let $\overline{\mathcal{Y}}_{a}(\alpha)$ be its closure in $\mathbb{R}^2$.
\end{defn}
\begin{remark}[Boundedness of $\mathcal{Y}_a(\alpha)$]
\label{rm:box_cvar}
We now establish a finite upper bound for
$\text{CVaR}_{\mathcal{U}_0}^{x_0,t_0}$
under no-leverage/no-short constraints.
Let $W_a^*\left(\mathcal{U}_0\right)$ be the optimal threshold
in the equivalent CVaR definition \eqref{eq: CVaR_optimization}
and let $g\bigl(w;\mathcal{U}_0\bigr)$ denote the pdf of $W_T$
under an admissible control $\mathcal{U}_0 \in \mathcal{A}$.
Recalling that $W_T \ge 0$ almost surely,
we have
\begin{eqnarray}
\text{CVaR}_{\mathcal{U}_0}^{x_0,t_0}
= \tfrac{1}{\alpha} \int_0^{W_a^*(\cdot)} w\,  g\bigl(w;\mathcal{U}_0\bigr) dw
\le \tfrac{1}{\alpha} \int_0^{\infty} w\, g\bigl(w;\mathcal{U}_0\bigr) dw
= \tfrac{\text{E}_{\mathcal{U}_0}^{x_0,t_0}}{\alpha}
\overset{\text{(i)}}{\le}
\tfrac{\mathcal{E}_{\max}}{\alpha}.
\end{eqnarray}
Here, (i) follows from Remark~\ref{rm:box}, and
 $\mathcal{E}_{\max}$ is a finite constant given in \eqref{eq:E_max}.
Therefore, the achievable Mean--CVaR objective set satisfies
\[
\mathcal{Y}_a(\alpha) \subseteq [0, \mathcal{E}_{\max}/\alpha] \times [0,\mathcal{E}_{\max}],
\quad \mathcal{E}_{\max} \text{ given by \eqref{eq:E_max}}.
\]
\end{remark}
\begin{defn}[Mean--CVaR Pareto optimal points]
\label{defn:P_a}
A point $(\mathcal{C}_*, \mathcal{E}_*) \in \overline{\mathcal{Y}}_{a}(\alpha)$
is a \emph{Pareto optimal point} if there is no admissible strategy
$\mathcal{U}_0\in \mathcal{A}$ with
\[
  \text{CVaR}_{\mathcal{U}_0}^{x_0,t_0} \,\ge\, \mathcal{C}_*
  \quad\text{and}\quad
  \text{E}_{\mathcal{U}_0}^{x_0,t_0} \,\ge\, \mathcal{E}_*,
\]
and at least one of these inequalities is strict.
We denote the set of all Pareto optimal points by
$\mathcal{P}_{a}(\alpha)\subseteq \overline{\mathcal{Y}}_{a}(\alpha)$.
\end{defn}
We also employ a scalarization approach to transform the Mean--CVaR Pareto optimization problem, which inherently requires a multi-objective framework, into a single-objective problem using a scalarization parameter $\gamma > 0$.
Formally, we define the Mean--CVaR scalarization optimal set
for a given $\gamma>0$ as follows.
\begin{equation}
\label{eq:S_a_gamma}
\mathcal{S}_{a}(\alpha, \gamma)
~=~
\Bigl\{
(\mathcal{C}_*,\mathcal{E}_*)\in \overline{\mathcal{Y}}_{a}(\alpha)
~\big|\,
\gamma\, \mathcal{C}_* + \mathcal{E}_*
~=~
\sup_{(\mathcal{C},\mathcal{E})\in \mathcal{Y}_a(\alpha)}
~\bigl(\gamma\,\mathcal{C} + \mathcal{E}\bigr)
\Bigr\}.
\end{equation}
The Mean--CVaR scalarization optimal set is then
\begin{equation}
\label{eq:S_a}
\mathcal{S}_{a}(\alpha)
~=~
\bigcup_{\gamma>0}\,\mathcal{S}_{a}(\alpha,\gamma).
\end{equation}
Just as in the Mean--bPoE framework, we have
$\mathcal{S}_{a}(\alpha) \subseteq \mathcal{P}_{a}(\alpha)$, with equality
if $\mathcal{Y}_a(\alpha)$ is convex \cite{miettinen1999nonlinear, floudas2008encyclopedia}.
We focus on determining $\mathcal{S}_{a}(\alpha, \gamma)$, and hence $\mathcal{S}_a(\alpha)$,
hereinafter referred to as the Mean-CVaR efficient frontier.
\begin{lemma}[Non-emptiness of $\mathcal{S}_{a}(\alpha, \gamma)$]
\label{lem:non_emptiness_cvar}
Consider a confidence level $\alpha \in (0, 1)$.
For any $\gamma > 0$, the Mean--CVaR scalarization optimal set
$\mathcal{S}_{a}(\alpha, \gamma)$ is non-empty.
\end{lemma}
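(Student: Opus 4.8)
The plan is to reproduce, essentially verbatim, the compactness argument used for Lemma~\ref{lem:non_emptiness} in the Mean--bPoE setting, since the scalarization functional here is again linear and the relevant achievable set is bounded. First I would invoke Remark~\ref{rm:box_cvar}, which yields the inclusion $\overline{\mathcal{Y}}_{a}(\alpha) \subseteq [0,\, \mathcal{E}_{\max}/\alpha] \times [0,\, \mathcal{E}_{\max}]$. As a closed and bounded subset of $\mathbb{R}^2$, this set is compact by the Heine--Borel theorem.

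Next, for each fixed $\gamma > 0$, the scalarization map $(\mathcal{C}, \mathcal{E}) \mapsto \gamma\,\mathcal{C} + \mathcal{E}$ is linear and hence continuous on $\mathbb{R}^2$. By the extreme value theorem, a continuous function on a compact set attains its supremum, so there exists a point $(\mathcal{C}_*, \mathcal{E}_*) \in \overline{\mathcal{Y}}_{a}(\alpha)$ with $\gamma\,\mathcal{C}_* + \mathcal{E}_* = \sup_{(\mathcal{C},\mathcal{E}) \in \overline{\mathcal{Y}}_{a}(\alpha)} (\gamma\,\mathcal{C} + \mathcal{E})$. Note the sign flip relative to the bPoE case: here the objective is oriented so that one maximizes (a larger CVaR and a larger mean are both desirable), whereas in Lemma~\ref{lem:non_emptiness} one minimizes; this changes nothing in the compactness argument beyond replacing ``infimum'' by ``supremum.''

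There is no substantive obstacle here; the only point requiring a moment of care is that the supremum in the definition \eqref{eq:S_a_gamma} is taken over $\mathcal{Y}_{a}(\alpha)$ rather than over its closure. I would dispatch this by noting that continuity forces $\sup_{(\mathcal{C},\mathcal{E}) \in \mathcal{Y}_{a}(\alpha)} (\gamma\,\mathcal{C} + \mathcal{E}) = \sup_{(\mathcal{C},\mathcal{E}) \in \overline{\mathcal{Y}}_{a}(\alpha)} (\gamma\,\mathcal{C} + \mathcal{E})$, since the supremum of a continuous functional over any set equals its supremum over the closure of that set (approximate any closure point by a convergent sequence from the set and pass to the limit). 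Consequently the maximizer $(\mathcal{C}_*, \mathcal{E}_*)$ realizes the supremum appearing in \eqref{eq:S_a_gamma}, placing it in $\mathcal{S}_{a}(\alpha, \gamma)$ and establishing non-emptiness for every $\gamma > 0$.
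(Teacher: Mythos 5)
Your proof is correct and follows essentially the same route as the paper's: compactness of $\overline{\mathcal{Y}}_{a}(\alpha)$ from Remark~\ref{rm:box_cvar} plus continuity of the linear scalarization map and the extreme value theorem. If anything, you are slightly more careful than the paper—you explicitly justify that the supremum over $\mathcal{Y}_{a}(\alpha)$ equals the supremum over its closure, and you work with the correct maximization orientation $(\mathcal{C},\mathcal{E}) \mapsto \gamma\,\mathcal{C} + \mathcal{E}$, whereas the paper's proof sketch carries over the sign convention $\gamma\,\mathcal{C} - \mathcal{E}$ from the bPoE case, apparently a typo.
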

\begin{proof}
This follows immediately from the compactness of
$\overline{\mathcal{Y}}_{a}(\alpha)$ (see Remark~\ref{rm:box_cvar})
and the continuity of the map
$(\mathcal{C},\mathcal{E}) \mapsto \gamma\,\mathcal{C} - \mathcal{E}$.
\end{proof}

\section{Pre-commitment Mean--bPoE and Mean--CVaR}
\label{sc:precommitment}
We now consider the problem of determining points in the scalarization optimal sets
$\mathcal{S}_{o}(D, \gamma)$ (for Mean--bPoE, defined in \eqref{eq:S_o_gamma}) and $\mathcal{S}_{a}(\alpha, \gamma)$ (for Mean--CVaR, defined in \eqref{eq:S_a_gamma})
in a form that can be solved using stochastic optimal control techniques.
We begin with $\mathcal{S}_{o}(D, \gamma)$.

\subsection{Pre-commitment Mean--bPoE}
Using the definitions in \eqref{eq:EbPoE}, we recast \eqref{eq:S_o_gamma}
as a control problem involving system dynamics, {\sifin{cash injection}}, and rebalancing constraints.
Specifically, for a given disaster level $D$ and scalarization parameter \mbox{$\gamma > 0$},
the pre-commitment Mean--bPoE problem $PCMo_{t_0}\bigl(D, \gamma\bigr)$
is defined in terms of the value function $V^p_o(s_0,b_0,t_{0}^-)$.
Its formulation is as follows:
\begin{equation}
\label{eq: PCMb}
PCMo_{t_{0}}\left(D, \gamma\right):  V^p_o\bigl(s_0,b_0,t_{0}^-\bigr) \coloneqq \inf_{\mathcal{U}_0 \in \mathcal{A}} \bigg\{ \inf_{W_{o}^p~>~D}  \Ebb_{\mathcal{U}_0}^{X_{0}^+,t_{0}^+}\bigg[ \gamma \max\bigg( 1 - \tfrac{W_{T}-D}{W_{o}^p-D} , 0 \bigg) \!-\! W_{T} \bigg| X_0^- = (s_0, b_0)  \bigg]\bigg\}
\end{equation}
{\sifin{
\begin{equation}
\text{subject to:}~
\begin{cases}
~~\text{(dynamics)} \quad~~~~\left(S_{t}, B_{t}\right) \text{ evolves via }
\eqref{eq: dBt}-\eqref{eq: dSt}
& t \notin \mathcal{T}, \\
\left.\begin{array}{ll}
\text{(cash injection)} &W_k^+ = W_k^- + q_k, \\
\text{(rebalancing)} &X_k^+ = (S_k^+, B_k^+),  \text{ where }
\\
&S_k^+ = u_{k} W_k^+,\quad B_k^+ = \left(1-u_{k}\right)W_k^+, \\
&u_{k}\in \mathcal{Z} = [0, 1],
\quad k = 0,\dots, M-1,
\end{array}\right\} & t_k \in \mathcal{T}.
    \end{cases}\label{eq: PCMb constraints}
\end{equation}}}
We denote by $\mathcal{U}_{0, o}^{p*} = \left\{u_{0, o}^{p\ast}, u_{1, o}^{p\ast}, \ldots, u_{M-1, o}^{p\ast}\right\} $ the optimal control of problem $PCMo_{t_{0}}\left(D, \gamma\right)$.

Following~\cite{PA2020a, strub2019, miller2017optimal}, we interchange the infima in~\eqref{eq: PCMb}, yielding a more computationally tractable form
\EQ
V^p_o\left(s_0,b_0,t_{0}^-\right) =
\inf_{W_{o}^p~>~D}\bigg\{ \inf_{\mathcal{U}_0 \in \mathcal{A}} \Ebb_{\mathcal{U}_0}^{X_{0}^+,t_{0}^+}\left[ \gamma \max\left( 1 - \frac{W_{T}-D}{W_{o}^p-D} , 0 \right) - W_{T}  \bigg| X_0^- = (s_0, b_0) \right] \bigg\}.\label{eq: PCMb_new}
\EN
In Lemma~\ref{lem:infF} below, we establish the existence of a finite optimal threshold $W_o^{p*}$ in $(D,\infty)$, which is essential for both the theoretical analysis and the development of numerical methods. We begin with a continuity result for the inner optimization in the pre-commitment Mean–bPoE problem
\begin{proposition}[Continuity of the inner optimization function]
\label{prop:F_continuous}
Fix $\gamma > 0$, and suppose the initial state $X_0^- = (s_0, b_0)$ and the disaster level $D$ are given. Define $F(W_{o}^p) \equiv F(W_{o}^p; (s_0, b_0))$ as the inner infimum of the value function $V^p_o(s_0,b_0,t_0^-)$
in \eqref{eq: PCMb_new}, i.e.\
\EQ
\label{eq: FWo}
F(W_{o}^p)
= \inf_{\mathcal{U}_0\in \mathcal{A}} \Ebb_{\mathcal{U}_0}^{X_{0}^+,t_{0}^+}\bigg[ \gamma \max\left( 1 - \frac{W_{T}-D}{W_{o}^p-D} , 0 \right) - W_{T}  \bigg| X_0^- = (s_0, b_0) \bigg],~
W_{o}^p > D, \text{ subject to \eqref{eq: PCMb constraints}}.
\EN
Then, the function $F(W_{o}^p)$ is continuous in $W_{o}^p$ for all $W_{o}^p > D$.
\end{proposition}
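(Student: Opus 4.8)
The plan is to exploit the structure of $F$ as a pointwise infimum, over the control set $\mathcal{A}$, of a family of functions of $W_o^p$ that all share a common Lipschitz modulus on each compact subinterval of $(D,\infty)$. Writing $h_{\mathcal{U}_0}(W_o^p) = \Ebb_{\mathcal{U}_0}^{X_0^+,t_0^+}[\,\gamma\max(1 - \tfrac{W_T-D}{W_o^p-D},0) - W_T \mid X_0^- = (s_0,b_0)\,]$ for each admissible $\mathcal{U}_0$, the defining relation \eqref{eq: FWo} reads $F(W_o^p) = \inf_{\mathcal{U}_0 \in \mathcal{A}} h_{\mathcal{U}_0}(W_o^p)$. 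I would first record the elementary fact that the pointwise infimum of a family of $L$-Lipschitz functions is again $L$-Lipschitz wherever it is finite; continuity of $F$ then reduces to producing a Lipschitz constant for the $h_{\mathcal{U}_0}$ that is uniform in $\mathcal{U}_0$.

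The core of the argument is a uniform Lipschitz estimate on the integrand. Fix a compact interval $[a,b] \subset (D,\infty)$ and set $\phi(w,y) = \max(1 - \tfrac{w-D}{y-D},0)$ for $w \ge 0$ and $y \in [a,b]$. Since the $-W_T$ summand does not depend on $W_o^p$, it cancels in every difference $h_{\mathcal{U}_0}(y_1) - h_{\mathcal{U}_0}(y_2)$, so only $\phi$ needs to be controlled and the integrability of $W_T$ plays no role in the difference. For fixed $w$ the map $y \mapsto \phi(w,y)$ is continuous across the kink at $y = w$ and piecewise $C^1$, with $\partial_y\phi = (w-D)/(y-D)^2$ on the active set $\{w < y\}$ and $\partial_y\phi = 0$ otherwise. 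On the active set one has $0 \le w < y \le b$, whence $|w-D| \le \max(D,\,b-D)$, while $(y-D)^2 \ge (a-D)^2 > 0$; hence $|\partial_y\phi(w,y)| \le \max(D,\,b-D)/(a-D)^2 =: L_\phi$ for all $w \ge 0$ and $y \in [a,b]$. This gives $|\gamma\phi(w,y_1) - \gamma\phi(w,y_2)| \le \gamma L_\phi|y_1-y_2|$ pointwise in $w$, and integrating against the law of $W_T$ under any $\mathcal{U}_0$ yields $|h_{\mathcal{U}_0}(y_1) - h_{\mathcal{U}_0}(y_2)| \le L|y_1-y_2|$ with $L := \gamma L_\phi$ independent of the control. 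Finiteness of each $h_{\mathcal{U}_0}$, and hence of $F$, follows from Remark~\ref{rm:box}: since $0 \le \phi \le 1$ and $0 \le \Ebb[W_T] \le \mathcal{E}_{\max}$, one has $-\mathcal{E}_{\max} \le h_{\mathcal{U}_0}(y) \le \gamma$.

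With the uniform bound established, I would close the argument in the standard way: given $y_1,y_2 \in [a,b]$ and $\epsilon > 0$, pick $\mathcal{U}_0$ with $h_{\mathcal{U}_0}(y_1) \le F(y_1) + \epsilon$, so that $F(y_2) \le h_{\mathcal{U}_0}(y_2) \le h_{\mathcal{U}_0}(y_1) + L|y_1-y_2| \le F(y_1) + \epsilon + L|y_1-y_2|$; letting $\epsilon \to 0$ and interchanging $y_1$ and $y_2$ gives $|F(y_1)-F(y_2)| \le L|y_1-y_2|$. Thus $F$ is Lipschitz, in particular continuous, on every compact $[a,b] \subset (D,\infty)$, and since such intervals exhaust $(D,\infty)$, $F$ is continuous throughout $(D,\infty)$.

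I expect the main obstacle to be controlling the two sources of non-uniformity at once: the constant $L_\phi$ blows up as $y \downarrow D$, which I handle by working only on compacts bounded away from $D$ (legitimate because continuity is a local property), and the estimate must hold uniformly over the infinite family $\mathcal{A}$, which is secured by the observation that $L_\phi$ depends only on $D$ and the interval $[a,b]$ --- not on the distribution of $W_T$ --- together with the exact cancellation of the $-W_T$ term. The remaining steps are routine.
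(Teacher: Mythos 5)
Your proof is correct, but it takes a genuinely different route from the paper's. The paper fixes a control $\mathcal{U}_0$, proves continuity of $W_o^p \mapsto f(W_o^p;\mathcal{U}_0)$ by a dominated convergence argument (constructing an integrable dominating function $C + (C'+1)W_T$ on a neighbourhood of the limit point), and then passes continuity through the infimum by invoking compactness of $\mathcal{A}$ together with parametric-optimization results in the style of Bonnans--Shapiro. You instead prove a Lipschitz estimate for the integrand $\phi(w,y)=\max\bigl(1-\tfrac{w-D}{y-D},0\bigr)$ that is \emph{uniform} in $w \ge 0$ on each compact $[a,b]\subset(D,\infty)$, observe that the $-W_T$ term cancels exactly in differences so the bound transfers to $h_{\mathcal{U}_0}$ with a constant $L=\gamma L_\phi$ independent of the control, and conclude via the elementary fact that an infimum of uniformly $L$-Lipschitz, uniformly bounded-below functions is $L$-Lipschitz. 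Your route buys two things: it is entirely self-contained (no appeal to compactness of $\mathcal{A}$ or to external perturbation-analysis theorems --- a step the paper cites rather than verifies, and which is genuinely needed there, since a pointwise infimum of continuous functions is in general only upper semicontinuous); and it yields a strictly stronger conclusion, namely local Lipschitz continuity of $F$ on $(D,\infty)$, with an explicit modulus $\gamma\max(D,\,b-D)/(a-D)^2$ that degenerates only as $a \downarrow D$, exactly where the paper's proof also loses control. The paper's dominated-convergence route, by contrast, would extend more readily to integrands lacking a uniform Lipschitz structure, but for this particular piecewise-linear payoff your argument is the sharper and more economical one.
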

\noindent A  proof of Proposition~\ref{prop:F_continuous} is given in Appendix~\ref{app:F_continuous}.
\begin{lemma}[Existence of a finite optimal threshold $W_o^{p*}(s_0, b_0)$]
\label{lem:infF}
Let $ F(W_{o}^p; (s_0, b_0))$ be defined on $(D,\infty)$ as in \eqref{eq: FWo}.
Then, $\ds \inf_{W_o^p > D} F(W_{o}^p; (s_0, b_0))$ is finite and is attained at
some $W_o^{p*}(s_0, b_0)\in(D,\infty)$.
\end{lemma}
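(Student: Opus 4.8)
The plan is to combine the continuity of $F$ established in Proposition~\ref{prop:F_continuous} with a coercivity analysis at the two \emph{open} endpoints of $(D,\infty)$, so that a minimizing sequence is trapped in a compact subinterval on which the extreme value theorem applies. Throughout, write $G(\mathcal{U}_0,W_o^p)$ for the conditional expectation appearing inside the infimum over $\mathcal{U}_0$ in \eqref{eq: FWo}, so that $F(W_o^p)=\inf_{\mathcal{U}_0\in\mathcal{A}}G(\mathcal{U}_0,W_o^p)$. First I would rewrite the hinge term in the cleaner form $\max\big(1-\tfrac{W_T-D}{W_o^p-D},0\big)=\tfrac{(W_o^p-W_T)^+}{W_o^p-D}$, valid for $W_o^p>D$. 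Since this quantity is nonnegative and, by Remark~\ref{rm:box}, $\mathbb{E}_{\mathcal{U}_0}[W_T]\le\mathcal{E}_{\max}$ for every admissible $\mathcal{U}_0$, we obtain $F(W_o^p)\ge-\mathcal{E}_{\max}$ for all $W_o^p>D$, while evaluating $F$ at any single $W_o^p$ shows it is finite there. Hence $F^\ast:=\inf_{W_o^p>D}F(W_o^p)$ is a finite real number, and it remains only to show it is attained at an interior point.

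Next I would control the right endpoint. Using $\mathbb{E}[(W_o^p-W_T)^+]\ge W_o^p-\mathbb{E}[W_T]$ and minimising the resulting lower bound over $\mathbb{E}[W_T]\in[0,\mathcal{E}_{\max}]$, together with the crude upper bound $\mathbb{E}[(W_o^p-W_T)^+]\le W_o^p$ (because $W_T\ge0$) evaluated at the fully-risky control $\widehat{\mathcal{U}}_0$, one sandwiches $F(W_o^p)$ between two expressions that both tend to $\gamma-\mathcal{E}_{\max}$, giving $\lim_{W_o^p\to\infty}F(W_o^p)=\gamma-\mathcal{E}_{\max}$. To ensure the infimum does not escape to $+\infty$, I would exhibit an interior point strictly below this limit: assuming the natural non-degeneracy $D<\mathcal{E}_{\max}$, Remark~\ref{rm:bpoE_Wstar} shows that $\text{bPoE}_{D}$ of the fully-risky terminal wealth equals some $\alpha<1$, attained at the finite threshold $\bar W=\text{VaR}_{\alpha}(W_T;\widehat{\mathcal{U}}_0)\in(D,\infty)$, whence $F(\bar W)\le\gamma\,\text{bPoE}_{D}-\mathcal{E}_{\max}<\gamma-\mathcal{E}_{\max}$. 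The limit then yields an $R<\infty$ with $F(W_o^p)>F(\bar W)\ge F^\ast$ for all $W_o^p>R$, so the tail $(R,\infty)$ may be discarded.

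The main obstacle is the open left endpoint $W_o^p\to D^+$, where $F$ need \emph{not} blow up: a control keeping $W_T\ge D$ almost surely (for instance an all-risk-free strategy whose deterministic payoff exceeds $D$) makes $\tfrac{(W_o^p-W_T)^+}{W_o^p-D}$ vanish for $W_o^p$ near $D$, so $F$ can stay bounded there. To rule out a minimizer collapsing onto $D$, I would take a minimizing sequence $W_o^{p,(n)}\downarrow D$ with near-optimal controls $\mathcal{U}_0^{(n)}$ and observe, from the lower bound $G(\mathcal{U}_0^{(n)},W_o^{p,(n)})\ge \gamma\,\mathbb{E}[(D-W_T)^+]/(W_o^{p,(n)}-D)-\mathcal{E}_{\max}$ and the finiteness of $F^\ast$, that $\mathbb{E}_{\mathcal{U}_0^{(n)}}[(D-W_T)^+]\to0$. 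The key step is then to replace each $W_o^{p,(n)}$ by the finite VaR-threshold of $\mathcal{U}_0^{(n)}$ supplied by Remark~\ref{rm:bpoE_Wstar}, at which $G(\mathcal{U}_0^{(n)},\cdot)$ is minimized; since $F$ evaluated there still converges to $F^\ast$, and since these thresholds lie in $(D,R]$, the crux is to prove they remain bounded away from $D$, thereby producing a minimizing sequence inside a compact subinterval $[D+\delta,R]$.

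Finally, on the compact interval $[D+\delta,R]\subset(D,\infty)$ the continuous function $F$ attains its infimum by the extreme value theorem, and by the endpoint analysis this infimum coincides with $F^\ast$; the minimizer is the desired $W_o^{p\ast}(s_0,b_0)\in(D,\infty)$. I expect the delicate point to be precisely the left-endpoint argument — showing that the optimal VaR-thresholds of the near-optimal controls cannot themselves degenerate to $D$ — and I would emphasise that the whole argument relies on the non-degeneracy hypothesis $D<\mathcal{E}_{\max}$: when $D\ge\mathcal{E}_{\max}$ every $\text{bPoE}_{D}$ equals $1$ and the infimum is genuinely attained only in the limit $W_o^p\to\infty$, so some such assumption is unavoidable.
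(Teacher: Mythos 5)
Your overall architecture --- continuity from Proposition~\ref{prop:F_continuous}, a two-endpoint coercivity analysis, then the extreme value theorem on a compact subinterval --- matches the paper's proof, and your right-endpoint treatment is essentially identical to (indeed slightly more careful than) the paper's: both establish $\lim_{W_o^p\to\infty}F(W_o^p)=\gamma-\mathcal{E}_{\max}$ and both invoke Remark~\ref{rm:bpoE_Wstar} for the fully-risky control to produce an interior point $\bar W$ with $F(\bar W)<\gamma-\mathcal{E}_{\max}$, which discards the right tail. The divergence is at the left endpoint, and there your proposal has a genuine gap: everything hinges on showing that the bPoE-optimal (VaR) thresholds of the near-optimal controls along a hypothetical minimizing sequence $W_o^{p,(n)}\downarrow D$ stay bounded away from $D$, and you explicitly leave this unproven (``the crux is to prove they remain bounded away from $D$''). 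Without that step, nothing prevents the infimum from being approached only as $W_o^p\downarrow D$, the compact interval $[D+\delta,R]$ never materializes, and the extreme value theorem cannot be applied. Nor is the missing step routine: a priori the near-optimal wealth distributions could have left-tail mass $\mathbb{E}[(D-W_T)^+]\to 0$ while their VaR thresholds still collapse onto $D$, and ruling this out seems to need either a compactness/continuity argument in the control variable (delicate, since VaR is not a continuous functional of the law of $W_T$) or an explicit standing assumption relating $D$ to the attainable wealth floor.

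That said, your diagnosis of \emph{why} the left endpoint is problematic is correct, and it is in fact a fair criticism of the paper's own proof. The paper disposes of this endpoint in one line, asserting that $F(W_o^p)\to+\infty$ as $W_o^p\to D^+$ because the factor $1/(W_o^p-D)$ blows up. But that blow-up requires $\inf_{\mathcal{U}_0\in\mathcal{A}}\mathbb{E}_{\mathcal{U}_0}\left[(D-W_T)^+\right]>0$: as you observe, if some admissible control keeps $W_T\ge W_o^p$ almost surely for $W_o^p$ near $D$ --- e.g.\ the all-bond strategy whenever its deterministic payoff exceeds $D$ --- then the hinge term vanishes under that control and $F$ remains bounded (by minus that payoff) near $D^+$, so the asserted divergence fails. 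The paper's argument is therefore valid only under an implicit hypothesis (roughly, that $D$ strictly exceeds the all-risk-free terminal wealth), which is nowhere stated; your closing caveat that some non-degeneracy condition such as $D<\mathcal{E}_{\max}$ is unavoidable points in the same direction. The net assessment: you correctly located the weak point of the lemma, but your proposal replaces the paper's unproved blow-up assertion with a different unproved claim, so it does not constitute a complete proof.
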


\noindent A  proof of Lemma~\ref{lem:infF} is given in Appendix~\ref{app:infF}.
\subsubsection{An equivalent time-consistent problem}
We now show that the pre-commitment Mean--bPoE portfolio optimization can be
reformulate as a time-consistent target-based portfolio problem.
To this end, for a fixed $\widehat{W_o}> D$, we define a time-consistent portfolio optimization problem induced by pre-commitment Mean--bPoE formulation.
This problem is denoted by $TCEo_{t_{m}}\!\bigl(D,\gamma;\,\widehat{W_o}\bigr)$
and is characterized by the value function
$\widehat{Q}^c_o(s,b,t_{m}; \widehat{W_o})$ as follows.
\EQ
\label{eq: TCMb*}
    TCEo_{t_{m}}(\cdot):
    \begin{cases}
       \ds \widehat{Q}^c_o(s,b,t_{m}; \widehat{W_o}) \!\coloneqq\! \inf_{\mathcal{U}_{m}}
    \bigg\{\!\Ebb_{\mathcal{U}_{m}}^{X_m^+,t_{m}^+}\bigg[
    \max(\widehat{W_o}-W_T,\,0)
  \!-\! (\widehat{W_o}-D)\,\tfrac{W_T}{\gamma}
  \bigg| X_m^- = (s, b)\bigg] \bigg\},
  \\
 \text{subject to system dynamics, cash injection, and rebalancing constraints in  \eqref{eq: PCMb constraints}}.
\end{cases}
\EN
In the $TCEo_{t_m}(\cdot;\,\widehat{W_o})$ problem, the terminal objective
$
\max(\widehat{W_o}-W_T,\,0)
  - (\widehat{W_o}-D)\,\tfrac{W_T}{\gamma}
$
depends only on the terminal wealth $W_T$ and it remains fixed throughout the investment horizon.
The expression is piecewise linear with a kink at the target level $\widehat{W_o}$,
giving it a clear target-based character: the first term, $\max(\widehat{W_o} - W_T,\, 0)$, penalizes terminal wealth shortfalls relative to the target $\widehat{W_o}$,
while the second term is a negative linear function of
$W_T$ that rewards higher wealth in the context of a minimization problem.
Together these two components create a trade-off that is typical of target-based
portfolio control: they strongly discourage catastrophic shortfalls while
preserving an incentive to exceed the target.


Since the objective is fixed at each rebalancing time $t_m$, with no re-optimization of risk parameters, dynamic programming can be applied over the control sequence $\mathcal{U}_m$, yielding a time-consistent policy. Such policies, which carry no incentive to deviate over time, are referred to as \emph{implementable} in the \mbox{literature \cite{PA2020a, strub2019}}.

Although $TCEo_{t_m}(\cdot;\,\widehat{W_o})$ yields a time-consistent control policy in this sense, it differs from fully time-consistent mean–risk formulations, which explicitly impose time-consistency constraints across the control sequence (see Section~\ref{sc:time_consistent}). Following the literature, we therefore refer to $TCEo_{t_m}(\cdot;\,\widehat{W_o})$ as a Mean--bPoE induced time-consistent optimization \cite{strub2019}.

We now establish the equivalence between the pre-commitment problem $PCMo_{t_{0}}(D, \gamma)$ and the induced time-consistent formulation $TCEo_{t_{m}}\bigl(D,\gamma;\widehat{W_o}\bigr)$, for an appropriate choice of $\widehat{W_o}$.
Specifically, define $W_o^{p*}(s_0, b_0)$ as the optimal threshold obtained from solving the pre-commitment problem at time $t_0$:
\EQ
\label{eq:W_op_W0}
W_o^{p*}(s_0, b_0) \in \argmin_{\,W_o^p > D} \left\{
\inf_{\mathcal{U}_0\in \mathcal{A}} \Ebb_{\mathcal{U}_0}^{X_{0}^+,t_{0}^+}\bigg[ \gamma \max\left( 1 - \frac{W_{T}-D}{W_{o}^p-D} , 0 \right) - W_{T}  \bigg| X_0^- = (s_0, b_0) \bigg]
\right\}.
\EN
It follows from Lemma~\ref{lem:infF} that such a $W_o^{p*}(s_0, b_0)$ exists and is finite.
The proposition below establishes that when $\widehat{W_o} = W_o^{p*}(s_0, b_0)$,
 $PCMo_{t_0}(D, \gamma)$  is equivalent to $TCEo_{t_m}(D, \gamma;\, W_o^{p*}(s_0, b_0))$.

\begin{proposition}[Equivalent time-consistent problem]
\label{prop:equiv}
Let $\mathcal{U}_{0, o}^{p*}$ be the optimal control of the pre-commitment
problem $PCMo_{t_{0}}\left(D, \gamma\right)$, defined in \eqref{eq: PCMb}-\eqref{eq: PCMb constraints}, obtained by solving the value function $V^p_o(s_0,b_0,t_{0}^-)$ with initial state $X_0^- = (s_0,b_0)$. Then $\mathcal{U}_{0, o}^{p*}$ is also the optimal time-consistent  control for the induced problem $TCEo_{t_{m}}\left(D,\gamma; W_o^{p*}(s_0, b_0)\right)$ defined in \eqref{eq: TCMb*}, where $W_o^{p*}(s_0, b_0)$ is given by \eqref{eq:W_op_W0}.
\end{proposition}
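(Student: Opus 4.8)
The plan is to show that, for a fixed threshold, the inner precommitment objective and the fixed-target TCE objective are identical up to a strictly positive multiplicative constant, so that they share the same minimizing controls; then to specialise to the optimal threshold supplied by Lemma~\ref{lem:infF}; and finally to lift the resulting $t_0$-optimality to every $t_m$ using the dynamic-programming structure of the TCE problem.

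First I would rewrite the shortfall term. Since $W_o^p > D$, one has
\[
\gamma\,\max\!\left(1-\tfrac{W_T-D}{W_o^p-D},\,0\right)
= \tfrac{\gamma}{W_o^p-D}\,\max\!\left(W_o^p-W_T,\,0\right),
\]
so the fixed-threshold integrand inside $F(W_o^p)$ equals $\tfrac{\gamma}{W_o^p-D}\max(W_o^p-W_T,0)-W_T$. Factoring out the constant $\tfrac{\gamma}{W_o^p-D}>0$ gives
\[
\tfrac{\gamma}{W_o^p-D}\Bigl[\max(W_o^p-W_T,0)-(W_o^p-D)\tfrac{W_T}{\gamma}\Bigr],
\]
which is exactly $\tfrac{\gamma}{W_o^p-D}$ times the terminal objective of $TCEo_{t_m}(D,\gamma;W_o^p)$ evaluated at $\widehat{W_o}=W_o^p$. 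Because this factor depends on neither the control $\mathcal{U}_0$ nor the random variable $W_T$, the two objectives are minimised by the same admissible control; taking the infimum over $\mathcal{U}_0$ yields $F(W_o^p)=\tfrac{\gamma}{W_o^p-D}\,\widehat{Q}^c_o(s_0,b_0,t_0^-;W_o^p)$, and, crucially, the two argmin sets over $\mathcal{U}_0$ coincide.

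Next I would specialise to the optimiser. By Lemma~\ref{lem:infF}, the infimum defining $V^p_o$ is attained at a finite $W_o^{p*}(0,W_0)\in(D,\infty)$, and after the (already justified) interchange of infima the precommitment optimal control $\mathcal{U}_{0,o}^{p*}$ is precisely the minimiser of the fixed-threshold inner problem at $W_o^{p*}(0,W_0)$. Applying the equality of minimisers from the first step with $W_o^p=W_o^{p*}(0,W_0)$, the same control $\mathcal{U}_{0,o}^{p*}$ solves $TCEo_{t_0}(D,\gamma;W_o^{p*}(0,W_0))$ at the initial state. Finally, since the TCE terminal objective depends only on $W_T$ and the target $\widehat{W_o}=W_o^{p*}(0,W_0)$ is held fixed over the entire horizon, the problem $TCEo_{t_m}(\cdot)$ obeys the Bellman principle; hence the feedback policy optimal from $t_0$ has, for every $t_m$, a tail that is optimal for the subproblem started at the reached state, so $\mathcal{U}_{0,o}^{p*}$ is the optimal time-consistent control for $TCEo_{t_m}(D,\gamma;W_o^{p*}(0,W_0))$.

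The hard part will be this last step: arguing rigorously that optimality of $\mathcal{U}_{0,o}^{p*}$ for the $t_0$ problem propagates to optimality at each $t_m$. This is exactly where the separable, control-independent structure of the fixed-target TCE objective is indispensable, as it makes dynamic programming applicable and removes any incentive to deviate at later times; by contrast, the original precommitment objective, in which $W_o^p$ re-enters nonlinearly, lacks this property. The algebraic identity of the first step is routine, but it is the decisive move that transfers the precommitment minimiser to the fixed-target problem without altering it.
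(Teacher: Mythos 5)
Your proposal is correct and follows essentially the same route as the paper's proof: you show the fixed-threshold precommitment integrand equals the positive constant $\tfrac{\gamma}{W_o^{p*}-D}$ times the $TCEo$ terminal objective (the paper states this proportionality more tersely via ``$\gamma>0$ and $\widehat{W_o}-D>0$''), so the minimizing controls coincide, and then invoke the dynamic-programming structure of the fixed-target problem to conclude time-consistency. Your write-up is in fact slightly more explicit than the paper's, which is a virtue rather than a deviation.
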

For a proof of Proposition~\ref{prop:equiv}, see Appendix~\ref{app:equiv}.
\begin{remark}[Pre-commitment Mean--bPoE implementability]
\label{rm:tceq_equivalence}
Among the family of induced time-consistent problems $TCEo_{t_m}(D, \gamma;, \widehat{W_o})$ for varying $\widehat{W_o} > D$, the case $\widehat{W_o} = W_o^{p*}(s_0,b_0)$ yields equivalence with the pre-commitment formulation $PCMo_{t_0}(D,\gamma)$. We refer to this particular case, $TCEo_{t_m}(D, \gamma;, W_o^{p*}(s_0,b_0))$, as the Mean--bPoE induced time-consistent \emph{equivalent} optimization. Consequently, the optimal control obtained from solving $PCMo_{t_0}(D,\gamma)$ is also a time-consistent strategy for the associated target-based single-objective problem, and is therefore implementable across the entire investment  horizon.

\end{remark}

\subsection{Pre-commitment Mean--CVaR}
Similarly to the pre-commitment Mean--bPoE case, using the definitions in \eqref{eq:ECVaR}, we now express the scalarization formulation \eqref{eq:S_a_gamma} as a control problem involving both system dynamics and rebalancing constraints.
For a given confidence level $\alpha$ and scalarization parameter $\gamma > 0$,
the pre-commitment Mean--CVaR problem $PCMa_{t_{0}}\left(\alpha, \gamma\right)$
is given in terms of the value function $V^p_a(s_0,b_0,t_{0}^-)$:
\begin{equation}
\label{eq: PCMC}
    PCMa_{t_{0}}\left(\alpha, \gamma\right): V_a^p(s_0,b_0, t_{0}^-) \!\coloneqq \! \sup_{\mathcal{U}_0 \in \mathcal{A}}\bigg\{ \sup_{W_a^p} \Ebb_{\mathcal{U}_0}^{X_{0}^+,t_{0}^+}
    \big[\gamma \big(W_ a^p+\tfrac{1}{\alpha}\min\big( W_{T} - W_a^p, 0 \big)\big) + W_{T}
    \big| X_0^- \!=\! (s_0,b_0)\big]  \bigg\}\\
\end{equation}
{\sifin{\begin{equation}
\label{eq: PCMC constraints}
\text{subject to the system dynamics, cash injection, and rebalancing constraints in \eqref{eq: PCMb constraints}.}
\end{equation}
}}\noindent We denote by $\mathcal{U}_{0, a}^{p*} = \left\{u_{0, a}^{p\ast}, u_{1, a}^{p\ast}, \ldots, u_{M-1, a}^{p\ast}\right\} $ the optimal control of problem $PCMa_{t_{0}}\left(\alpha, \gamma\right)$. As in the Mean--bPoE case, we interchange the order of optimization in \eqref{eq: PCMC}.
The value function $V_a^p(s_0,b_0,t_0^-)$ can be equivalently written as
\EQ
V^p_a\left(s_0,b_0,t_{0}^-\right) =
\sup_{W_a^p}\bigg\{ \sup_{\mathcal{U}_0 \in \mathcal{A}} \Ebb_{\mathcal{U}_0}^{X_{0}^+,t_{0}^+}
    \left[\gamma \left(W_a^p+\frac{1}{\alpha}\min\left( W_{T} - W_a^p, 0 \right)\right) + W_{T} \bigg| X_0^- = (s_0, b_0) \right] \bigg\}\label{eq: PCMa_new}
\EN
Although prior work, such as~\cite{PA2020a}, defines the threshold using the upper semicontinuous envelope of the value function, no formal existence proof is given, and uniqueness is implicitly assumed. For completeness, we establish existence in the lemma below, noting that uniqueness is not guaranteed.
\begin{lemma}[Existence of a finite optimal threshold for pre-commitment Mean--CVaR]
\label{lem:CVaR_threshold_existence}
Fix $\gamma > 0$, and let the initial state $X_0^- = (s_0, b_0)$ and confidence level $\alpha \in (0,1)$ be given.
Define $F_a(W_a^p)$ as the inner supremum of the pre-commitment Mean--CVaR value function
$V_a^p(s_0,b_0,t_0^-)$ in \eqref{eq: PCMb_new}, i.e.\
\[
  F_a(W_a^p) \;=\;
  \sup_{\mathcal{U}_0\in \mathcal{A}}
  \,\Ebb_{\mathcal{U}_0}^{\,X_0^+,\,t_0^+}\!\Bigl[
    \gamma \Bigl( W_a^p + \tfrac{1}{\alpha}\,\min(W_T - W_a^p,\,0)\Bigr)
    \;+\; W_T
    \;\Big|\; X_0^-=(s_0,b_0)
  \Bigr],
  \quad
  W_a^p \ge 0.
\]
Then, $\sup_{\,W_a^p \,\ge 0} \,F_a(W_a^p)$ is finite and is attained by
some $W_a^{p*}(s_0,b_0)\in[0,\infty)$.
\end{lemma}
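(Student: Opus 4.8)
The plan is to follow the same two-step template used for the Mean--bPoE threshold in Proposition~\ref{prop:F_continuous} and Lemma~\ref{lem:infF}: first establish that $F_a$ is finite and continuous on $[0,\infty)$, and then show that $F_a(W_a^p)\to-\infty$ as $W_a^p\to\infty$, so that the supremum may be confined to a compact interval on which a continuous function attains its maximum. Finiteness is immediate from the bounds already collected. For any admissible $\mathcal{U}_0$ and any $W_a^p\ge 0$, the inner expectation splits as $\gamma\,\Ebb_{\mathcal{U}_0}^{X_0^+,t_0^+}\!\big[W_a^p+\tfrac1\alpha\min(W_T-W_a^p,0)\big]+\Ebb_{\mathcal{U}_0}^{X_0^+,t_0^+}[W_T]$. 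The first term is at most $\text{CVaR}_{\mathcal{U}_0}^{x_0,t_0}$ (a supremum over all thresholds) and hence at most $\mathcal{E}_{\max}/\alpha$ by Remark~\ref{rm:box_cvar}, while the second is at most $\mathcal{E}_{\max}$ by Remark~\ref{rm:box}. Both bounds are uniform in $\mathcal{U}_0$ and $W_a^p$, so $F_a(W_a^p)\le \gamma\,\mathcal{E}_{\max}/\alpha+\mathcal{E}_{\max}<\infty$ and the supremum over $W_a^p\ge 0$ is finite.

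For continuity I would exploit that the threshold enters linearly, which makes $F_a$ globally Lipschitz---an easier situation than the bPoE case, where the factor $1/(W_o^p-D)$ only permits local control near the left endpoint. Since $x\mapsto\min(x,0)$ is $1$-Lipschitz, the integrand $\Phi(W_a^p;\omega)=\gamma\big(W_a^p+\tfrac1\alpha\min(W_T-W_a^p,0)\big)+W_T$ satisfies $|\Phi(W_a^p;\omega)-\Phi(\widetilde W_a^p;\omega)|\le \gamma(1+\tfrac1\alpha)\,|W_a^p-\widetilde W_a^p|$ pointwise in $\omega$, with a deterministic constant independent of $\mathcal{U}_0$. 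Taking expectations and then applying $|\sup_a f-\sup_a g|\le\sup_a|f-g|$ over $\mathcal{U}_0\in\mathcal{A}$ yields $|F_a(W_a^p)-F_a(\widetilde W_a^p)|\le \gamma(1+\tfrac1\alpha)\,|W_a^p-\widetilde W_a^p|$, so $F_a$ is Lipschitz, hence continuous, on $[0,\infty)$.

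Coercivity is the crux, and the key sign observation is $1-\tfrac1\alpha<0$ for $\alpha\in(0,1)$. Using $(W_a^p-W_T)^+\ge W_a^p-W_T$ together with $\Ebb_{\mathcal{U}_0}[W_T]\le\mathcal{E}_{\max}$, the inner CVaR expectation obeys $\Ebb_{\mathcal{U}_0}\big[W_a^p+\tfrac1\alpha\min(W_T-W_a^p,0)\big]=W_a^p-\tfrac1\alpha\,\Ebb_{\mathcal{U}_0}[(W_a^p-W_T)^+]\le(1-\tfrac1\alpha)W_a^p+\mathcal{E}_{\max}/\alpha$, uniformly in $\mathcal{U}_0$. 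Hence $F_a(W_a^p)\le \gamma(1-\tfrac1\alpha)W_a^p+\gamma\,\mathcal{E}_{\max}/\alpha+\mathcal{E}_{\max}\to-\infty$ as $W_a^p\to\infty$. Since $W_T\ge 0$ a.s.\ gives $\min(W_T,0)=0$, we have $F_a(0)=\sup_{\mathcal{U}_0}\Ebb_{\mathcal{U}_0}[W_T]=\mathcal{E}_{\max}$, which is finite. Thus there is a finite $K>0$ with $F_a(W_a^p)<F_a(0)$ for all $W_a^p>K$, so $\sup_{W_a^p\ge 0}F_a=\sup_{W_a^p\in[0,K]}F_a$; the continuous $F_a$ attains this maximum on the compact interval $[0,K]$ by the extreme value theorem (exactly as invoked in Lemma~\ref{lem:non_emptiness}), producing a maximizer $W_a^{p*}(s_0,b_0)\in[0,K]\subseteq[0,\infty)$.

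I expect the only genuine subtlety to be keeping every estimate uniform over the admissible set $\mathcal{A}$, so that passing to the supremum over $\mathcal{U}_0$ preserves both the Lipschitz bound and the divergence of the upper envelope to $-\infty$; this is precisely what the uniform constant $\mathcal{E}_{\max}$ from Remark~\ref{rm:box} supplies. Unlike the bPoE threshold, no delicate behaviour near the left endpoint of the domain arises here, since the domain $[0,\infty)$ is closed at $0$ and the integrand is globally---not merely locally---Lipschitz in $W_a^p$, which is why this lemma admits a shorter argument than its Mean--bPoE counterpart.
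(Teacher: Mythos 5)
Your proposal is correct, and it reaches the paper's conclusion by a noticeably different (and in places tighter) route. The paper's own proof shares your skeleton---continuity of $F_a$ on $[0,\infty)$ plus control of the behaviour at both ends of the domain---but implements each step differently: for continuity it runs a dominated convergence argument on $f_a(\cdot;\mathcal{U}_0)$ for each fixed control (mirroring Proposition~\ref{prop:F_continuous}) and then invokes compactness of $\mathcal{A}$ together with parametric-optimization results to pass to the supremum, whereas you observe that the integrand is globally Lipschitz in $W_a^p$ with constant $\gamma(1+\tfrac{1}{\alpha})$ uniformly in $\omega$ and $\mathcal{U}_0$, so that $\lvert \sup f - \sup g\rvert \le \sup\lvert f-g\rvert$ gives Lipschitz continuity of $F_a$ directly---more elementary, and it sidesteps the compactness-of-$\mathcal{A}$ hypothesis altogether. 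For the behaviour at infinity, the paper partitions the sample space into $\{W_T \le W_a^p\}$ and $\{W_T > W_a^p\}$ and uses Markov's inequality and Fubini's theorem to show the tail contributions vanish, concluding informally that ``no blow-up can occur''; you instead derive the explicit uniform envelope $F_a(W_a^p) \le \gamma(1-\tfrac{1}{\alpha})W_a^p + \gamma\,\mathcal{E}_{\max}/\alpha + \mathcal{E}_{\max}$, which diverges to $-\infty$ because $1-\tfrac{1}{\alpha}<0$. This buys genuine coercivity, and with it your final step---restriction to a compact interval $[0,K]$ on which $F_a(W_a^p)<F_a(0)$ fails, followed by the extreme value theorem---is airtight, whereas the paper's concluding inference (``finite, continuous, no blow-up at either end, therefore attained'') leaves the reduction-to-compactness implicit; a bounded continuous function on $[0,\infty)$ need not attain its supremum, so your explicit linear decay bound is exactly the ingredient that makes the attainment claim rigorous.
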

The argument parallels the Mean–bPoE case (Lemma~\ref{lem:infF}) with full details given in Appendix~\ref{app:CVaR_threshold_existence}.
\begin{remark}[Pre-commitment Mean–CVaR implementability] \label{rm:cvar_time_consistent}
Given the existence of an optimal threshold $W_a^{p*}$ in the pre-commitment Mean--CVaR formulation (Lemma~\ref{lem:CVaR_threshold_existence}), one can construct a corresponding time-consistent optimization problem with a fixed, target-based single objective—mirroring the approach used in the Mean--bPoE setting. As a result, the optimal control from $PCMa_{t_{0}}\left(\alpha, \gamma\right)$ defines a time-consistent and implementable strategy. For further discussion, also see \cite{PA2020a}.
\end{remark}

\subsection{Equivalence between pre-commitment Mean--bPoE and  Mean--CVaR}
\label{ssc:equiv}
In  the analysis, for brevity,  we will occasionally write $\Ebb_{\mathcal{U}_0}[\cdot]$ in place of the full notation $\Ebb_{\mathcal{U}_0}^{X_0^+, t_0^+}[\cdot ~| \cdot]$.
To distinguish between the scalarization parameter values used in the pre-commitment Mean–bPoE and Mean–CVaR formulations, we use $\gamma_o$ and $\gamma_a$, respectively, following the notational convention adopted earlier.

We now formalize the equivalence between the pre-commitment Mean–CVaR and Mean–bPoE scalarization optimal sets by establishing a one-to-one correspondence between
$\mathcal{S}_a(\alpha, \gamma_a)$ and $\mathcal{S}_o(D, \gamma_o)$
under appropriately calibrated parameters.
{\sifin{Moreover, we will show that, for each such pair of corresponding points,
there exists a control/threshold pair, not necessarily unique, that is
optimal in both formulations and attains these points.}}
Each direction of this equivalence is captured in Lemmas~\ref{lem: Equiv_CVaR_to_bPoE} and~\ref{lem: Equiv_bPoE_to_CVaR}.

\subsubsection{Mean--CVaR to Mean--bPoE direction}
Before presenting the result for the Mean--CVaR to Mean--bPoE direction,
we provide a brief heuristic illustrating how the parameter $\gamma_o$,
the disaster level $D$, and the corresponding point in $\mathcal{S}_o(D, \gamma_o)$
\mbox{naturally emerge.}

Suppose we begin with a point
$\bigl(\mathcal{C}_a^*, \,\mathcal{E}_a^*\bigr)$
in the Mean--CVaR scalarization optimal set $\mathcal{S}_{a}(\alpha, \gamma_a)$,
attained by the threshold/control pair
$\bigl(W_a^{p*}, \,\mathcal{U}_{0,a}^{p\ast}\bigr)$.
The associated Mean--CVaR scalarization objective~is
\[
\gamma_a \left(W_a^{p\ast} + \frac{1}{\alpha} \mathbb{E}_{\mathcal{U}_{0,a}^{p\ast}} \bigl[ \min(W_T - W_a^{p\ast}, 0) \bigr] \right) + \mathbb{E}_{\mathcal{U}_{0,a}^{p\ast}}[W_T].
\]
Rewriting this in a bPoE-style form by introducing a disaster level $D$,
we obtain
\[
  -\tfrac{\gamma_a}{\alpha}\,(W_a^{p\ast} - D)\,
  \Ebb_{\mathcal{U}_{0,a}^{p\ast}}\!
  \Bigl[
    \max\Bigl(
      1 \,-\,\frac{W_T - D}{\,W_a^{p\ast} - D\,},
      0
    \Bigr)
  \Bigr]
  \;+\;
  \Ebb_{\mathcal{U}_{0,a}^{p\ast}}[W_T]
  \;+\;
  \gamma_a\,W_a^{p\ast},
\]
which we compare to the Mean--bPoE objective under the same threshold/control pair
$\bigl(W_o^p,\,\mathcal{U}_{0}\bigr) = \bigl(W_a^{p\ast},\,\mathcal{U}_{0,a}^{p\ast}\bigr)$:\footnote{At this stage, we do not claim
$\bigl(W_a^{p*}, \,\mathcal{U}_{0,a}^{p\ast}\bigr)$ is also optimal
for Mean--bPoE problem; we only use it for rewriting the objective.}
\[
  \gamma_o \,\Ebb_{\mathcal{U}_{0,a}^{p\ast}}\!
  \Bigl[
    \max\Bigl(
      1 \,-\,\tfrac{W_T - D}{\,W_a^{p*} - D\,},
      0
    \Bigr)
  \Bigr]
  \;-\;
  \Ebb_{\mathcal{U}_{0,a}^{p\ast}}[W_T].
\]
Matching terms suggests the relation
$ \gamma_o  = \frac{\gamma_a}{\,\alpha\,}\,\bigl(\,W_a^{p\ast} \,-\,D\bigr)$.
Motivated by the CVaR-bPoE duality result in Remark~\ref{rm:duality_cvar_bpoe}, we set
\mbox{$D = \mathcal{C}_a^* = \operatorname{CVaR}_\alpha(W_T)$},
leading to the formula for $\gamma_{o}$:
$\gamma_{o}
    =
    \tfrac{\gamma_{a}(\,W_{a}^{p*} \;-\; \mathcal{C}_a^*\,)}{\alpha}$.
By Remark~\ref{rm:duality_cvar_bpoe}, setting $D = \operatorname{CVaR}_\alpha(W_T)$ implies
$\text{bPoE}_{D}(W_T) = \alpha$.
{\sifin{In the proof of this direction, we further show that the resulting bPoE point can be attained by the same threshold/control pair
$\bigl(W_a^{p*}, \mathcal{U}_{0,a}^{p\ast}\bigr)$ that attains
$\bigl(\mathcal{C}_a^*, \mathcal{E}_a^*\bigr)$ in the Mean--CVaR formulation,
and that it yields the same expected terminal wealth, i.e.\
$\mathcal{E}_o^* = \mathcal{E}_a^*$.
}}
Hence
$(\mathcal{B}_o^*, \mathcal{E}_o^*) = (\alpha, \mathcal{E}_a^*)$ belongs to
$\mathcal{S}_o(D, \gamma_o)$. This point can be viewed as the image of
$(\mathcal{C}_a^*, \mathcal{E}_a^*)$ under the correspondence from
Mean-CVaR to Mean-bPoE scalarization optimal points.
The above provides the heuristic motivation for the parameter transformation used in
Lemma~\ref{lem: Equiv_CVaR_to_bPoE} below.
\begin{lemma}[Equivalence direction: Mean–CVaR to Mean–bPoE]
\label{lem: Equiv_CVaR_to_bPoE}
Consider a confidence level $\alpha \in (0,1)$ and scalarization parameter $\gamma_a>0$.
Let $\bigl(\mathcal{C}_a^*,\,\mathcal{E}_a^*\bigr)$ be any point in $\mathcal{S}_{a}(\alpha, \gamma_a)$.
{\sifin{Suppose $(W_a^{p*}, \mathcal{U}_{0,a}^{p\ast})$ is an optimal
threshold/control pair associated with a solution that attains
$\bigl(\mathcal{C}_a^*, \mathcal{E}_a^*\bigr)$.}}
In particular, we have $W_{a}^{p*}\ge \mathcal{C}_a^*$.

Define
\EQ
\label{eq:gamma_o}
    \gamma_{o}
    \;=\;
    \frac{\gamma_{a}(\,W_{a}^{p*} \;-\; \mathcal{C}_a^*\,)}{\alpha}\,.
\EN
Then the point $\bigl(\mathcal{B}_o^*,\,\mathcal{E}_o^*\bigr)$,
where $\mathcal{B}_o^*=\alpha$ and $\mathcal{E}_o^*=\mathcal{E}_a^*$,
lies in $\mathcal{S}_{o}\!\bigl(\mathcal{C}_a^*, \gamma_o)$
{\sifin{and can be attained by the threshold/control pair $(W_a^{p*}, \,\mathcal{U}_{0,a}^{p\ast})$.}}
Hence, $\bigl(\mathcal{C}_a^*,\mathcal{E}_a^*\bigr)$ and $\bigl(\mathcal{B}_o^*,\mathcal{E}_o^*\bigr)$
represent the same ``efficient'' portfolio outcome,
mapped from $\mathcal{S}_{a}\!\bigl(\alpha, \gamma_a\bigr)$ to $\mathcal{S}_{o}\!\bigl(\mathcal{C}_a^*, \gamma_{o}\bigr)$.
\end{lemma}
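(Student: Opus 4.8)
The plan is to reduce the membership claim to a single joint minimisation and then transfer optimality from the Mean--CVaR problem. Fix $D=\mathcal{C}_a^*$, and for an admissible control $\mathcal{U}$ and a threshold $W>D$ abbreviate $\Lambda(W,\mathcal{U})=\mathbb{E}_{\mathcal{U}}[\max(1-\tfrac{W_T-D}{W-D},0)]$ and $\mathrm{E}(\mathcal{U})=\mathbb{E}_{\mathcal{U}}[W_T]$. The starting point is the pointwise identity $\max(W-W_T,0)=(W-D)\max(1-\tfrac{W_T-D}{W-D},0)$, valid for $W>D$, which after taking $\mathbb{E}_{\mathcal{U}}[\cdot]$ turns the Mean--CVaR scalarisation integrand of \eqref{eq: PCMa_new} into
\[
J_a(W,\mathcal{U})\;=\;\gamma_a W\;-\;\big(c(W)\,\Lambda(W,\mathcal{U})-\mathrm{E}(\mathcal{U})\big),\qquad c(W)=\tfrac{\gamma_a(W-D)}{\alpha}.
\]
The calibration \eqref{eq:gamma_o} is precisely $c(W_a^{p*})=\gamma_o$, so at $W=W_a^{p*}$ the bracketed term equals the Mean--bPoE scalarisation objective $J_o(W_a^{p*},\mathcal{U};\gamma_o)=\gamma_o\Lambda(W_a^{p*},\mathcal{U})-\mathrm{E}(\mathcal{U})$. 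Since the inner infimum over $W_o$ inside $\mathcal{S}_o$ reproduces $\mathrm{bPoE}^{x_0,t_0}_{\mathcal{U}}$, establishing $(\mathcal{B}_o^*,\mathcal{E}_o^*)\in\mathcal{S}_o(D,\gamma_o)$ is equivalent to showing that $(W_a^{p*},\mathcal{U}_{0,a}^{p*})$ is a global minimiser of $J_o(\,\cdot\,,\cdot\,;\gamma_o)$ over $(D,\infty)\times\mathcal{A}$ with optimal value $\gamma_o\alpha-\mathcal{E}_a^*$.

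I would first identify the candidate point and its inner threshold. Because $(W_a^{p*},\mathcal{U}_{0,a}^{p*})$ attains $(\mathcal{C}_a^*,\mathcal{E}_a^*)$, the terminal distribution under $\mathcal{U}_{0,a}^{p*}$ satisfies $\mathrm{CVaR}_\alpha(W_T)=D$ with Rockafellar--Uryasev optimal threshold $W_a^{p*}=\mathrm{VaR}_\alpha(W_T)$. Applying the CVaR--bPoE duality of Remark~\ref{rm:duality_cvar_bpoe} to this fixed distribution with disaster level $D=\mathrm{CVaR}_\alpha(W_T)$ gives $\mathrm{bPoE}_D(W_T)=\alpha$, attained at the bPoE--optimal threshold $\mathrm{VaR}_\alpha(W_T)=W_a^{p*}$; equivalently $\Lambda(W_a^{p*},\mathcal{U}_{0,a}^{p*})=\alpha$ and $W_a^{p*}$ minimises $\Lambda(\,\cdot\,,\mathcal{U}_{0,a}^{p*})$. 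This places $(\alpha,\mathcal{E}_a^*)$ in $\mathcal{Y}_o(D)$, attained by $\mathcal{U}_{0,a}^{p*}$, with $J_o(W_a^{p*},\mathcal{U}_{0,a}^{p*};\gamma_o)=\gamma_o\alpha-\mathcal{E}_a^*$ and $\mathcal{E}_o^*=\mathcal{E}_a^*$ automatically. The strict inequality $W_a^{p*}>D$ needed for $\gamma_o>0$ follows from $\mathrm{VaR}_\alpha\ge\mathrm{CVaR}_\alpha$, with the degenerate equality case handled separately.

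For the optimality I would push the Mean--CVaR optimality of $(W_a^{p*},\mathcal{U}_{0,a}^{p*})$ through the identity. At the fixed threshold $W_a^{p*}$ it gives at once $J_o(W_a^{p*},\mathcal{U};\gamma_o)\ge\gamma_o\alpha-\mathcal{E}_a^*$ for every $\mathcal{U}$, with equality at $\mathcal{U}_{0,a}^{p*}$. Globally, the fact that $W_a^{p*}$ maximises $W\mapsto\gamma_a W-\widetilde F(W)$, where $\widetilde F(W)=\inf_{\mathcal{U}}(c(W)\Lambda(W,\mathcal{U})-\mathrm{E}(\mathcal{U}))$, yields the subgradient bound $\widetilde F(W)\ge(\gamma_o\alpha-\mathcal{E}_a^*)+\gamma_a(W-W_a^{p*})$ for all $W>D$. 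Writing $\gamma_o=c(W)-\tfrac{\gamma_a}{\alpha}(W-W_a^{p*})$ and inserting this bound reduces the desired inequality $J_o(W,\mathcal{U};\gamma_o)\ge\gamma_o\alpha-\mathcal{E}_a^*$ to the residual sign condition
\[
\tfrac{\gamma_a}{\alpha}\,(W-W_a^{p*})\,\big(\alpha-\Lambda(W,\mathcal{U})\big)\;\ge\;0 .
\]

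The main obstacle is exactly this residual. A competing control attains its bPoE at its own threshold $\widehat W=\mathrm{VaR}_\beta(\mathcal{U})$ with $\beta=\mathrm{bPoE}_D(\mathcal{U})$ and $\Lambda(\widehat W,\mathcal{U})=\beta$, and the sign of $\alpha-\beta$ need not agree with that of $\widehat W-W_a^{p*}$, so the bound inherited from Mean--CVaR optimality does not by itself force the target: the re-optimisation of the bPoE threshold across controls is what must be controlled. The plan is to exploit the V--shaped, strictly unimodal profile of $\Lambda(\,\cdot\,,\mathcal{U})$ established in Remark~\ref{rm:bpoE_Wstar}, which locates each control's minimiser at its VaR and pins down how $\Lambda$ compares to $\alpha$ on either side of $W_a^{p*}$, together with the first-order stationarity at $(W_a^{p*},\mathcal{U}_{0,a}^{p*})$ and the quantile representation $\mathrm{CVaR}_\beta(\mathcal{U})=\tfrac1\beta\int_0^\beta\mathrm{VaR}_s(\mathcal{U})\,ds$; concretely the residual reduces to showing $\int_\alpha^\beta(\mathrm{VaR}_s(\mathcal{U})-W_a^{p*})\,ds\le0$, using $\mathrm{CVaR}_\beta(\mathcal{U})=D=\mathrm{CVaR}_\alpha(\mathcal{U}_{0,a}^{p*})$ and monotonicity of the quantile map. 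This comparison of re-optimised thresholds across controls, rather than the preceding algebra, is where the real difficulty lies; once it is secured, $\mathcal{B}_o^*=\alpha$ and $\mathcal{E}_o^*=\mathcal{E}_a^*$ complete the identification of the image point in $\mathcal{S}_o(\mathcal{C}_a^*,\gamma_o)$.
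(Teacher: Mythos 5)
Your proposal is incomplete, and the step you leave open cannot be filled in the way you plan. Up to the residual, your argument is sound and is essentially the paper's own proof (Appendix~\ref{app:Equiv_CVaR_to_bPoE}) made explicit: the candidate direction via Remark~\ref{rm:duality_cvar_bpoe} is identical, and your decomposition $\gamma_o=c(W)-\tfrac{\gamma_a}{\alpha}\bigl(W-W_a^{p*}\bigr)$ plus the subgradient bound is exactly what the paper's step~(i) does when it ``replaces $W_a^{p*}$ by the free variable $W_o^p$'' inside the supremum; pointwise, that step is valid precisely when your residual $(W-W_a^{p*})\bigl(\alpha-\Lambda(W,\mathcal{U})\bigr)\ge 0$ holds. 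The difficulty is that this residual, and the integral inequality $\int_\alpha^\beta\bigl(\mathrm{VaR}_s(\mathcal{U})-W_a^{p*}\bigr)\,ds\le 0$ you propose to prove in its place, are false in general. Concretely, take $\alpha=0.5$, $\gamma_a=1$, reference terminal wealth uniform on $[0.9,1.1]$, so $\mathcal{C}_a^*=D=0.95$, $W_a^{p*}=1$, $\mathcal{E}_a^*=1$, and $\gamma_o=0.1$ by \eqref{eq:gamma_o}; and a competing control whose terminal wealth puts probability $1/2$ near $0$ and $1/2$ near $3$, so $\mathbb{E}[W_T]=1.5$ and $\mathrm{CVaR}_{0.5}=0$. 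The competitor is strictly worse under Mean--CVaR ($0+1.5<0.95+1$), so the reference pair is scalarization optimal; yet its bPoE at $D=0.95$ is $\beta=1.5/2.05\approx 0.732$, attained at $\widehat W=3$. Hence $\beta>\alpha$ while $\widehat W>W_a^{p*}$ (residual strictly negative), $\int_\alpha^\beta(\mathrm{VaR}_s-W_a^{p*})\,ds\approx 0.46>0$, and the competitor's Mean--bPoE scalarized value is $0.1\cdot 0.732-1.5\approx -1.43<-0.95=\gamma_o\alpha-\mathcal{E}_a^*$. So the target inequality itself fails here, not merely your route to it.

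The structural reason no argument of this form can close the gap is the one you half-identify: Mean--CVaR optimality constrains each competitor only through the pair $\bigl(\mathrm{CVaR}_\alpha,\mathbb{E}[W_T]\bigr)$, whereas the bPoE objective re-optimizes the threshold and therefore reads the competitor's CVaR curve at its own level $\beta$ where the tail mean equals $D$ --- a quantity about which the hypothesis says nothing, and which is not tied to $\mathrm{CVaR}_\alpha$ by any inequality with the fixed slope $\gamma_o$. Since neither your proposal nor the paper's proof uses any property of the achievable set beyond scalarization optimality of $(W_a^{p*},\mathcal{U}_{0,a}^{p\ast})$ --- the paper's step~(i) amounts to an unproved assertion of your residual condition and fails on the same example --- completing the proof would require genuinely additional input (e.g.\ convexity or richness of $\mathcal{Y}_a$ and $\mathcal{Y}_o$, or constraints linking competitors' quantile functions to the reference one). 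Your instinct about where the real difficulty lies is exactly right, but the proposal leaves that difficulty unresolved, and the specific inequality you plan to establish is not true under the stated hypotheses.
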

A complete proof of Lemma~\ref{lem: Equiv_CVaR_to_bPoE} is given in Appendix~\ref{app:Equiv_CVaR_to_bPoE}.

\subsubsection{Mean--bPoE  to Mean--CVaR direction}
As a brief heuristic, we observe that a formula for $\gamma_a$ can be derived
by inverting the relationship from the Mean--CVaR to Mean--bPoE
direction in the expression for $\gamma_o$ given in \eqref{eq:gamma_o}.
Specifically, rearranging \eqref{eq:gamma_o} to solve for $\gamma_a$,
and then identifying
$\alpha=\mathcal{B}_{o}^*$, $W_{a}^{p*}=W_{o}^{p*}$,
and $\mathcal{C}_{a}^*=D$, immediately yields
$\gamma_{a}
  \;=\;
  \frac{\gamma_{o}\,\mathcal{B}_{o}^*}{\,W_{o}^{p*}-D\,}$.
Consequently, the point $\bigl(\mathcal{C}_a^*, \mathcal{E}_a^*\bigr)
= \bigl(D, \mathcal{E}_o^*\bigr)$ belongs to
$\mathcal{S}_{a}\!\bigl(\mathcal{B}_o^*, \gamma_{a}\bigr)$ and
can be viewed as the image of $\bigl(\mathcal{B}_o^*, \mathcal{E}_o^*\bigr)$
under the correspondence from Mean--bPoE to Mean--CVaR scalarization optimal points.
We now formally establish this direction in the lemma below.
\begin{lemma}[Equivalence direction: Mean–bPoE to Mean–CVaR]
\label{lem: Equiv_bPoE_to_CVaR}
Consider a disaster level $D>0$ and scalarization parameter $\gamma_o$.
Let $\bigl(\mathcal{B}_o^*,\,\mathcal{E}_o^*\bigr)$ be any point in
$\mathcal{S}_{o}(D, \gamma_o)$.
{\sifin{Suppose $(W_o^{p*}, \,\mathcal{U}_{0,o}^{p\ast})$ is an optimal threshold/control pair associated with a solution that attains $\bigl(\mathcal{B}_o^*,\,\mathcal{E}_o^*\bigr)$.}}
In particular, we have $W_{o}^{p*}> D$.

Define
\EQ
\label{eq:gamma_a}
   \gamma_{a}
   \;=\;
   \tfrac{\gamma_o\, \mathcal{B}_o^*}{W_{o}^{p*}- D}.
\EN
Then the point $\bigl(\mathcal{C}_a^*,\,\mathcal{E}_a^*\bigr)$,
where $\mathcal{C}_a^* = D$ and $\mathcal{E}_a^*=\mathcal{E}_o^*$,
lies in $\mathcal{S}_{a}\!\bigl(\mathcal{B}_o^*,\,\gamma_{a}\bigr)$
{\sifin{and can be attained by the threshold/control pair $(W_o^{p*}, \,\mathcal{U}_{0,o}^{p\ast})$.}}
Hence, $\bigl(\mathcal{B}_o^*,\mathcal{E}_o^*\bigr)$ and $\bigl(\mathcal{C}_a^*,\mathcal{E}_a^*\bigr)$
represent the same ``efficient'' portfolio outcome,
mapped from $\mathcal{S}_{o}(D,\gamma_o)$ to $\mathcal{S}_{a}\bigl(\mathcal{B}_o^*,\gamma_{a}\bigr)$.
\end{lemma}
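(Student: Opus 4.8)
The plan is to prove two assertions: that the point $(\mathcal{C}_a^*,\mathcal{E}_a^*)=(D,\mathcal{E}_o^*)$ is achievable in the Mean--CVaR framework at confidence level $\alpha:=\mathcal{B}_o^*$ by the very pair $(W_o^{p*},\mathcal{U}_{0,o}^{p\ast})$, and that this pair attains the Mean--CVaR scalarization optimum for the calibrated $\gamma_a$ of \eqref{eq:gamma_a}. Throughout I write $\mathcal{G}_o(\mathcal{U}_0,W_o)$ and $\mathcal{G}_a(\mathcal{U}_0,W_a)$ for the bPoE and CVaR scalarized objectives appearing in the interchanged forms \eqref{eq: PCMb_new} and \eqref{eq: PCMa_new}, so that each problem is an outer optimization over the threshold of an inner optimization over the control, and I abbreviate $\mathbb{E}_{\mathcal{U}_0}[\,\cdot\,]$ as elsewhere.

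First I would fix the control at $\mathcal{U}_{0,o}^{p\ast}$ and transfer everything through the CVaR--bPoE duality. Since $(W_o^{p*},\mathcal{U}_{0,o}^{p\ast})$ attains the bPoE optimum, $W_o^{p*}$ is in particular the optimal bPoE threshold for the fixed distribution of $W_T$ under $\mathcal{U}_{0,o}^{p\ast}$; hence by Remark~\ref{rm:bpoE_Wstar} we have $W_o^{p*}=\operatorname{VaR}_{\alpha}(W_T)$ and, via the relation \eqref{eq:DCVar} together with Remark~\ref{rm:duality_cvar_bpoe}, $\operatorname{CVaR}_{\alpha}(W_T)=D$ with $\operatorname{bPoE}_D(W_T)=\alpha$. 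Because the optimal CVaR threshold in \eqref{eq: CVaR_optimization} coincides with $\operatorname{VaR}_{\alpha}$, this simultaneously shows that under $\mathcal{U}_{0,o}^{p\ast}$ the inner supremum over $W_a$ in \eqref{eq: PCMa_new} is attained at $W_a=W_o^{p*}$ and that the realized Mean--CVaR pair is exactly $(D,\mathcal{E}_o^*)$. Thus $(D,\mathcal{E}_o^*)\in\mathcal{Y}_a(\alpha)$ and is attained by the claimed pair, giving the lower bound $V_a^p\ge\gamma_a D+\mathcal{E}_o^*$ for the optimal value $V_a^p$ of \eqref{eq: PCMa_new}.

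The engine for optimality is an exact algebraic identity at the shared threshold $W_o^{p*}$. Writing $\max\!\big(1-\tfrac{W_T-D}{W_o^{p*}-D},0\big)=\tfrac{1}{W_o^{p*}-D}\max(W_o^{p*}-W_T,0)$ and using the calibration \eqref{eq:gamma_a}, which is precisely $\tfrac{\gamma_o}{W_o^{p*}-D}=\tfrac{\gamma_a}{\alpha}$, one checks term by term that
\[
\mathcal{G}_o(\mathcal{U}_0,W_o^{p*})+\mathcal{G}_a(\mathcal{U}_0,W_o^{p*})=\gamma_a\,W_o^{p*}\qquad\text{for every admissible }\mathcal{U}_0.
\]
Consequently minimizing $\mathcal{G}_o(\cdot,W_o^{p*})$ over controls is identical to maximizing $\mathcal{G}_a(\cdot,W_o^{p*})$, so the bPoE-optimal $\mathcal{U}_{0,o}^{p\ast}$ is optimal for the latter, and $\sup_{\mathcal{U}_0}\mathcal{G}_a(\mathcal{U}_0,W_o^{p*})=\gamma_a W_o^{p*}-V_o^p=\gamma_a D+\mathcal{E}_o^*$, consistent with the lower bound (here $V_o^p=\gamma_o\alpha-\mathcal{E}_o^*$ and $\gamma_a(W_o^{p*}-D)=\gamma_o\alpha$).

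The hard part is the matching upper bound $V_a^p\le\gamma_a D+\mathcal{E}_o^*$, i.e.\ \emph{global} scalarization optimality rather than mere coordinate-wise optimality: because $\mathcal{Y}_a(\alpha)$ need not be convex and the inner control-optimizations at thresholds $W_a\neq W_o^{p*}$ carry a different effective trade-off weight than the one induced by $W_o^{p*}$, the shared-threshold identity alone cannot exclude a control achieving a larger CVaR scalar value, and the exact identity extended to a general threshold couples the comparison to the mean $\mathbb{E}_{\mathcal{U}_0}[W_T]$. My plan is to mirror the forward direction (Lemma~\ref{lem: Equiv_CVaR_to_bPoE}): for an arbitrary admissible $\mathcal{U}_0$, global bPoE optimality gives $\gamma_o\,\text{bPoE}_{\mathcal{U}_0}^{x_0,t_0}-\mathbb{E}_{\mathcal{U}_0}[W_T]\ge V_o^p$, and I would upgrade this lower bound on the attained bPoE level into the required upper bound $\gamma_a\,\text{CVaR}_{\mathcal{U}_0}^{x_0,t_0}+\mathbb{E}_{\mathcal{U}_0}[W_T]\le\gamma_a D+\mathcal{E}_o^*$ using the monotone inverse relationship between $\text{CVaR}_{\mathcal{U}_0}^{x_0,t_0}$ and $\text{bPoE}_{\mathcal{U}_0}^{x_0,t_0}$ for the single distribution under $\mathcal{U}_0$ (Remark~\ref{rm:duality_cvar_bpoe}: $\operatorname{CVaR}$ is increasing in its level and $\operatorname{bPoE}$ is its inverse), with the mean terms reconciled through the general-threshold form of the identity above. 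Combining the two bounds yields $V_a^p=\gamma_a D+\mathcal{E}_o^*$, attained at $(W_o^{p*},\mathcal{U}_{0,o}^{p\ast})$, so $(D,\mathcal{E}_o^*)\in\mathcal{S}_a(\alpha,\gamma_a)$. I would then dispatch the technical loose ends: thresholds with $W_a\le D$ (handled by the concavity of $\mathcal{G}_a(\mathcal{U}_0,\cdot)$ in $W_a$, which places the relevant supremum at $\operatorname{VaR}_\alpha\ge 0$), strict monotonicity of the CVaR curve to make the quantile identification in Remark~\ref{rm:bpoE_Wstar} unambiguous, and attainment of all suprema and infima via the compactness in Remark~\ref{rm:box_cvar} together with Lemmas~\ref{lem:infF} and~\ref{lem:CVaR_threshold_existence}.
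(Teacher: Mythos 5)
Your opening steps are correct and coincide in content with the first half of the paper's own argument in Appendix~\ref{app:Equiv_bPoE_to_CVaR}: joint optimality of $(W_o^{p*},\mathcal{U}_{0,o}^{p\ast})$ identifies $W_o^{p*}=\operatorname{VaR}_{\alpha}$, $D=\operatorname{CVaR}_{\alpha}$, $\alpha=\mathcal{B}_o^*$ under $\mathcal{U}_{0,o}^{p\ast}$ (the paper extracts the same information algebraically via \eqref{eq:CE*}--\eqref{eq:DE} rather than via Remark~\ref{rm:bpoE_Wstar}), and your shared-threshold identity $\mathcal{G}_o(\mathcal{U}_0,W_o^{p*})+\mathcal{G}_a(\mathcal{U}_0,W_o^{p*})=\gamma_a W_o^{p*}$ is a clean way to see that $\mathcal{U}_{0,o}^{p\ast}$ maximizes the CVaR objective at the \emph{frozen} threshold $W_o^{p*}$, with value $\gamma_a D+\mathcal{E}_o^*$; the paper reaches the same intermediate point by simply substituting the candidate pair. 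Up to here the two arguments are equivalent.

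The genuine gap is precisely the part you label ``the hard part'' and leave as a plan: showing that no pair $(W_a^p,\mathcal{U}_0)$ with $W_a^p\neq W_o^{p*}$ exceeds $\gamma_a D+\mathcal{E}_o^*$. The plan as sketched cannot be completed. For fixed $\mathcal{U}_0$ the inner supremum over $W_a^p$ equals $\gamma_a\operatorname{CVaR}_{\alpha}(W_T;\mathcal{U}_0)+\mathbb{E}_{\mathcal{U}_0}[W_T]$, while the only control on the mean that bPoE-optimality supplies is $\mathcal{E}_o^*-\mathbb{E}_{\mathcal{U}_0}[W_T]\ge\gamma_o\bigl(\alpha-\operatorname{bPoE}_D(W_T;\mathcal{U}_0)\bigr)$; hence your route needs the distribution-free inequality
\[
\frac{\mathcal{B}_o^*\bigl(\operatorname{CVaR}_{\alpha}(W_T;\mathcal{U}_0)-D\bigr)}{W_o^{p*}-D}
\;\le\;
\alpha-\operatorname{bPoE}_D(W_T;\mathcal{U}_0)
\qquad\text{for every admissible }\mathcal{U}_0,
\]
and the monotone inverse relation of Remark~\ref{rm:duality_cvar_bpoe} gives only the sign equivalence ($\operatorname{CVaR}_\alpha\ge D$ iff $\operatorname{bPoE}_D\le\alpha$), not this quantitative bound, which compares tail functionals of an \emph{arbitrary} admissible distribution against the constant $W_o^{p*}-D$ computed under the \emph{optimal} one. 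It is false in general: take $D=0.5$, $\gamma_o=1$, let $\mathcal{U}_{0,o}^{p\ast}$ yield $W_T\sim\mathrm{Unif}[0,2]$ (so $\mathcal{B}_o^*=0.5$, $\mathcal{E}_o^*=1$, $W_o^{p*}=1$, $\gamma_a=1$), and let another admissible control yield the distribution whose quantile function rises linearly $0.4\to 0.6$ on $[0,0.45]$, $1.05\to 1.15$ on $(0.45,0.5]$, and $1.25\to 1.41$ on $(0.5,1]$; it has $\operatorname{bPoE}_{0.5}=0.45$, $\operatorname{CVaR}_{0.5}=0.56$, mean $0.945$. Its bPoE-scalar value is $0.45-0.945=-0.495>-0.5$, so the hypothesis (optimality of the uniform control) is intact, yet its CVaR-scalar value is $0.56+0.945=1.505>1.5=\gamma_a D+\mathcal{E}_o^*$, so the conclusion fails. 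Nothing in your argument excludes such a configuration, so no completion that treats the attainable family abstractly can work. You should also be aware that the paper's own closure of this step---step (i) of Appendix~\ref{app:Equiv_bPoE_to_CVaR}, which replaces the constant $W_o^{p*}$ by the free variable $W_a^p$ inside the infimum---is a ``slice versus diagonal'' comparison of the form $\inf_x g(x,c)\ge\inf_x g(x,x)$ with no generic justification, and the same two-distribution configuration violates it numerically ($-0.505<-0.5$); so your instinct that all the difficulty concentrates here is right, but closing the gap requires structural information about $\mathcal{Y}_o(D)$ and $\mathcal{Y}_a(\alpha)$ that neither your plan nor the abstract inf/sup manipulation invokes.
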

Full details of a proof of Lemma~\ref{lem: Equiv_bPoE_to_CVaR} are given in Appendix~\ref{app:Equiv_bPoE_to_CVaR}.

\begin{remark}
Lemmas~\ref{lem: Equiv_CVaR_to_bPoE} and~\ref{lem: Equiv_bPoE_to_CVaR} jointly establish a one-to-one correspondence between the efficient frontiers in the Mean--CVaR and Mean--bPoE frameworks. These results carry both mathematical and practical significance. Mathematically, they show that, with appropriately calibrated parameters, both formulations yield precisely the same efficient frontier. Practically, bPoE's dollar-based disaster level provides a more intuitive and communicable alternative to probability-based CVaR---especially for retail investors---{\sifin{while preserving the same efficient frontier.}} Therefore, this equivalence enables seamless integration of pre-commitment Mean--bPoE into existing Mean--CVaR workflows, without altering broader frameworks or results.
\end{remark}
We conclude this section by noting that, although our analysis has focused on terminal wealth objectives, analogous arguments apply when maximizing Expected Withdrawals, a popular reward measure in DC superannuation \cite{forsyth2022stochastic, forsyth2022short}. Minor adjustments to handle interim withdrawals do not \mbox{affect the equivalences.}

\section{Time-consistent Mean--bPoE and Mean--CVaR}
\label{sc:time_consistent}
We now introduce the time-consistent Mean--bPoE (TCMb) and Mean--CVaR (TCMa) problems,
each incorporating an explicit time-consistency constraint
\cite{basak2010, bjork2017, bjork2014}. For a given disaster level $D$ and scalarization parameter $\gamma>0$,
let $V^c_o(s,b,t_m^-)$ be the value function for TCMo at state
$X_m^- = (S_m^-, B_m^-) = (s,b)$ and \mbox{time $t_m^-$}.
The problem $TCMo_{t_{m}}\bigl(D, \gamma\bigr)$ is then formulated as follows:
\begin{equation}
\label{eq: TCMb}
 TCMo_{t_{m}}\left(D, \gamma\right): V^c_o\left(s,b,t_{m}\right) \coloneqq \inf_{W_o^c>D}\bigg\{ \inf_{\mathcal{U}_{m} \in \mathcal{A}}  \Ebb_{\mathcal{U}_{m}}^{X_{m}^+,t_{m}^+}\left[\gamma\,  \max\left( 1 - \tfrac{W_{T}-D}{W_o^c-D}, 0 \right) -  W_{T} \right]
 \bigg| X_{m}^- = (s, b) \bigg\}
\end{equation}
\vspace*{+3pt}
\begin{equation}
\label{eq: TCMb time-consistent constraint}
\begin{aligned}
\text{subject to: } \mathcal{U}_{m} &= \left\{ u_{m},  \mathcal{U}_{m+1, o}^{c,\ast} \right\} = \left\{u_{m}, u_{m+1, o}^{c\ast}, \ldots, u_{M-1, o}^{c\ast}\right\}, 
\\
&\qquad \qquad\qquad \text{where } \mathcal{U}_{m+1, o}^{c\ast} \text{ is the optimal control for }TCMo_{t_{m+1}}\left(D, \gamma\right),
\end{aligned}
\end{equation}
\vspace*{+4pt}
{\sifin{
\begin{equation}
\label{eq: TCMb constraints}
\qquad \qquad\qquad\qquad~\text{and}~
\begin{cases}
~~\text{(dynamics)} \quad~~~~\left(S_{t}, B_{t}\right) \text{ evolves via }
\eqref{eq: dBt}-\eqref{eq: dSt}
& t \notin \mathcal{T}, \\
\left.\begin{array}{ll}
\text{(cash injection)} &W_m^+ = W_m^- + q_m = s + b + q_m, \\
\text{(rebalancing)} &X_m^+ = (S_m^+, B_m^+),  \text{ where }
\\
&S_m^+ = u_m W_m^+,\quad B_m^+ = \left(1-u_m\right)W_m^+, \\
&u_m\in \mathcal{Z} = [0, 1].
\end{array}\right\}
    \end{cases}
\end{equation}}}
Similarly, for a given confidence level $\alpha$ and scalarization parameter $\gamma > 0$,
let $V^c_a(s,b,t_m^-)$ denote the value function for the TCMa problem at state $X_m^- = (S_m^-, B_m^-) = (s,b)$ and time $t_m^-$.
The problem $TCMa_{t_m}\bigl(\alpha, \gamma\bigr)$ is then formulated as follows \cite{PA2020a}:
\begin{equation}
\label{eq: TCMC}
TCMa_{t_{m}}\left(\alpha, \gamma\right): V_a^c\left(s,b,t_{m}^-\right) \coloneqq \sup_{W_a^c\ge 0}\bigg\{ \sup_{\mathcal{U}_{m} \in \mathcal{A}} \Ebb_{\mathcal{U}_{m}}^{X_m^+,t_{m}^+}\big[ \gamma \bigl(W_a^c+\tfrac{1}{\alpha}\min( W_{T} - W_a^c, 0)\bigr) +W_{T} \big] \bigg| X_m^- = (s, b) \bigg\}
\end{equation}
\vspace*{+3pt}
\begin{equation}
\label{eq: TCMC time-consistent constraint}
\begin{aligned}
\text{subject to: } \mathcal{U}_{m} &= \left\{ u_{m},  \mathcal{U}_{m+1, a}^{c\ast} \right\} = \left\{u_{m}, u_{m+1, a}^{c\ast}, \ldots, u_{M-1, a}^{c\ast}\right\},
\\
&\qquad \qquad\qquad\text{where }
\mathcal{U}_{m+1, a}^{c \ast} \text{ is the optimal control for }TCMa_{t_{m+1}}\left(\alpha, \gamma\right),
\end{aligned}
\end{equation}
\vspace*{+4pt}
{\sifin{
\begin{equation}
\label{eq: TCMC constraints}
\qquad \qquad\qquad\qquad~
\text{and the system dynamics, cash injection, and rebalancing constraints in \eqref{eq: TCMb constraints}.}
\end{equation}
}} \noindent Here, the time-consistency constraints \eqref{eq: TCMb time-consistent constraint} and \eqref{eq: TCMC time-consistent constraint}
ensure that the TCMo and TCMa optimal strategies,
 \[
 \mathcal{U}_{m, o}^{c\ast}= \left\{u_{m, o}^{c\ast}, u_{m+1, o}^{c\ast}, \ldots, u_{M-1, o}^{c\ast}\right\}, \quad \text {and} \quad
 \mathcal{U}_{m, a}^{c\ast}= \left\{u_{m, a}^{c\ast}, u_{m+1, a}^{c\ast}, \ldots, u_{M-1, a}^{c\ast}\right\}, \quad
 \]
are indeed time-consistent, so that dynamic programming applies directly
\cite{basak2010, bjork2017, bjork2014, landriault2018equilibrium, wang2011continuous, hu2012time}.

By contrast, the pre-commitment formulations $PCMo_{t_0}(D,\gamma)$ and
$PCMa_{t_0}(\alpha,\gamma)$ in
\eqref{eq: PCMb}--\eqref{eq: PCMb constraints} and
\eqref{eq: PCMC}--\eqref{eq: PCMC constraints}
are specified at the initial time $t_0$ and treat
the entire horizon $[t_0, T]$ as one multi-period optimization
without imposing time consistency. Thus, an investor commits
to a strategy at $t_0$ but does not require that rebalancing
decisions remain optimal upon reconsideration at later times.
Hence, these pre-commitment problems can be solved in a single pass at $t_0$,
ignoring the explicit stagewise optimization; in contrast,
the time-consistent versions $TCMo_{t_m}(D,\gamma)$ and $TCMa_{t_m}(\alpha,\gamma)$
must be solved iteratively across the rebalancing points
$t_m\in\mathcal{T}$ to maintain optimality at each future stage.

We now highlight a key structural difference between the time-consistent Mean–CVaR and Mean–bPoE formulations under lump-sum investment (i.e.\ no {\sifin{cash injections}}).
\begin{lemma}
\label{lem:scaling}
Suppose that no {\sifin{cash injections}} occur, i.e.\ $q_m = 0$ for all $t_m \in \mathcal{T}$
(lump-sum investment). Let $\lambda > 0$ be a scalar. Then:
\begin{itemize}[noitemsep, topsep=0pt, leftmargin=*]
\item The value function $V_a^c(s, b, t_m^-)$ of the time-consistent Mean–CVaR problem defined in~\eqref{eq: TCMC}–\eqref{eq: TCMC constraints} satisfies
\[
V_a^c(\lambda s, \lambda b, t_m^-) = \lambda\, V_a^c(s, b, t_m^-).
\]
Hence, $V_a^c$ is positively homogeneous of degree $1$.
As a consequence, the optimal rebalancing control is $u_{m, a}^{c\ast}(w) = u_{m, a}^{c\ast}(\lambda w)$, where $w = s+b$.

\item The value function $V_o^c(s, b, t_m^-)$ of the time-consistent Mean–bPoE problem defined in~\eqref{eq: TCMb}–\eqref{eq: TCMb constraints} does not, in general, satisfy this positive homogeneity:
\[
V_o^c(\lambda s, \lambda b, t_m^-) \ne \lambda\, V_o^c(s, b, t_m^-).
\]
Consequently, the TCMb optimal rebalancing control does depend on the absolute wealth $w = s+b$.
\end{itemize}
\end{lemma}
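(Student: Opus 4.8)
The plan is to prove the two bullets separately: positive homogeneity for Mean--CVaR by backward induction on the rebalancing index $m$, and failure of homogeneity for Mean--bPoE by isolating the structural obstruction and then exhibiting an explicit counterexample. For the Mean--CVaR claim I would run the induction on $m = M-1, M-2, \ldots, 0$ with the strengthened hypothesis that both (a) $V_a^c(\lambda s, \lambda b, t_m^-) = \lambda\,V_a^c(s,b,t_m^-)$ and (b) the optimal proportion $u_{m,a}^{c*}$ is invariant under $w \mapsto \lambda w$. Two facts do the work: with $q_m = 0$ the wealth propagates purely multiplicatively, so given a proportion $u_m$ the post-rebalancing holdings are $(u_m w,(1-u_m)w)$ and the one-period evolution is pathwise linear in $w$ (coupling the scaled and unscaled problems on the same driving noise, identical proportions send $(\lambda s,\lambda b)$ to exactly $\lambda W_T$); and the admissible set $\mathcal{Z}=[0,1]$ of proportions is itself scale-free.

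At the base step $m=M-1$ there is no continuation control: writing $W_T = w\,Y(u_{M-1})$ with $Y(u) = uR + (1-u)e^{r\Delta t}$ and $R$ the random, $w$-independent gross risky return over the final period, I substitute $W_a^c = \lambda\widetilde W_a^c$ and use $\min(\lambda x - \lambda y,0) = \lambda\min(x-y,0)$ to factor $\lambda$ out of the bracketed integrand; since the feasible set $\{W_a^c \ge 0\}$ maps to $\{\widetilde W_a^c \ge 0\}$ and is thus invariant, taking $\sup$ over $u_{M-1}\in[0,1]$ and $\widetilde W_a^c \ge 0$ gives (a), and because the scaled objective is exactly $\lambda$ times the unscaled one it has the same maximizer, giving (b). For the inductive step, the time-consistency constraint~\eqref{eq: TCMC time-consistent constraint} fixes the continuation to $\mathcal{U}_{m+1,a}^{c*}$, which by hypothesis (b) uses wealth-invariant proportions; hence scaling the intermediate holdings $X_{m+1}^-$ componentwise by $\lambda$ and applying this same continuation produces $\lambda W_T$, and the identical substitution $W_a^c=\lambda\widetilde W_a^c$ closes the induction. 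Specializing to $m$ and $w=s+b$ yields the stated homogeneity and the control identity $u_{m,a}^{c*}(w) = u_{m,a}^{c*}(\lambda w)$.

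For the Mean--bPoE claim the same programme breaks down, and I would first pin down exactly where. The integrand $\max\!\bigl(1 - \tfrac{W_T - D}{W_o^c - D},\,0\bigr)$ is anchored at the fixed dollar level $D$: under $w \mapsto \lambda w$ (so $W_T \mapsto \lambda W_T$) together with the natural substitution $W_o^c = \lambda W_o'$ one obtains $\tfrac{\lambda W_T - D}{\lambda W_o' - D}$, which does not equal $\tfrac{W_T - D}{W_o' - D}$ unless $D=0$; moreover the constraint set $\{W_o^c > D\}$ is itself not scale-invariant, mapping to $\{W_o' > D/\lambda\}$. Thus neither the integrand nor the feasible set carries $\lambda$ cleanly outside, and the factorization that succeeded for CVaR is unavailable. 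To convert this into a proof of strict failure I would exhibit a minimal counterexample: one rebalancing date ($M=1$, lump sum) and a two-point risky return $R\in\{R_1,R_2\}$ with $R_1 < e^{r\Delta t} < R_2$, so that $W_T \in \{w\,a_1(u),\,w\,a_2(u)\}$ with $a_i(u) = uR_i + (1-u)e^{r\Delta t}$. For this finite model the inner infimum over $W_o^c > D$ and the outer optimization over $u\in[0,1]$ are closed-form via a short case analysis on the positions of $w\,a_1(u),\,w\,a_2(u)$ relative to $D$ and $W_o^c$; evaluating the resulting $V_o^c(w)$ at two wealth levels $w$ and $\lambda w$ with $D$ held fixed shows $V_o^c(\lambda w) \ne \lambda\,V_o^c(w)$, and correspondingly the minimizing proportion differs between the two levels, establishing both assertions of the second bullet.

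The main obstacle is the counterexample bookkeeping rather than any conceptual difficulty: one must choose the parameters $(R_1,R_2,p,r,\Delta t,\gamma,D)$ so that, at the two wealth levels compared, the kink of the ramp $\max(1 - \tfrac{\cdot - D}{\cdot - D},\,0)$ places the two mass points on genuinely different linear pieces—otherwise a degenerate regime (both outcomes above $W_o^c$, or $D$ effectively inactive) could accidentally restore proportional scaling and hide the failure. Verifying non-degeneracy and that the computed optimizers are correctly interior or at the boundary of $[0,1]$ is the step needing the most care; once a single admissible parameter set is fixed, the strict inequality follows by direct substitution.
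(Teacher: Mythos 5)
Your treatment of the Mean--CVaR bullet is correct and is essentially the paper's own argument: the paper exhibits the same factorization of the objective under the joint scaling $W_T \mapsto \lambda W_T$, $W_a^c \mapsto \lambda W_a^c$, and defers the backward-induction propagation through the dynamic-programming recursion to \cite{PA2020a}; you simply carry out that induction explicitly, with the pathwise coupling, the scale-invariance of $\mathcal{Z}=[0,1]$, and the invariance of the feasible set $\{W_a^c \ge 0\}$ doing the work. That part needs no changes beyond noting that optimizers should be treated as sets if uniqueness is not assumed.

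For the Mean--bPoE bullet the routes diverge. The paper stops at exactly the structural observation you also make --- the fixed dollar level $D$ prevents the scaling $W_T \mapsto \lambda W_T$, $W_o^c \mapsto \lambda W_o^c$ from factoring out $\lambda$, and $\{W_o^c > D\}$ is not scale-invariant --- and from this infers non-homogeneity and the wealth dependence recorded in \eqref{eq:TCMb_control}. You propose to go further and certify strict failure by an explicit counterexample, which is in principle the stronger and more honest argument, since non-factorization of one particular substitution does not by itself exclude homogeneity of the value function. However, your counterexample as designed has a genuine mismatch with the lemma: you take a two-point risky return $R\in\{R_1,R_2\}$, but $V_o^c$ in the statement is the value function of the problem \eqref{eq: TCMb}--\eqref{eq: TCMb constraints}, whose dynamics \eqref{eq: dBt}--\eqref{eq: dSt} are jump-diffusion with the Kou jump density \eqref{eq: pdf for Kou model}; every admissible instance (including the GBM case obtained by switching off jumps) has a continuously distributed gross return, so a two-point return model is not a special case, and a computation there does not establish the claim for the paper's $V_o^c$. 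The repair is routine but necessary: either build the counterexample in the GBM instance with $M=1$ and a lump sum, where the inner objective $\mathbb{E}\bigl[\max\bigl(1 - (w\,Y(u)-D)/(W_o^c-D),\,0\bigr)\bigr]$ reduces to truncated lognormal expectations expressible through the normal CDF, and verify non-homogeneity at two wealth levels; or fall back on the paper's weaker structural argument, in which case the bullet can only be asserted with the same ``in general'' hedge the paper uses.
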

\noindent
A full, rigorous proof via dynamic programming and induction appears in~\cite{PA2020a}
(which addresses the Mean–CVaR case only). Specifically, in the TCMa formulation, each subproblem (at any $t_m \in \mathcal{T}$) involves an objective of the form
$
\gamma_a \left(W_a^c + \tfrac{1}{\alpha} \min(W_T - W_a^c,\, 0) \right) + W_T$.
Under lump‐sum conditions (i.e.\ no cash injections), scaling $W_T \mapsto \lambda W_T$ and $W_a^c \mapsto \lambda W_a^c$ yields
\[
\gamma_a \left( \lambda W_a^c + \tfrac{1}{\alpha} \min(\lambda W_T - \lambda W_a^c,\, 0) \right) + \lambda W_T
= \lambda \left[\gamma_a \left( W_a^c + \tfrac{1}{\alpha} \min(W_T - W_a^c,\, 0) \right) + W_T \right].
\]
Hence, the TCMa objective factors out $\lambda$.
This positive homogeneity propagates through the dynamic programming recursion, implying wealth‐independent rebalancing controls, namely
\EQ
\label{eq:TCMa_control}
\text{for each $t_m \in \mathcal{T}$ and } \forall \lambda > 0, ~w>0,
\quad
u_{m,a}^{c\ast}(w)= u_{m,a}^{c\ast}(\lambda w).
\EN
In practical terms, this means the optimal fraction in the risky asset does not change
if total wealth is scaled.

By contrast, the TCMb objective includes a fixed disaster level $D$, leading to terms like
$\max\!\left(1 - \tfrac{W_T - D}{W_o^c - D},\, 0\right)$.
Scaling $W_T \mapsto \lambda W_T$ and $W_o^c \mapsto \lambda W_o^c$ does not eliminate the constant $D$, so the expression fails to factor out~$\lambda$. As a result, the bPoE objective is not positively homogeneous, and the optimal rebalancing control in TCMb depends on the absolute wealth level. That is,
\EQ
\label{eq:TCMb_control}
\exists \text{ some $t_m \in \mathcal{T}$ and some } \lambda>0,\; w>0
  \quad\text{such that}\quad
  u_{m,o}^{c\ast}(\lambda\,w)
  \;\neq\;
  u_{m,o}^{c\ast}(w).
\EN

\begin{remark}[{\sifin{No equivalence under lump-sum when $D < e^{rT} w_0$}}]
\label{rm:no_equiv}
{\sifin{
Throughout this remark, we assume that the disaster level $D$ satisfies
$D < e^{rT} w_0$, i.e.\ it lies below the risk–free growth of the initial
lump–sum $w_0$, as in Remark~\ref{rm:bpoE_Wstar}.
}}

Suppose we are in a lump-sum setting (i.e., $q_m = 0$ for all $t_m \in \mathcal{T}$), and let $\Phi$ be a hypothetical global mapping from $(\alpha, \gamma_a)$ to $(D, \gamma_o)$ such that the time-consistent problems $TCMa_{t_0}(\alpha, \gamma_a)$ and $TCMo_{t_m}(D, \gamma_o)$ yield the same efficient frontier, attained by the same optimal rebalancing control and threshold pair.

In the TCMa formulation, the control $u_{m,a}^{c\ast}(w)$ is wealth-independent in a lump-sum environment, as per \eqref{eq:TCMa_control}. By contrast, the TCMb control $u_{m,o}^{c\ast}(w)$ is wealth-dependent, as in \eqref{eq:TCMb_control}, due to the fixed disaster level $D$. Consequently, there must exist some scaled wealth level $\lambda w$ at which the TCMb control differs from that of TCMa, a contradiction. Although this argument focuses on mapping from TCMa to TCMb, a symmetric consideration applies in the reverse direction. Hence, no global parameter mapping $\Phi$ can yield the same efficient frontiers and optimal control/threshold pairs across the entire investment horizon.\footnote{In contrast, in the pre-commitment setting, Lemmas~\eqref{lem: Equiv_CVaR_to_bPoE} and~\eqref{lem: Equiv_bPoE_to_CVaR} establish the existence of a mapping $\Phi$ between the two formulations such that equivalent points on the efficient frontiers are attained by the same optimal control and threshold pair.}

In our numerical examples, we consider the practical setting where the initial investment is zero and the investor contributes a fixed amount (in real terms) at each rebalancing date. This departs from the lump-sum assumption in Lemma~\ref{lem:scaling}. However, for states at time $t_m$ where the future discounted value of remaining contributions is small relative to current wealth, that is,
$w = s + b \gg \sum_{\iota = m}^{M-1} e^{-r(T - t_\iota)} q_\iota$,
the control $u_{m,a}^{c\ast}(w)$ is expected to depend only weakly on $w$.
This observation suggests that a global parameter mapping $\Phi$ is unlikely to preserve
the same frontier and optimal control/threshold structure between the TCMa and TCMb formulations, even beyond the strict lump-sum setting.
\end{remark}
{\sifin{
We emphasize that this non–equivalence statement in Remark~\ref{rm:no_equiv}
is specific to the regime $D < e^{rT} w_0$: if $D$ is chosen so large that this condition fails, then time--consistent Mean–bPoE may degenerate to the fully risky
strategy $u^{c*}\equiv 1$ on parts (or all) of the state space, and in
such cases we do not claim a qualitative distinction from the
wealth–independent Mean–CVaR controls.
}}

\section{Numerical methods}
\label{sc:num_methods}
This section presents provably convergent numerical integration methods for both pre-commitment and time-consistent multi-period Mean–bPoE and Mean–CVaR problems. To the best of our knowledge, no convergent schemes have previously been established for this class of mean–risk formulations. Existing work, such as~\cite{PA2020a}, addresses only the Mean–CVaR case and does not provide a convergence analysis. In contrast to the PDE-based approach of~\cite{PA2020a}, which enforces monotonicity via a user-defined parameter~$\varepsilon$, our method guarantees strict monotonicity by construction—a property critical for convergence in stochastic control problems~\cite{kushner2001numerical}.

To treat Mean--bPoE and Mean--CVaR in a unified framework, we introduce a generic threshold variable: $W^p \in \{W_o^p,\,W_a^p\}$ for pre-commitment and $W^c \in \{W_o^c,\,W_a^c\}$ for time-consistent formulations, collectively referred to as $W^\bullet$, and include it in the state vector. We also apply a log transformation to the risky asset, resulting in the augmented controlled process $\{\widehat{X}_t\}$, where $\widehat{X}_t = (Y_t, B_t,\,W^\bullet)$ with $Y_t = \ln(S_t)$.
For subsequent use, given $u_m \in \mathcal{Z}$, we define the intervention operator $\mathcal{M}(u_m)$, applied to a function $F(\cdot)$ defined on the augmented state $(y,b, W^\bullet, t)$, as
\EQ
\label{eq:Operator_M}
\mathcal{M}(u_m)~\big\{F\left(y,b, W^\bullet, t_m\right)\big\} =
  F\big(y^+(s, b, u_m), b^+(s, b, u_m), W^\bullet, t_m^+\big),
\EN
where, by \eqref{eq: W=W+q}, we have
\EQ
\label{eq:sbplus}
y^+(y, b, u_m) = \ln(u_m \, (e^y+b + q_m)) , \quad b^+(e^y, b, u_m) = (1-u) \, (e^y+b + q_m).
\EN
For each feasible $W^\bullet$, we define the payoff function
\EQ
\label{eq:payoff_l}
f(w, W^\bullet) =
\begin{cases}
 \gamma_o \max\left( 1 - \frac{w-D}{W^\bullet-D} , 0 \right) - w,  & \text{Mean--bPoE},
\\
 \gamma_a \left( W^\bullet + \tfrac{1}{\alpha} \min\left(w - W^\bullet , 0 \right)\right)  + w,  & \text{Mean--CVaR}.
 \end{cases}
\EN
\subsection{Pre-commitment Mean--bPoE and Mean--CVaR}
\label{ssc:pre}
Recall that the pre-commitment Mean–bPoE formulation $PCMo_{t_0}(D,\gamma)$ in \eqref{eq: PCMb}–\eqref{eq: PCMb constraints} and its Mean–CVaR counterpart in \eqref{eq: PCMC}–\eqref{eq: PCMC constraints} share a similar structure and can both be tackled using a ``lifted'' state approach~\cite{miller2017optimal}:
the threshold is fixed at candidate values, the corresponding inner control problem is solved over $[0, T]$ for each value, and an outer search is performed at time $t_0$ to determine the optimal threshold. With this in mind, we define a generic auxiliary function $\widehat{V}^p(\cdot)$ on the augmented state, where $W^p$ remains fixed throughout:
\begin{eqnarray}
\widehat{V}^p\left(y,b, W^p, t_m^-\right) &\coloneqq& \Ebb_{\mathcal{U}_m}^{\widehat{X}_{m}^+, t_{m}^+}\bigg[f\left(e^{Y_T} + B_T, W^p\right) \bigg| \widehat{X}_m^- = (y, b, W^p)  \bigg].
\label{eq: PCMb_l}
\end{eqnarray}
The terminal condition for this  auxiliary function at time $t = T$ is given by
\EQ
\label{eq:terminal}
\widehat{V}^p\left(y,b, W^p, T\right) = f(e^y+b, W^p),
\quad \text{ $f(\cdot)$ is defined in \eqref{eq:payoff_l}}.
\EN
For each $t_m \in \mathcal{T}$, the interest accrued over $[t_m,\,t_{m+1}]$ is settled during $[t_{m+1}^-,\,t_{m+1}]$, leading to
\EQ
\label{eq:interest}
\widehat{V}^p\left(y,\, b, \, W^p,  t_{m+1}^-\right)  = \widehat{V}^p\left(y, \, b\,e^{r\,\Delta t}, \, W^p, \, t_{m+1}\right).
\EN
Over $[t_m^+,\,t_{m+1}^-]$, the risk-free amount $b$ is constant, and $\{Y_t\}=\{\ln(S_t)\}$ evolves under the log-dynamics of~\eqref{eq: dSt}. Hence,
for a fixed $b$, the backward recursion takes the form of a convolution integral:
\EQ
\label{eq:integral}
\widehat{V}^p\bigl(y,b, \, W^p,  t_m^+\bigr)
=
\int_{-\infty}^{\infty}
\widehat{V}^p\bigl(y',\,b, \, W^p, \,t_{m+1}^-\bigr)\,\,
g(y- y',\Delta t) \,dy'.
\EN
Here, $g(y - y', \Delta t)$ is the transition density of the log-state $Y_t$ from $y$ at $t_m^+$ to $y'$ at $t_{m+1}^-$, and depends only on the displacement $y - y'$ and timestep $\Delta t$ due to the spatial and temporal homogeneity of~\eqref{eq: dSt}.
\begin{remark}[Transition density for the Kou model]
\label{rm:Kou}
In the Kou model, the transition density $g(y, \Delta t)$ admits an infinite series representation of the form $g(y, \Delta t) = \sum_{\ell = 0}^{\infty} g_\ell(y, \Delta t)$, where each term $g_\ell$ is non-negative~\cite{zhang2024monotone}. Full details of this representation are provided in Appendix~\ref{app:num_scheme}.
For numerical implementation, the series is truncated to a finite number of terms to form a partial sum; see Subsection~\ref{ssc:num_scheme}.
\end{remark}

\noindent From $t_m^+$ to $t_m$, the optimal rebalancing $u_m^{p\ast}(\cdot|W^p)$ corresponding to the fixed threshold $W^p$ is determined by
\EQ
\label{eq:control}
u_m^{p\ast}(\cdot|W^p) =
\begin{cases}
u_{m, o}^{p\ast}(\cdot|W_o^p) = \underset{u_m \in \mathcal{Z}}{\arginf}\, \mathcal{M}(u_m)~\big\{\widehat{V}_o^p\left(y,b, W_o^p,  t_m^+\right)\big\},
& W^p = W_o^p,
\\
u_{m, a}^{p\ast}(\cdot|W_a^p) = \underset{u_m \in \mathcal{Z}}{\argsup}\,
\mathcal{M}(u_m)~\big\{\widehat{V}_a^p\left(y,b, W_a^p,  t_m^+\right)\big\},    & W^p = W_a^p.
 \end{cases}
\EN
where $\mathcal{M}(u_m)$ is given in \eqref{eq:Operator_M}.
The auxiliary function is then updated via
\begin{equation}
\label{eq:intervention}
\widehat{V}^p\bigl(y,b,W^p, t_m\bigr)
=
\mathcal{M}(u_m^{p\ast}(\cdot|W^p)) \, \big\{\widehat{V}^p\bigl(y,b, W^p, t_m^+\bigr)\big\}.
\end{equation}
At time $t_0$, we find the optimal threshold $W^{p\ast}$ via
an outer exhaustive search over all feasible $W^p$
\EQ
\label{eq:vhat_l}
W^{p\ast} =
\begin{cases}
W_o^{p\ast}
=
 \underset{W_o^p>D}{\arginf}\,  \widehat{V}_o^p\left(y_0,b_0, W_o^p, t_0\right), & \text{Mean-bPoE},
 \\
W_a^{p\ast} = \underset{W_a^p \ge 0}{\argsup}\,  \widehat{V}_a^p\left(y_0,b_0, W_a^p, t_0\right),  & \text{Mean-CVaR},
\end{cases}
\EN
The pre-commitment optimal control is then determined by
$\mathcal{U}_0^{p\ast}=\bigl\{u_0^{p\ast}(\cdot | W^{p\ast}), \ldots,u_{M-1}^{p\ast}(\cdot|W^{p\ast})\bigr\}$.

Finally, at $t_0^-$, we set the Mean-bPoE and Mean-CVaR value functions respectively~as
\EQ
\label{eq:vhat_l_V}
V_o^p(s_0, b_0, t_0^-) = \widehat{V}^p\left(y_0,b_0, W_o^{p\ast}, t_0^-\right)\quad
\text{ and }\quad
V_a^p(s_0, b_0, t_0^-) = \widehat{V}^p\left(y_0,b_0, W_a^{p\ast}, t_0^-\right)
\EN
\begin{proposition}[Lifted formulation equivalence]
\label{prop:equiv_lifted_precommitment}
Under the log transformation $y = \ln(s)$,
the formulation \eqref{eq:payoff_l}--\eqref{eq:vhat_l_V} is equivalent
to $PCMo_{t_0}(D,\gamma_o)$ in
\eqref{eq: PCMb}--\eqref{eq: PCMb constraints} when $W^p = W_o^p$,
and to $PCMa_{t_0}(\alpha,\gamma_a)$ in \mbox{\eqref{eq: PCMC}--\eqref{eq: PCMC constraints}}
when $W^p = W_a^p$.
\end{proposition}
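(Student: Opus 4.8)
The plan is to establish the equivalence in two stages that mirror the interchanged-order formulations \eqref{eq: PCMb_new} and \eqref{eq: PCMa_new}. First, for an arbitrary but \emph{fixed} threshold $W^p$, I would show that the backward recursion \eqref{eq:terminal}--\eqref{eq:intervention}, initialised from the auxiliary function \eqref{eq: PCMb_l}, computes exactly the inner-optimised value $\inf_{\mathcal{U}_0\in\mathcal{A}}\Ebb_{\mathcal{U}_0}[f(e^{Y_T}+B_T,W^p)]$ in the Mean--bPoE case and $\sup_{\mathcal{U}_0\in\mathcal{A}}\Ebb_{\mathcal{U}_0}[f(e^{Y_T}+B_T,W^p)]$ in the Mean--CVaR case. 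Second, I would show that the outer exhaustive search \eqref{eq:vhat_l}, followed by the assignment \eqref{eq:vhat_l_V}, reproduces the outer optimisation over the threshold. Since the inner control optimisation and the outer threshold optimisation have already been interchanged in \eqref{eq: PCMb_new}--\eqref{eq: PCMa_new}, combining the two stages yields $V_o^p(s_0,b_0,t_0^-)=\widehat{V}_o^p(y_0,b_0,W_o^{p\ast},t_0^-)$ and $V_a^p(s_0,b_0,t_0^-)=\widehat{V}_a^p(y_0,b_0,W_a^{p\ast},t_0^-)$, which is the claimed equivalence.

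For the first stage, the crucial observation is that freezing $W^p$ in the augmented state removes the outer optimisation from the objective, leaving a standard finite-horizon Markov control problem with a terminal-only reward $\Ebb_{\mathcal{U}_0}[f(e^{Y_T}+B_T,W^p)]$. Such a problem satisfies the dynamic programming principle irrespective of the nonsmoothness of the $\max(\cdot,0)$ and $\min(\cdot,0)$ terms in \eqref{eq:payoff_l}, since these enter only through a bounded measurable terminal payoff. I would then verify, epoch by epoch, that each recursion step implements the correct controlled dynamics: the terminal condition \eqref{eq:terminal} is $f$ evaluated at $W_T=e^y+b$; the convolution \eqref{eq:integral} is the transition of the log-state under the jump--diffusion \eqref{eq: dSt}, which is spatially homogeneous after the substitution $y=\ln(s)$ (so its kernel depends only on the displacement $y-y'$), with the risk-free holding $b$ held fixed over $[t_m^+,t_{m+1}^-]$; the interest-accrual step \eqref{eq:interest} applies the deterministic growth $b\mapsto b\,e^{r\Delta t}$ consistent with \eqref{eq: dBt}; and the intervention operator \eqref{eq:Operator_M}--\eqref{eq:sbplus} injects the cashflow $q_m$ and reallocates wealth exactly as prescribed by the admissibility constraints \eqref{eq: PCMb constraints} and \eqref{eq: PCMC constraints}. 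The stagewise optimisations \eqref{eq:control}---an $\arginf$ over $u_m\in\mathcal{Z}$ for bPoE, an $\argsup$ for CVaR---then coincide with pointwise optimisation of the Bellman operator, so backward induction gives that $\widehat{V}^p(\cdot,W^p,t_0^-)$ equals the inner-optimised value for every fixed $W^p$.

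For the second stage, the existence of a finite optimal threshold---established in Lemma~\ref{lem:infF} for Mean--bPoE and Lemma~\ref{lem:CVaR_threshold_existence} for Mean--CVaR---guarantees that the outer search in \eqref{eq:vhat_l} is attained, so that $\arginf_{W_o^p>D}$ and $\argsup_{W_a^p\ge 0}$ are well defined and return the same optimiser as the outer optimisation in the interchanged formulations \eqref{eq: PCMb_new}--\eqref{eq: PCMa_new}. Substituting this optimiser back through \eqref{eq:vhat_l_V} then delivers the two precommitment value functions, completing the identification.

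The step I expect to be the main obstacle is the rigorous epoch-by-epoch verification that the composition of convolution, interest accrual, and intervention maps reproduces a single step of the controlled process $\{X_t\}$ exactly under the $(y,b,W^p)$ state, in particular the careful handling of the pre-/post-rebalancing timing conventions ($t_m^-$, $t_m^+$, and the settlement window $[t_{m+1}^-,t_{m+1}]$). The conceptual content---that freezing the threshold turns the non-separable mean--risk objective into a separable terminal-reward control problem amenable to dynamic programming---is routine; the care lies in the bookkeeping of the log-transform and the intermediate timing, which I would track explicitly to ensure the discrete recursion matches the admissible dynamics without drift.
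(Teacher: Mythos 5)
Your proposal is correct and follows essentially the same route as the paper, whose proof is stated in one line as ``follows directly by substitution and application of the backward recursion''; your two stages (dynamic programming for the inner problem at fixed $W^p$, then matching the outer threshold search using the interchanged formulations \eqref{eq: PCMb_new} and \eqref{eq: PCMa_new} together with Lemmas~\ref{lem:infF} and~\ref{lem:CVaR_threshold_existence}) are precisely the substitution-and-recursion argument the paper has in mind, spelled out in full.
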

\noindent The proof of the proposition follows directly by substitution and application of the backward recursion.
\begin{remark}[Computation of expectation]
\label{rm:expectation}
We obtain the expectation $\text{E}_{\mathcal{U}_0^{p\ast}}^{x_0,t_0}$ under
$(W^{p\ast},\mathcal{U}_0^{p\ast})$ by defining an auxiliary function $\widehat{E}(y,b, W^{p\ast}, t)$ with terminal condition
$\widehat{E}(y,b,\cdot, T)=e^y+b$, and evolving it backward with the same convolution
equation and rebalancing decision as $\widehat{V}^p$:
\[
\widehat{E}^p\bigl(y,b, \cdot, t_m^+\bigr)
=
\int_{-\infty}^{\infty}
\widehat{E}^p(y', b, \cdot, t_{m+1}^-)
g(y- y',\Delta t)dy',
\quad
\widehat{E}^p\bigl(y,b, \cdot, t_m\bigr)
=
\mathcal{M}(u_m^{p\ast}(\cdot|W^{p\ast})) \, \big\{\widehat{E}^p\bigl(y,b, \cdot, t_m^+\bigr)\big\}.
\]
The interest settlement during $[t_{m+1}^-,\,t_{m+1}]$ is in the same fashion as \eqref{eq:interest} by updating $b$ to $be^{r \Delta t}$.
At time~$t_0^-$, we have
$\text{E}_{\mathcal{U}_0^{p\ast}}^{x_0,t_0} = \widehat{E}^p\bigl(y_0,b_0,W^{p\ast}, t_0^-\bigr)$. Thus, 
\[
\text{bPoE}_{\mathcal{U}_0^{p\ast}}^{x_0,t_0} = \gamma_o \widehat{V}_o^p\left(y_0,b_0, t_0^-\right) - \widehat{E}^p\bigl(y_0,b_0,W_o^{p\ast}, t_0^-\bigr)
\quad
\text{and}
\quad
\text{CVaR}_{\mathcal{U}_0^{p\ast}}^{x_0,t_0}= \gamma_a \widehat{V}_a^p\left(y_0,b_0, t_0^-\right) - \widehat{E}^p\bigl(y_0,b_0,W_a^{p\ast}, t_0^-\bigr).
\]
\end{remark}
\subsubsection{Localization and problem definition}
\label{ssc:num_pro}
Since $W^{p\ast}$ is finite by Lemmas~\ref{lem:infF} and~\ref{lem:CVaR_threshold_existence},
we truncate the threshold domain to $\Gamma \equiv \Gamma_o =  [D,\,W_o^{\max}]$ (Mean–bPoE) or $\Gamma \equiv \Gamma_a =[0,\,W_a^{\max}]$ (Mean–CVaR), where $W_o^{\max}$ and $W_a^{\max}$ are sufficiently large. We localize the $(y,b)$ domain to
$\Omega = \big[y_{\min}^{\dagger}, y_{\max}^{\dagger}\big] \times \left[0, b_{\max}\right]$,
where $y_{\min}^{\dagger} < y_{\min} < 0 < y_{\max} < y_{\max}^{\dagger}$
and $b_{\max} > 0$ are chosen sufficiently large in magnitude
to ensure negligible boundary errors \cite{zhang2024monotone, DangForsyth2014}.
We partition $\Omega$ into
\begin{align*}
\Omega_{\myin} = (y_{\min}, y_{\max}) \times [0, b_{\max}],\quad
\Omega_{y_{\min}} = [y_{\min}^{\dagger}, y_{\min}] \times [0, b_{\max}], \quad
\Omega_{y_{\max}} = [y_{\max}, y_{\max}^{\dagger}] \times [0, b_{\max}].
\end{align*}
On $\Omega_{\myin}$, for each fixed  $W^p\in\Gamma$,
we truncate the integral \eqref{eq:integral} to
\EQ
\label{eq:integral_trunc}
\widehat{V}^p\bigl(y,b, \, W^p,  t_m^+\bigr)
=
\int_{y_{\min}^{\dagger}}^{y_{\max}^{\dagger}}
\widehat{V}^p\bigl(y',\,b, \, W^p, \,t_{m+1}^-\bigr)\,\,
g(y- y',\Delta t) \,dy',
\quad y \in (y_{\min}, y_{\max}).
\EN
For $(y, b) \in \Omega_{y_{\min}}$ (resp.\  $\Omega_{y_{\max}}$),
by the payoff function \eqref{eq:terminal},
we assume $\widehat{V}^p(y,b, \cdot, t)$ has the form $A_0(t) b$ (resp.\ $A_1(t) e^y$) for some unknown $A_0(t)$ (resp.\ $A_1(t)$) to approximate the behavior as $y \to -\infty$ (resp.\ $y \to \infty$). Substituting this form into~\eqref{eq:integral}, and applying standard properties of exponential L\'evy processes, yields
\EQ
\label{eq:boundary}
\Omega_{y_{\min}}: \widehat{V}^p(y,b, \cdot, t_m^+) = \widehat{V}^p(y,b, \cdot, t_{m+1}^-)
\quad
\big(\text{resp.}~ \Omega_{y_{\max}}: \widehat{V}^p(y,b, \cdot, t_m^+) = e^{\mu \Delta t}\, \widehat{V}^p(y,b, \cdot, t_{m+1}^-)  \big).
\EN
Finally, for interest-settlement usage in \eqref{eq:interest},
we define $\Omega_{b_{\max}} = (y_{\min}, y_{\max}) \times (b_{\max}, b_{\max}e^{rT}]$
and approximate the solution there using linear extrapolation
\EQ
\label{eq:bmax}
\widehat{V}^p(y,b, \cdot, t_m^+) = \tfrac{b}{b_{\max}}\widehat{V}^p (y, b_{\max},  \cdot, t_m^+).
\EN
\begin{defn}[Localized pre-commitment formulations]
\label{def:glwb}
The value function of the pre-commitment Mean-bPoE/CVaR problem at time
$t_0^{-}$ is given by $V^p(s_0, b_0, t_0^-) = \widehat{V}^p(y_0,b_0, W^{p\ast}, t_0^-)$, where
$y_0 = \ln(s_0)$, $W^{p\ast}$ is the optimal threshold determined via the outer search~\eqref{eq:vhat_l}, and $\mathcal{U}_0^{p\ast} = \bigl\{u_0^{p\ast}(\cdot|W^{p\ast}), \ldots, u_{M-1}^{p\ast}(\cdot|W^{p\ast})\bigr\}$ is the associated \mbox{optimal control}.

The function $\widehat{V}^p(y,b, W^{p}, t)$ is defined on  $\Omega \times \Gamma \times \{\mathcal{T} \cup \{T\}\}$ as follows.
At each $t_{m} \in \mathcal{T}$, $\widehat{V}^p(\cdot, t_m)$ is given by the rebalancing optimization \eqref{eq:control}–\eqref{eq:intervention},
where $\widehat{V}^p(\cdot, t_m^+)$ satisfies:
(i) the  integral \eqref{eq:integral_trunc} on $\Omega_{\myin}\times\{t_m^+\}$,
(ii) the boundary conditions \eqref{eq:boundary} and \eqref{eq:bmax} on
$\{\Omega_{y_{\min}}, \Omega_{y_{\max}}, \Omega_{b_{\max}}\} \times\{t_{m}^+\}$.
In addition, $\widehat{V}^p(\cdot,t)$ satisfies the terminal condition~\eqref{eq:terminal} on $\Omega \times \{T\}$ and the interest-settlement update~\eqref{eq:interest} on $\Omega \times \{t_{m+1}^-\}$.
\end{defn}
Since the spatial domain $\Omega \subset \mathbb{R}^2$ and the threshold interval $\Gamma \subset \mathbb{R}$ are both bounded, the arguments in~\cite[Proposition 3.1]{zhang2024monotone} can be applied to show that, for each fixed threshold $W^p \in \Gamma$, the discrete-time function $\widehat{V}^p(y, b, W^p, t_m)$ is unique, bounded, and continuous on $\Omega_{\myin} \times \{t_m\}$ for each $t_m \in \mathcal{T} \cup {T}$. Continuity may not extend across the boundaries $y = y_{\min}$ and $y = y_{\max}$ due to imposed boundary conditions.

\subsubsection{Numerical schemes and convergence}
\label{ssc:num_scheme}
We first discretize the domain and then apply a numerical scheme for the localized problem in Definition~\ref{def:glwb}. The admissible control set $\mathcal{Z} = [0,1]$ is discretized using $N_u$ nodes, yielding $\{u_i\}_{i=0}^{N_u}$. The threshold domain $\Gamma$ is partitioned into $N_w$ unequally spaced intervals, denoted by $\{W_k\}_{k=0}^{N_w}$.

Without loss of generality, for convenience, we assume that  $|y_{\min}|$ and $y_{\max}$ are chosen sufficiently large with:
$y^{\dagger}_{\min} = y_{\min} - \tfrac{y_{\max} - y_{\min}}{2}$,
and
$y^{\dagger}_{\max} =  y_{\max} + \tfrac{y_{\max} - y_{\min}}{2}$.
We discretize $[y_{\min} , y_{\max}]$ and $[y^{\dagger}_{\min}, y^{\dagger}_{\max}]$
using uniform partitions with $N_y$  and $N_y^\dagger = 2 N_y$ subintervals, respectively.
This yields a single set of $y$-coordinates, denoted by $\{y_n\}_{n = -N_y^\dagger/2}^{N_y^\dagger/2}$, with $\{y_n\}_{n = -N_y/2+1}^{N_y/2-1}$
corresponding to the interior interval $(y_{\min}, y_{\max})$.
We use a nonuniform partition in $b$, consisting of $N_b$ subintervals, denoted by $\{b_j\}_{j = 0}^{N_b}$.

For each fixed $W_k$, we approximate the function $\widehat{V}^p(y_n, b_j, W_k, t_m^\star)$ using a discrete scheme that produces the numerical approximation $\widehat{V}_h^p(y_n, b_j, W_k, t_m^\star)$, where $h$ denotes a unified discretization parameter (i.e.\ $h \to 0$ implies $N_y^\dagger, N_b, N_w, N_u \to \infty$); $(y_n, b_j, W_k, t_m)$ is a reference node; and $t_m^\star \in \{t_m, t_m^+, t_m^-\}$ indicates the evaluation time. The numerical scheme proceeds as follows.

\noindent \textbf{Inner optimization (fixed $\boldsymbol{W_k}$).} This consists of the steps below.
\begin{itemize}[noitemsep, topsep=0pt, leftmargin=*]
\item Terminal condition \eqref{eq:terminal}: we set
\EQ
\label{eq:terminal_sym}
\widehat{V}_h^p(y_n, b_j, W_k, T) = f(e^{y_n} + b_j, W_k).
\EN

\item Interest-settlement update\eqref{eq:interest}: we apply interpolation or extrapolation as needed to compute
\EQ
\label{eq:interest*_sym}
\widehat{V}_h^p(y_n, b_j, W_k, t_{m+1}^-)  =  \widehat{V}_h^p(y_n, b_je^{ r\Delta t}, W_k, t_{m+1}).
\EN

\item Time-advancement via integral \eqref{eq:integral_trunc}:
For numerical implementation, we truncate the infinite series representation of $g(\cdot, \Delta t)$ (see Remark~\ref{rm:Kou}) after $N_g$ terms (typically $N_g = 10\text{--}15$), forming the partial sum $g(y, \Delta t; N_g) = \sum_{\ell = 0}^{N_g} g_\ell(y, \Delta t)$,
where each term $g_\ell$ is non-negative.
This partial sum is used to approximate~\eqref{eq:integral} via a discrete convolution along the $y$-dimension: for each $(y_n, b_j) \in \Omega_{\myin}$, we compute
\EQ
\label{eq:scheme*_sym}
\widehat{V}_h^p(y_n, b_j, W_k, t_m^+)
 = \textstyle  \sum_{l=-N^{\dagger}/2}^{N^{\dagger}/2}~ \varphi_{l}~
~g(y_n - y_l, \Delta t; N_g)~
\widehat{V}_h^p(y_l, b_j, W_k, t_{m+1}^-),
\EN
where $\{\varphi_l\}$ are the composite trapezoidal weights.
As the discretization parameter $h \to 0$, we also let
$N_g \to \infty$ so that $|g - g(\cdot;\,N_g)|\to 0$, thus ensuring no truncation error in the limit. Full details of this truncated representation and its error bounds appear in
Appendix~\ref{app:num_scheme}.
\item
Boundary condition \eqref{eq:boundary}: we enforce
\EQ
\label{eq:bdry_adv_sym}
\Omega_{y_{\min}}:
\widehat{V}_h^p(y_n, b_j, \cdot, t_m^+) = \widehat{V}_h^p(y_n, b_j, \cdot, t_{m+1}^-),
\quad
\Omega_{y_{\max}}:\widehat{V}_h^p(y_n, b_j, \cdot, t_m^+) =  e^{\mu \Delta t} \widehat{V}_h^p(y_n, b_j, \cdot, t_{m+1}^-).
\EN

\item
Rebalancing \eqref{eq:control}–\eqref{eq:intervention}:
we solve the optimization problem by exhaustive search,
interpolating as needed:
\EQ
\label{eq:control_sym}
\widehat{V}_h^p(y_n, b_j, W_k, t_m)
=
\begin{aligned}[t]
&\begin{cases}
\underset{\{u_i\}}{\min}\,~
\widehat{V}_h^p(y_n^+, b_j^+, W_k, t_m^+), &\text{Mean--bPoE},
\\
\underset{\{u_i\}}{\max}\,~
\widehat{V}_h^p(y_n^+, b_j^+, W_k, t_m^+) , & \text{Mean--CVaR},
\end{cases}
\\
&y_n^+=\ln(u_i(e^{y_n}+b_j+q_m)),~ b_j^+=(1-u_i)(e^{y_n}+b_j+q_m)
\text{ as given in~\eqref{eq:sbplus}.}
\end{aligned}
\EN
This step yields the numerically computed optimal rebalancing control $u_{m,h}^{p\ast}(\cdot| W_k)$ at each node $(y_n, b_j, t_m)$.
\end{itemize}

\noindent \textbf{Outer optimization {\boldmath\eqref{eq:vhat_l}}.}
After computing $\widehat{V}_h^p(y, b, W_k, t)$ for each discretized value $W_k$, we perform an exhaustive search over $\{W_k\}$ to obtain the Mean–bPoE and Mean–CVaR results, respectively, as follows:
\EQ
\label{eq:outer_search_sim}
\widehat{V}_{o, h}^p(y_0, b_0, t_0^-) =
\underset{\{W_k\}}{\min}\, \widehat{V}_{o, h}^p(y_0, b_0, W_k, t_0),
\quad \text{ and }\quad
\widehat{V}_{a, h}^p(y_0, b_0, t_0^-)= \underset{\{W_k\}}{\max}\,
 \widehat{V}_{a, h}^p(y_0, b_0, W_k, t_0).
\EN
Finally, the pre-commitment numerical value function at inception
for $(s_0, b_0) \in  \{(s, b) \mid (\ln s, b) \in \Omega_{\myin}\}$ is\footnote{If the model specifies $s_0 = 0$, we approximate it by $e^{y_{-N_y/2 + 1}}$, which corresponds to the first interior $y$-node. As $h \to 0$, we have $e^{y_{-N_y/2 + 1}} \searrow e^{y_{\min}}$. Since $|y_{\min}|$ is chosen sufficiently large, the resulting error in evaluating $V_h^p(0, b_0, t_0^-)$ using $V_h^p(e^{y_{-N_y/2 + 1}}, b_0, t_0^-)$ is negligible.}
\EQ
\label{eq:incept_pre}
V_h^p(s_0,b_0,t_0^-) = \widehat{V}_h^p(y_0,b_0,W_h^{p\ast},t_0^-),
\EN
where $W_h^{p\ast}$ is the computed optimal threshold from \eqref{eq:outer_search_sim}.
The scheme also yields the associated computed optimal control $\mathcal{U}_{0,h}^{p\ast} = \{u_{0,h}^{p\ast}(\cdot|W_h^{p\ast}), \ldots, u_{M-1,h}^{p\ast}(\cdot|W_h^{p\ast})\}$, with $u_{m,h}^{p\ast}(\cdot|W_h^{p\ast})$ \mbox{obtained from \eqref{eq:control_sym}}.

\noindent \textbf{Expectation of $\boldsymbol{W_T}$.} We approximate this as discussed in Remark~\ref{rm:expectation}, using the same grids, and boundary conditions. Time advancement is carried out via discrete convolution, as in \eqref{eq:scheme*_sym}, along with interpolation/extrapolation similar to \eqref{eq:control_sym} and \eqref{eq:interest*_sym} to handle intervention and interest settlement. 

We next state the convergence result for the numerical scheme in the pre-commitment setting.
\begin{theorem}[Pre-commitment scheme convergence]
\label{thm:convergence}
Consider the pre-commitment Mean--bPoE/CVaR problem defined in
Definition~\ref{def:glwb} on the localized domain
$\Omega \times \Gamma \times \{\mathcal{T} \cup \{T\}\}$.
Suppose the threshold domain $\Gamma$ is chosen sufficiently
large to contain the optimal threshold $W^{p\ast}$.
Also suppose linear interpolation is used for the intervention (rebalancing) step.
As the discretization parameter $h \to 0$ (i.e.\ $N_y^\dagger, N_b, N_w, N_u, N_g \to \infty$)
the numerical scheme \eqref{eq:terminal_sym}-\eqref{eq:incept_pre} for $V_h^p(\cdot)$ converges in both the value function and the optimal threshold.
\begin{itemize}[noitemsep, topsep=0pt, leftmargin=*]
\item Value function convergence:
$\displaystyle\lim_{h\to 0}
     \bigl\lvert
       V_h^p(s_0,b_0,t_0^-)
       -
       V^p(s_0,b_0,t_0^-)
     \bigr\rvert =
     0$, for $(s_0, b_0) \in  \{(s, b) \mid (\ln s, b) \in \Omega_{\myin}\}$
\item Threshold convergence: As $h \to 0$, any sequence of computed optimal thresholds $\{W_{h}^{p\ast}\}$ has a subsequence converging to $W^{p,\ast}$.
\end{itemize}
\end{theorem}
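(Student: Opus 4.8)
The plan is to follow a monotone-scheme convergence argument in the spirit of Barles--Souganidis, adapted to the discrete-time convolution recursion. Because the problem is already exact in time between rebalancing dates (the backward step \eqref{eq:integral} is a convolution, not a PDE timestep), convergence reduces to controlling four sources of error as $h \to 0$: the composite-trapezoidal quadrature error in \eqref{eq:scheme*_sym}, the series-truncation error $|g - g(\cdot; N_g)|$ from Remark~\ref{rm:Kou}, the linear-interpolation error in the interest-settlement \eqref{eq:interest*_sym} and rebalancing \eqref{eq:control_sym} steps, and the control-grid discretization error. I would first verify the three structural properties: \emph{stability} (the payoff $f(\cdot, W^p)$ in \eqref{eq:payoff_l} is bounded on $\Omega \times \Gamma$ for $W^p$ bounded away from $D$, and since each kernel term $g_\ell \ge 0$, the trapezoidal weights $\varphi_l > 0$, and the boundary rules \eqref{eq:bdry_adv_sym} only rescale by $1$ or $e^{\mu \Delta t}$, the numerical solution stays uniformly bounded in $h$); \emph{monotonicity} (the convolution is a positive combination of nodal values, $\min/\max$ preserve monotonicity, and linear interpolation is monotone, so the whole update is order-preserving); and \emph{consistency} (each of the four errors above vanishes as $h \to 0$, using $N_g \to \infty$).

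With these in hand, I would fix $W^p \in \Gamma$ and prove $\widehat{V}_h^p(\cdot, W^p, t_m) \to \widehat{V}^p(\cdot, W^p, t_m)$ locally uniformly on $\Omega_{\myin}$ by backward induction on $m$ from $M$ to $0$. The base case is the terminal condition \eqref{eq:terminal_sym}, which is exact at the nodes. For the inductive step, the interest-settlement and convolution steps converge by consistency and the induction hypothesis, while the rebalancing step requires a separate argument: uniform convergence of the integrand together with continuity of $u \mapsto \mathcal{M}(u)\{\widehat{V}^p(\cdot, t_m^+)\}$ and denseness of $\{u_i\}$ in $[0,1]$ imply that the discrete $\arginf$/$\argsup$ in \eqref{eq:control_sym} converges to the continuous optimizer value. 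Standard MDP arguments (cited after Definition~\ref{def:glwb}) guarantee the limit is the unique bounded solution $\widehat{V}^p$.

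To control the outer search I would next upgrade this to convergence that is \emph{uniform in} $W^p$ over a compact subinterval $[D + \delta, W_{\max}^p] \subset \Gamma$ containing the optimizer (such a $\delta > 0$ exists since $W^{p\ast} \in (D, \infty)$ is finite by Lemmas~\ref{lem:infF} and~\ref{lem:CVaR_threshold_existence}). This follows from the explicit, locally Lipschitz dependence of $f(w, W^p)$ on $W^p$ away from $D$, which yields equicontinuity of the family $\{\widehat{V}_h^p(\cdot, W^p, t_0)\}$ in $W^p$; combined with the pointwise-in-$W^p$ convergence just established and an Arzel\`a--Ascoli argument, the convergence becomes uniform in $W^p$. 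Writing $F_h(W^p) = \widehat{V}_h^p(y_0, b_0, W^p, t_0)$ and $F(W^p) = \widehat{V}^p(y_0, b_0, W^p, t_0)$, uniform convergence $F_h \to F$ on the compact set together with the continuity of $F$ (Proposition~\ref{prop:F_continuous}) and denseness of the threshold grid $\{W_k\}$ gives $\min_k F_h(W_k) \to \inf_{W^p} F(W^p)$ (and the analogous $\max$), which is the value-function convergence statement via \eqref{eq:incept_pre}.

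Finally, for threshold convergence, any sequence $\{W_h^{p\ast}\}$ lies in the compact set $\Gamma$, so by Bolzano--Weierstrass it has a convergent subsequence with limit $\overline{W}$; uniform convergence of $F_h$ and continuity of $F$ force $\overline{W}$ to attain $\inf F$ (resp.\ $\sup F$), i.e.\ $\overline{W}$ is an optimal threshold $W^{p\ast}$. Since uniqueness of the optimizer is not assumed (as noted before Lemma~\ref{lem:CVaR_threshold_existence}), only subsequential convergence can be claimed, matching the statement. The main obstacle I anticipate is propagating the uniform-in-$W^p$ convergence through the rebalancing intervention while retaining monotonicity: the post-rebalancing point $(y_n^+, b_j^+)$ in \eqref{eq:control_sym} generally falls off-grid, so the interpolation must be shown both monotone and consistent uniformly in $(W^p, u)$, and the blow-up of the bPoE payoff as $W^p \to D^+$ must be quarantined by the interior bound $W^{p\ast} \ge D + \delta$ before equicontinuity can be invoked.
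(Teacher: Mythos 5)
Your proposal follows the same skeleton as the paper's proof in Appendix~\ref{app:convergence}: establish $\ell_\infty$-stability, monotonicity, and local consistency of the scheme for each fixed threshold, deduce convergence of the inner (fixed-$W^p$) problem by backward induction on $m$, and then handle the outer threshold search by compactness of $\Gamma$, continuity of $F$ (Proposition~\ref{prop:F_continuous}), and denseness of the grid, with threshold convergence obtained subsequentially via Bolzano--Weierstrass. Where you genuinely diverge is the outer step. The paper never upgrades to uniform-in-$W^p$ convergence: it works only with the per-threshold bound \eqref{eq:con_inner} (whose error $\chi_h$ is implicitly taken to be independent of which $W_h \in \Gamma^h$ is chosen) and closes the argument with a triangle inequality plus a $\limsup$/$\liminf$ sandwich \eqref{eq:limsup} to prove \eqref{eq:conv_v}. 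You instead make the uniformity explicit: equi-Lipschitz dependence of the backward recursion on $W^p$ away from $D$, then Arzel\`a--Ascoli to convert pointwise into uniform convergence on $[D+\delta,\,W_{\max}^p]$. Your route is heavier but more honest about where uniformity in the threshold comes from, and it correctly isolates the bPoE singularity at $W^p = D$, which the paper's proof does not address (its truncated domain $\Gamma_o = [D,\,W_{\max}^p]$ even contains $D$, where the payoff \eqref{eq:payoff_l} is unbounded, yet \eqref{eq:con_inner} is invoked at every grid threshold in the $\liminf$ step). One caveat remains on your side as well: the discrete outer search \eqref{eq:outer_search_sim} ranges over all of $\Gamma^h$, including thresholds in $[D,\,D+\delta)$ where your uniform convergence does not apply, so you still need a lower bound on $F_h$ near $D$ that is uniform in $h$ (e.g.\ a discrete analogue of the blow-up argument in Lemma~\ref{lem:infF}) to rule out the computed minimizer collapsing into that region; flagging this obstacle, as you do, is not yet resolving it.
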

\noindent A detailed proof is given in Appendix~\ref{app:convergence}.

\subsection{Time-consistent Mean--bPoE and Mean--CVaR}
In contrast to the pre-commitment problems in Subsection~\ref{ssc:pre},
where the optimal threshold $W^p$ is obtained via  an outer optimization
at time $t_0$, time-consistent formulations
re-optimize the threshold $W^c\in \{W_o^c, W_a^c\} $ at each rebalancing time $t_m \in \mathcal{T}$.
Specifically, we use an embedding technique, lifting the state space to
\mbox{$\widehat{X}_t \;=\; (Y_t,\,B_t,\,W^c)$} where $Y_t=\ln(S_t)$ and $W^c$ remains
a decision variable that can be re-optimized at each rebalancing time.
We then define an auxiliary function $\widehat{V}^c(\cdot)$ by
\begin{eqnarray}
\widehat{V}^c\bigl(y,b,W^c,t_m^-\bigr)
&\coloneqq&
\Ebb_{\mathcal{U}_m}^{\,\widehat{X}_m^+,\,t_m^+}
\Bigl[
  {\sifin{f\bigl(e^{Y_T}+B_T,\;W^c\bigr)}}
  \;\Big|\;
  \widehat{X}_m^-=(y,b,W^c)
\Bigr].
\label{eq: PCMC_l}
\end{eqnarray}
The terminal condition at time $t = T$ for each feasible threshold value $W^c$ is given by
\EQ
\label{eq:terminal_c}
\widehat{V}^c\left(y,b, W^c, T\right) = f(e^y+b, W^c),
\quad \text{ $f(\cdot)$ is defined in \eqref{eq:payoff_l}}.
\EN
The interest accrued over $[t_m,\,t_{m+1}]$ is settled during $[t_{m+1}^-,\,t_{m+1}]$ for all feasible threshold values $W^c$:
\begin{equation}
\label{eq:interest_c}
\widehat{V}^c\bigl(y,b,W^c,t_{m+1}^-\bigr)
\;=\;
\widehat{V}^c\!\bigl(y,\;b\,e^{r\,\Delta t},\;W^c,\;t_{m+1}\bigr).
\end{equation}
Over $[t_m^+,\,t_{m+1}^-]$, the backward recursion for each feasible value of $W^c$ is given by
the integral as in \eqref{eq:integral}
\begin{equation}
\label{eq:integral_c}
\widehat{V}^c\bigl(y,b,W^c,t_m^+\bigr)
\;=\;
\int_{-\infty}^{\infty}
\widehat{V}^c\bigl(y',\,b,\,W^c,\,t_{m+1}^-\bigr)\,g\bigl(y-y',\Delta t\bigr)\,dy'.
\end{equation}
{\sifin{Unlike the pre-commitment case, in the time-consistent formulation we must,
at each rebalancing time $t_m$, re-optimize both the rebalancing
control $u_m$ and the threshold $W^c$ at the state $(y,b)$.
Let $(u_m^{c\ast}(y,b), W_m^{c\ast}(y,b))$ denote an optimal pair.
Then the update from $t_m^{+}$ to $t_m$ for the function $\widehat{V}^c(\cdot)$ is
}}
{\sifin{
\begin{equation}
\label{eq:intervention_c}
\widehat{V}^c\bigl(y,b,W^c,t_m\bigr)
=
\mathcal{M}\bigl(u_m^{c\ast}(y,b)\bigr)\,
\Big\{\widehat{V}^c\bigl(y,b,W^c,t_m^+\bigr)\Big\}.
\end{equation}
}}
%
{\sifin{The associated optimal rebalancing control and threshold
$(u_m^{c\ast}(y,b), W_m^{c\ast}(y,b))$ are given by
\begin{equation}
\label{eq:control_c}
\begin{cases}
\displaystyle
\bigl(u_m^{c\ast}(y,b),\,W_m^{c\ast}(y,b)\bigr)
\in
\underset{\substack{u_m \in \mathcal{Z}\\[1pt] W_o^c > D}}{\arginf}
\Bigl\{
  \mathcal{M}(u_m)\,\widehat{V}_o^c\bigl(y,b,W_o^c,t_m^+\bigr)
\Bigr\},
& \text{(Mean--bPoE)},\\[10pt]
\displaystyle
\bigl(u_m^{c\ast}(y,b),\,W_m^{c\ast}(y,b)\bigr)
\in
\underset{\substack{u_m \in \mathcal{Z}\\[1pt] W_a^c \ge 0}}{\argsup}
\Bigl\{
  \mathcal{M}(u_m)\,\widehat{V}_a^c\bigl(y,b,W_a^c,t_m^+\bigr)
\Bigr\},
& \text{(Mean--CVaR)}.
\end{cases}
\end{equation}
Here, $\mathcal{M}(u_m)$ is the intervention operator defined in
\eqref{eq:Operator_M}.}}

Finally, at the initial time $t_0^-$, we set  the Mean-bPoE and Mean-CVaR value functions
respectively as
\begin{equation}
\label{eq:vhat_l_c}
{\purple{\widehat{V}_o^c\bigl(y_0,b_0,t_0^-\bigr) = \inf_{\,W_o^c > D}
\,\widehat{V}^c\bigl(y_0,b_0, W_o^c, t_0 \bigr),
\quad\text{and} \quad
\widehat{V}_a^c\bigl(y_0,b_0,t_0^-\bigr) = \sup_{\,W_a^p \ge 0}
\,\widehat{V}^c\bigl(y_0,b_0, W_a^c, t_0\bigr).}}
\end{equation}
\begin{proposition}
\label{prop:equiv_lifted_timeconsistency}
The formulation \eqref{eq:terminal_c}--\eqref{eq:vhat_l_c} is equivalent, under the log transformation $y = \ln(s)$,
to (i) $TCMo_{t_0}(D,\gamma_o)$ in
\eqref{eq: TCMb}--\eqref{eq: TCMb constraints} when $W^c = W_o^c$,
and (ii) $TCMa_{t_0}(\alpha,\gamma_a)$ in \eqref{eq: TCMC}--\eqref{eq: TCMC constraints}
when $W^c = W_a^c$.
\end{proposition}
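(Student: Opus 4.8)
The plan is to prove the equivalence by backward induction on the rebalancing index $m$, showing that the embedded recursion \eqref{eq:terminal_c}--\eqref{eq:intervention_c} reproduces the dynamic-programming characterization of the time-consistent problems $TCMo_{t_m}$ and $TCMa_{t_m}$. The dynamical building blocks are shared with the precommitment setting: under $y=\ln(s)$, the interest-settlement map $b\mapsto b\,e^{r\Delta t}$ in \eqref{eq:interest_c} reproduces \eqref{eq: dBt}, and the convolution \eqref{eq:integral_c} against the transition density $g(y-y',\Delta t)$ reproduces the log-dynamics of \eqref{eq: dSt} over a single rebalancing interval. These are verbatim translations of the steps in Proposition~\ref{prop:equiv_lifted_precommitment}, so I would invoke that substitution argument for them and concentrate on the feature unique to the time-consistent setting, namely the stagewise re-optimization of the threshold.

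The induction hypothesis at level $m+1$ is that the future controls $\mathcal{U}_{m+1}^{c\ast}$ are already fixed and that, for every admissible threshold $W^c$, the embedded function $\widehat{V}^c(y,b,W^c,t_{m+1}^-)$ equals $\Ebb_{\mathcal{U}_{m+1}^{c\ast}}[f(e^{Y_T}+B_T,W^c)\mid \widehat{X}_{m+1}^-=(y,b,W^c)]$, i.e.\ the expected terminal payoff carrying threshold $W^c$ under the frozen future policy. The base case $m=M-1$ holds because $t_M=T$ carries no rebalancing, so $\widehat{V}^c(y,b,W^c,t_{M-1}^+)$ is precisely that conditional expectation given the post-rebalancing state, matching the inner objective of \eqref{eq: TCMb} (resp.\ \eqref{eq: TCMC}). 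For the inductive step I would transport $\widehat{V}^c(\cdot,W^c,t_{m+1}^-)$ through the interest and convolution steps---carrying $W^c$ through unchanged, exactly as in the precommitment case---to obtain $\widehat{V}^c(y,b,W^c,t_m^+)$, and then execute the two nested extremizations of \eqref{eq:intervention_c}--\eqref{eq:control_c}: the inner $\inf_{u_m}$ (resp.\ $\sup_{u_m}$) defining $\widehat{U}^c$, followed by the outer optimization over the threshold. Since the future controls are frozen by the hypothesis, optimizing only over $u_m$ is exactly what the time-consistency constraints \eqref{eq: TCMb time-consistent constraint} and \eqref{eq: TCMC time-consistent constraint} prescribe, while the outer threshold extremization reproduces the outer $\inf_{W_o^c>D}$ (resp.\ $\sup_{W_a^c\ge 0}$) appearing in \eqref{eq: TCMb} (resp.\ \eqref{eq: TCMC}). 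This identifies $V_o^c(s,b,t_m^-)$ (resp.\ $V_a^c$) with the re-optimized value and \eqref{eq:control_c} with the optimizing pair $(u_m^{c\ast},W_m^{c\ast})$; the Mean--CVaR branch follows by the symmetric replacement of each $\inf$ by $\sup$ together with the corresponding branch of $f$ in \eqref{eq:payoff_l}. Carrying the recursion to $m=0$ and reading off \eqref{eq:vhat_l_c} completes the identification.

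The step I expect to be the main obstacle is the bookkeeping of the threshold state variable through the recursion. In the precommitment case a single threshold is frozen over $[t_0,T]$, whereas here it is re-optimized at every $t_m$, so I must keep careful track of two distinct roles: the threshold $W^c$ that parametrizes the terminal payoff and is transported unchanged through \eqref{eq:interest_c}--\eqref{eq:integral_c}, versus the re-optimized threshold $W_m^{c\ast}(y,b)$ from \eqref{eq:control_c} that, paired with $u_m^{c\ast}(y,b)$, constitutes the time-consistent control at $t_m$. The delicate point is that, to assemble the function $\widehat{V}^c(\cdot,W^c,t_m^-)$ required by the next backward stage, one must apply the single time-consistent control $u_m^{c\ast}(y,b)$ uniformly across all payoff thresholds $W^c$---not the threshold-dependent inner minimizer---so that the frozen future policy in \eqref{eq: TCMb time-consistent constraint}/\eqref{eq: TCMC time-consistent constraint} is faithfully represented by the embedding. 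I would make this explicit by tracking the optimal control map $u_m^{c\ast}(\cdot)$ separately from the value, exactly as recorded in \eqref{eq:control_c}, rather than relying on the re-optimized value function alone.
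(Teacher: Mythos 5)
Your proposal is correct and takes essentially the same route as the paper, whose entire proof is the one-line remark that the result ``follows directly by substitution and application of the backward recursion''---your backward induction on $m$ is exactly that argument written out. The bookkeeping point you flag---that the next backward stage must propagate $\mathcal{M}(u_m^{c\ast})\,\widehat{V}^c\bigl(y,b,W^c,t_m^+\bigr)$ for \emph{every} payoff threshold $W^c$ under the single frozen control $u_m^{c\ast}(y,b)$, rather than the jointly re-optimized value in \eqref{eq:intervention_c}, whose right-hand side is constant in the threshold argument $W^c$---is resolved correctly and is in fact more careful than the paper's own statement of the recursion.
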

\noindent The proof of the proposition follows directly by substitution and application of the backward recursion.
\subsubsection{Localization and problem definition}
We adopt the same localized spatial domain $\Omega \subset \mathbb{R}^2$ and threshold domain $\Gamma \subset \mathbb{R}$ as in the pre-commitment setting (see Subsection~\ref{ssc:num_pro}). Specifically,
$\Omega$ is a finite region in the $(y,b)$-plane, with interior sub-domain $\Omega_{\myin}$ and boundary regions $\Omega_{y_{\min}}, \Omega_{y_{\max}}, \Omega_{b_{\max}}$;
$\Gamma \subset \mathbb{R}$ is chosen sufficiently large to contain optimal time-consistent threshold values $W^{c\ast}$.
On each boundary region in $\{\Omega_{y_{\min}}, \Omega_{y_{\max}}, \Omega_{b_{\max}}\}$, we impose the same conditions as specified in~\eqref{eq:boundary}–\eqref{eq:bmax} in Subsection~\ref{ssc:num_pro}. In particular, boundary conditions in $y$ are handled via approximate asymptotic conditions, while those in $b$ are treated by extrapolation.
We now define the time-consistent localized problem.
\begin{defn}[Localized time-consistent  formulations]
\label{def:glwb_c}
The value function of the time-consistent
\\
Mean–bPoE/CVaR problem at time $t_m \in \mathcal{T}$ and state $(s, b)$ is given by $V^c(s, b, t_m) = \widehat{V}^c(y, b, W_m^{c\ast}, t_m)$, where $y = \ln(s)$, and $\{W_m^{c\ast}, \ldots, W_{M-1}^{c\ast}\}$ is the sequence of optimal thresholds from $t_m$ onward, determined via~\eqref{eq:control_c}.
The associated optimal control is  $\mathcal{U}_m^{c\ast} = \{u_m^{c\ast}, \ldots, u_{M-1}^{c\ast}\}$.

The function $\widehat{V}^c(y, b, W^c, t)$ is defined on  $\Omega \times \Gamma \times \{\mathcal{T} \cup \{t_M = T\}\}$ as follows.
At each $t_{m} \in \mathcal{T}$, $\widehat{V}^c(\cdot, W^c, t_m)$ is given by
the rebalancing/threshold optimization \eqref{eq:intervention_c}--\eqref{eq:control_c},
where $\widehat{V}^c(\cdot, W^c, t_m^+)$ satisfies:
(i) the  integral \eqref{eq:integral_c}on $\Omega_{\myin}\times\{t_m^+\}$,
(ii) the boundary conditions \eqref{eq:boundary} and \eqref{eq:bmax} on
\mbox{$\{\Omega_{y_{\min}}, \Omega_{y_{\max}}, \Omega_{b_{\max}}\} \times\{t_{m}^+\}$}.
In addition, $\widehat{V}^c(\cdot, W^c, t)$ satisfies the terminal condition~\eqref{eq:terminal_c} on $\Omega \times \{t_M = T\}$ and the interest-settlement update~\eqref{eq:interest_c} on $\Omega \times \{t_{m+1}^-\}$.
\end{defn}
\subsubsection{Numerical schemes and convergence}
We now present the numerical scheme for approximating the function $\widehat{V}^c(y, b, W^c, t)$ from Definition~\ref{def:glwb_c}. The scheme uses the same domain discretization and convolution technique as in the pre-commitment case (Subsection~\ref{ssc:num_pro}), with the key distinction that both $u_m$ and $W^c$ are re-optimized at each $t_m\in \mathcal{T}$.

The localized domain $\Omega = [y_{\min}^\dagger, y_{\max}^\dagger] \times [0, b_{\max}]$ is discretized into a grid $\{(y_n, b_j)\}$, with interior sub-domain $\Omega_{\myin}$ and boundary regions $\{\Omega_{y_{\min}}, \Omega_{y_{\max}}, \Omega_{b_{\max}}\}$, where boundary conditions remain as in~\eqref{eq:boundary}–\eqref{eq:bmax}. The threshold domain $\Gamma \subset \mathbb{R}$ is discretized into $\{W_k\}_{k=0}^{N_w}$, identical to the pre-commitment case. The control set $\mathcal{Z} = [0,1]$ is likewise discretized to $\{u_i\}_{i=0}^{N_u}$.
For each fixed $(y_n,b_j,W_k,t_m)$, the exact function $\widehat{V}^c$ is approximated by a discrete solution $\widehat{V}_h^c(y_n,b_j,W_k,t_m^\star)$, where $t_m^\star\in \{t_m, t_m^+, t_m^-\}$ indicates the \mbox{evaluation time}.

\begin{itemize}[noitemsep, topsep=0pt, leftmargin=*]
\item  Terminal condition \eqref{eq:terminal_c}: we set
\EQ
\label{eq:terminal_c_sym}
\widehat{V}_h^c(y_n, b_j, W_k, T) = f\bigl(e^{y_n}+b_j,\;W_k\bigr).
\EN

\item Interest settlement \eqref{eq:interest_c}: is handled by same step  \eqref{eq:interest*_sym} in the pre-commitment scheme, via
interpolation
\EQ
\label{eq:interest*_c_sym}
\widehat{V}_h^c(y_n, b_j, W_k, t_{m+1}^-)  =  \widehat{V}_h^c(y_n, b_je^{ r\Delta t}, W_k, t_{m+1}).
\EN

\item Time-advancement via integral \eqref{eq:integral_c}: we use the same discrete convolution approach in \eqref{eq:scheme*_sym}:
\EQ
\label{eq:scheme*_c_sym}
\widehat{V}_h^c(y_n,b_j,W_k,t_m^+) =
     \sum_{l=-N^\dagger/2}^{\,N^\dagger/2}
        \varphi_l\,\,
        g\bigl(y_n - y_l,\,\Delta t; N_g\bigr)\,\,
        \widehat{V}_h^c\bigl(y_l,\,b_j,\,W_k,\,t_{m+1}^-\bigr).
\EN

\item  On the boundary regions $\{\Omega_{y_{\min}},\,\Omega_{y_{\max}},\,\Omega_{b_{\max}}\}\times \Gamma\times \{t_m^+\}$, we impose the same asymptotic or extrapolation conditions from
    \eqref{eq:bdry_adv_sym} and \eqref{eq:bmax}.

\item {\sifin{Rebalancing/threshold re-optimization \eqref{eq:intervention_c}--\eqref{eq:control_c}:
at each node $(y_n,b_j,t_m)$, both $u_m$ and $W^c$ must be re-optimized.
Numerically, we search over the discrete control grid $\{u_i\}_{i=0}^{N_u}$
and threshold grid $\{W_k\}_{k=0}^{N_w}$ and select the pair that optimize
(minimizes for bPoE, maximizes for CVaR) the discrete function $\widehat{V}_{h}^c(\cdot)$ at $t_m^+$:
\begin{equation}
\label{eq:outer_search_sim_c_sym}
\begin{cases}
\displaystyle
\bigl(u_{m,h}^{c\ast}(y_n,b_j),\,W_{m,h}^{c\ast}(y_n,b_j)\bigr)
\in
\underset{\substack{u_i \in \{u_0,\dots,u_{N_u}\}\\[1pt]
                    W_\iota \in \{W_0,\dots,W_{N_w}\}}}{\arg\min}
\;
\widehat{V}_{o,h}^c\bigl(y_n^+(u_i), b_j^+(u_i), W_\iota, t_m^+\bigr),
& \text{Mean--bPoE},\\[10pt]
\displaystyle
\bigl(u_{m,h}^{c\ast}(y_n,b_j),\,W_{m,h}^{c\ast}(y_n,b_j)\bigr)
\in
\underset{\substack{u_i \in \{u_0,\dots,u_{N_u}\}\\[1pt]
                    W_\iota \in \{W_0,\dots,W_{N_w}\}}}{\arg\max}
\;
\widehat{V}_{a,h}^c\bigl(y_n^+(u_i), b_j^+(u_i), W_\iota, t_m^+\bigr),
& \text{Mean--CVaR}.
\end{cases}
\end{equation}
Here, $y_n^+(u_i)=\ln\bigl(u_i(e^{y_n}+b_j+q_m)\bigr)$ and
$b_j^+(u_i)=(1-u_i)(e^{y_n}+b_j+q_m)$, as in \eqref{eq:sbplus}.
This step yields a numerically computed optimal pair
$(u_{m,h}^{c\ast}, W_{m,h}^{c\ast})$ at each node $(y_n, b_j, t_m)$:
$u_{m,h}^{c\ast} = u_{m,h}^{c\ast}(w_{n, j})$ and
$W_{m,h}^{c\ast} = W_{m,h}^{c\ast}(y_n,b_j)$, where
$w_{n, j} = e^{y_n} + b_j + q_m$ is the total wealth after cash
injection at this node.

Given the optimal control $u_{m,h}^{c\ast}(y_n,b_j)$ obtained from
\eqref{eq:outer_search_sim_c_sym}, the discrete analogue of the
intervention update \eqref{eq:intervention_c} (from $t_m^{+}$ to $t_m$)
is then applied for each threshold grid point $W_k$:
\begin{equation}
\label{eq:whVhc}
\widehat{V}_h^c(y_n,b_j,W_k,t_m)
=
\widehat{V}_h^c\bigl(
  y_n^+\bigl(u_{m,h}^{c\ast}(y_n,b_j)\bigr),
  b_j^+\bigl(u_{m,h}^{c\ast}(y_n,b_j)\bigr),
  W_k,
  t_m^+
\bigr),
\quad k = 0,\ldots,N_w.
\end{equation}}}
%

\item After \eqref{eq:outer_search_sim_c_sym}-\eqref{eq:whVhc} are completed, the time-consistent numerical value function
$V_h^c(s_n,b_j,t_m)$, with $s_n = e^{y_n}$, is given by
{\sifin{
\begin{equation}
\label{eq:incept_pre_c_sym}
V_h^c(s_n,b_j,t_m)
  = \widehat{V}_h^c\bigl(y_n,b_j,W_{m,h}^{c\ast}(y_n,b_j),t_m\bigr),
  \qquad s_n = e^{y_n}.
\end{equation}
Here,  $W_{m,h}^{c\ast}(y_n,b_j)$ is the locally optimal threshold at node
$(y_n,b_j,t_m)$, obtained from the discrete optimization
\eqref{eq:outer_search_sim_c_sym}.}}



\item Once the optimal pair $(u_{m,h}^{c\ast}, W_{m,h}^{c\ast})$ is computed at each node $(y_n, b_j, t_m)$, we approximate $\mathbb{E}[W_T]$ as in the pre-commitment case (see Remark~\ref{rm:expectation}).
\end{itemize}
Let $\Omega^h$ be the computational grid parameterized by $h$, with $\Omega^h \to \Omega$ as $h \to 0$, and let $\Omega_{\myin}^h$ denote the interior subgrid. To map the log-domain $\Omega_{\myin}$ to the original $(s, b)$ coordinates, define
$\widetilde{\Omega}_{\myin} := \{(s, b) \mid (\ln s, b) \in \Omega_{\myin}\}$,
and similarly define the discrete version $\widetilde{\Omega}_{\myin}^h$.
We define the exact time-consistent optimal threshold in original coordinates by $\widetilde{W}_m^{c\ast}(s, b) := W_m^{c\ast}(y, b)$, where $y = \ln(s)$ and $W_m^{c\ast}(y, b)$ is the exact threshold on the $(y, b)$ grid. The corresponding discrete optimal threshold is denoted by $\widetilde{W}_{m,h}^{c\ast}(s_h, b_h)$.

We now state the convergence result in these original variables.
\begin{theorem}[Time-consistent scheme convergence]
\label{thm:convergence_tc}
Consider the time-consistent Mean--bPoE/CVaR problem defined in
Definition~\ref{def:glwb_c}.
Suppose the threshold domain $\Gamma$ is chosen sufficiently large to contain
all optimal time-consistent thresholds.
Also suppose linear interpolation is used for the intervention (rebalancing) step.
As the discretization parameter $h \to 0$ (i.e.\ $N_y^\dagger, N_b, N_u, N_w, N_g \to \infty$), the numerical scheme for $V_h^c(\cdot)$, defined
 in~\eqref{eq:terminal_c_sym}–\eqref{eq:incept_pre_c_sym} using a log-transformation
 on $s$, converges in both value function and optimal thresholds.
\begin{itemize}[noitemsep, topsep=0pt, leftmargin=*]
\item
Value function convergence: For any fixed $(s', b', t_m) \in \widetilde{\Omega}_{\myin} \times \mathcal{T}$,
\EQ
\label{eq:time_consistent_v_conv}
\lim_{\substack{h \to 0 \\ (s_h, b_h) \to (s', b')}}
\bigl| V_h^c(s_h, b_h, t_m) - V^c(s', b', t_m) \bigr| = 0,
\quad \text{where } (s_h, b_h) \in \widetilde{\Omega}_{\myin}^h \text{ for each } h.
\EN
\item Threshold convergence: For any fixed $(s', b', t_m) \in \widetilde{\Omega}_{\myin} \times \mathcal{T}$, let $\widetilde{W}_m^{c\ast}(s', b')$ be an associated optimal threshold.

    Then, for any sequence $\{(s_h, b_h)\}$ such that $(s_h, b_h) \in \widetilde{\Omega}_{\myin}^h$ for each $h$, and $(s_h, b_h) \underset{h \to 0}{\rightarrow} (s', b')$, the corresponding computed thresholds $\{\widetilde{W}_{m,h}^{c\ast}(s_h, b_h)\}$ have a subsequence converging to $\widetilde{W}_m^{c\ast}(s', b')$.
\end{itemize}
\end{theorem}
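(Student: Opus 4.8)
The plan is to establish Theorem~\ref{thm:convergence_tc} by backward induction on the rebalancing index $m$, adapting the argument behind the precommitment result (Theorem~\ref{thm:convergence}) and inserting the treatment of the threshold re-optimisation that now takes place at every $t_m\in\mathcal{T}$ rather than only at $t_0$. The ingredients shared with the precommitment scheme are re-verified directly: the discrete convolution \eqref{eq:scheme*_c_sym} converges to the integral \eqref{eq:integral_c} since the composite trapezoidal weights $\varphi_l\ge 0$ and the truncated kernel obeys $g(\cdot;N_g)\to g$ as $N_g\to\infty$ with the nonnegative terms of Remark~\ref{rm:Kou}, the domain-truncation error being negligible for $\Omega$ large; the interest-settlement interpolation \eqref{eq:interest*_c_sym} and the intervention evaluations through $\mathcal{M}(u_m)$ incur interpolation errors that vanish as $h\to 0$; the boundary prescriptions \eqref{eq:bdry_adv_sym}--\eqref{eq:bmax} are negligible on $\Omega_{\myin}$; and the whole scheme is monotone, because nonnegative convolution weights, linear (hence monotone) interpolation, and the $\min$/$\max$ operations all preserve ordering. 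Monotonicity combined with boundedness of the payoff $f(\cdot,W^c)$ over the compact threshold set $\Gamma$ yields uniform $L^\infty$ stability of $\widehat{V}_h^c$, independent of $h$.

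First I would fix the induction hypothesis at $t_{m+1}$: for every grid sequence $(y_h,b_h,W_{k(h)})\to(y',b',W')$ in $\Omega_{\myin}\times\Gamma$ one has $\widehat{V}_h^c(y_h,b_h,W_{k(h)},t_{m+1})\to\widehat{V}^c(y',b',W',t_{m+1})$, together with the uniform bounds above. Carrying this to $t_m$, the shared ingredients give convergence of $\widehat{V}_h^c(\cdot,t_m^+)$ to $\widehat{V}^c(\cdot,t_m^+)$ along such sequences. The decisive new point is the nested optimisation \eqref{eq:outer_search_sim_c_sym}: I would show that the discrete nested $\min$--$\min$ (respectively $\max$--$\max$) over $\{u_i\}\times\{W_k\}$ converges to the continuous nested $\inf$--$\inf$ (respectively $\sup$--$\sup$) over $\mathcal{Z}\times\Gamma$. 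The crucial fact is that the objective $(u,W^c)\mapsto\mathcal{M}(u)\{\widehat{V}^c(y',b',W^c,t_m^+)\}$ is continuous: the post-rebalancing map $u\mapsto(y^+,b^+)$ in \eqref{eq:sbplus} is continuous, the payoff is continuous in $W^c$, and $\widehat{V}^c(\cdot,t_m^+)$ is continuous in $(y,b,W^c)$ because the convolution against the smooth Kou density regularises in $y$. Since $\mathcal{Z}=[0,1]$ and $\Gamma$ are compact and the grids become dense as $h\to 0$, a two-sided comparison applies: the discrete optimum is bounded on one side by evaluating the scheme at grid points approaching a continuous optimiser, and on the other by extracting a convergent subsequence of discrete optimisers and invoking continuity of the objective. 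This yields the value convergence \eqref{eq:time_consistent_v_conv} at $t_m$ and closes the induction, the base case $t_M=T$ being immediate since \eqref{eq:terminal_c_sym} is exact on the grid.

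For the threshold claim, fix $(s',b',t_m)$ and an admissible sequence $(s_h,b_h)\to(s',b')$. The computed optimisers $\widetilde{W}_{m,h}^{c\ast}(s_h,b_h)$ all lie in the compact set $\Gamma$, so by Bolzano--Weierstrass a subsequence converges to some $\overline{W}\in\Gamma$; the value convergence just proved, continuity of the objective, and the defining optimality then force $\overline{W}$ to attain the exact optimum, i.e.\ $\overline{W}=\widetilde{W}_m^{c\ast}(s',b')$, which is precisely the asserted subsequential convergence.

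The step I expect to be the main obstacle is maintaining enough regularity to propagate the induction through the threshold re-optimisation, which has no analogue in the precommitment proof. There the threshold is frozen over $[0,T]$; here the locally optimal control $u_m^{c\ast}(y,b)$---selected using the locally optimal threshold $W_m^{c\ast}(y,b)$---is then applied to every threshold slice when the auxiliary function is carried back to $t_m^-$, so at states where the set of optimal thresholds is not a singleton the propagated function $\widehat{V}^c(\cdot,W^c,t_m^-)$ may fail to be continuous in $b$ and in off-optimal $W^c$ (convolution regularises only in $y$, while $b$ is frozen between rebalancing dates). I would control this by observing that such tie sets have Lebesgue measure zero in $(y,b)$ under the piecewise-smooth structure, so the subsequent convolution in $y$ restores the continuity of $\widehat{V}^c(\cdot,t_{m-1}^+)$ required by the next optimisation, and the trapezoidal quadrature still converges because its integrand is bounded and continuous almost everywhere. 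Where a cleaner route is preferred, the same conclusion follows from the monotone-scheme (Barles--Souganidis-type) viewpoint, in which the limit is characterised through upper and lower semicontinuous envelopes together with a comparison principle for the discrete dynamic-programming recursion, thereby bypassing pointwise continuity of the intermediate optimal controls altogether.
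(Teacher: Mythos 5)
Your core skeleton coincides with the paper's proof in Appendix~\ref{app:convergence_tc}: establish $\ell_\infty$-stability, local consistency and monotonicity of the scheme by reusing the precommitment machinery, pass to the limit by a backward-induction argument, handle the nested threshold/control optimization via compactness of $\mathcal{Z}\times\Gamma$ and denseness of the grids through a two-sided $\limsup$/$\liminf$ comparison, and obtain threshold convergence by Bolzano--Weierstrass plus continuity of the limiting objective. The only organizational difference is that the paper first proves a convergence bound for each \emph{fixed} discrete threshold slice (its bound \eqref{eq:bound_c}), and then treats the re-optimization over $W^c$ as a nodewise ``outer'' problem, literally transplanting the outer-optimization argument of the precommitment proof (Theorem~\ref{thm:convergence}) to every grid node, whereas you fold the threshold into a joint induction hypothesis and prove convergence of the nested discrete optimization directly. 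Both organizations are sound and essentially interchangeable.

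The substantive divergence is your final paragraph. The obstacle you describe---that the locally optimal control, selected at the locally optimal threshold, is ``applied to every threshold slice'' when carried back to $t_m^-$, potentially breaking continuity in $b$ and in off-optimal $W^c$ at tie states---does not occur in the recursion the theorem is actually about. In Definition~\ref{def:glwb_c} and the scheme \eqref{eq:outer_search_sim_c_sym}, the rebalancing-date update replaces $\widehat{V}^c(\cdot,t_m)$ by the full nested $\inf$/$\sup$ of \eqref{eq:intervention_c}; no minimizing selector is ever substituted into the threshold slices that are propagated backward. Since an $\inf$/$\sup$ over compact sets of a continuous objective is again continuous and order-preserving, the paper disposes of the issue in one line---the threshold $\min$/$\max$ is a monotone transformation of real values---so the precommitment stability, consistency and monotonicity proofs carry over verbatim. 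Your patch is therefore unnecessary for the stated theorem, and as sketched it would not survive scrutiny anyway: you give no argument that tie sets are Lebesgue-null (minimizer ties of a parametric family need not be), the convolution smooths only in $y$ while the feared discontinuity also lives in $b$ and $W^c$ where nothing regularizes, and the Barles--Souganidis envelope route requires a comparison principle that neither you nor the paper establishes. What you have in mind is the Forsyth-style recursion of \cite{PA2020a}, in which the optimal control is swept across all threshold slices; that is arguably the more faithful discretization of the time-consistent problem of Section~\ref{sc:time_consistent}, but it is a different object from the one in Definition~\ref{def:glwb_c}, and proving convergence for it would require you to substantiate, not assert, the regularity claims. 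For the theorem as stated, drop the paragraph and align your induction with the paper's recursion.
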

\noindent A detailed proof is given in Appendix~\ref{app:convergence_tc}.

\section{Numerical results}
\label{sc:num}
\subsection{Empirical data and calibration}
To calibrate the parameters specified in dynamics~\eqref{eq: dSt} and~\eqref{eq: dBt}, we employ the same data sources and calibration techniques as described in \cite{DM2016semi, ForsythVetzal2016, PMVS2021c}. The risky asset data is based on daily total return series of the VWD index from the Center for Research in Security Prices (CRSP), covering the period 1926:1 - 2014:12.\footnote{The results presented here were calculated based on data from Historical Indexes, \copyright 2015 Center for Research in Security Prices (CRSP), The University of Chicago Booth School of Business. Wharton Research Data Services was used in preparing this article. This service and the data available thereon constitute valuable intellectual property and trade secrets of WRDS and/or its third-party suppliers.} This is a capitalization-weighted index of all domestic stocks on major US exchanges, including dividends and other distributions in the total return. The risk-free asset is represented by 3-month US T-bill rates covering the period 1934:1-2014:12.\footnote{See \texttt{http://research.stlouisfed.org/fred2/series/TB3MS}.} To account for the impact of the 1929 crash, we supplement this data with short-term government bond yields from the National Bureau of Economic Research (NBER) over the period 1926:1 - 1933:12.\footnote{See \texttt{http://www.nber.org/databases/macrohistory/contents/chapter13.html}.} To ensure all parameters correspond to their inflation-adjusted counterparts, the annual average CPI-U index (inflation for urban consumers) from the US Bureau of Labor Statistics is used to adjust the time series for inflation.\footnote{CPI data from the U.S.\ Bureau of Labor Statistics. Specifically, we use the annual average of the all urban consumers (CPI-U)index. See \texttt{http://www.bls.gov/cpi}.} The resulting calibrated parameters are provided in Table~\ref{tab: parameters for Kou3}.

\noindent
\begin{table}[!hbt]
\vspace*{-0.4cm}
\caption{Calibrated parameters for asset dynamics. {\sifin{Sample period 1926:1 to 2014:12.}}}
\label{tab: parameters for Kou3}
\vspace*{-0.2cm}
\centering{}
\begin{tabular}{c c c c c c c }
\hline
$\mu$ & $\sigma$ & $\lambda$ & $p_{up}$ & $\eta_{1}$ & $\eta_{2}$ & $r$ \\
\hline
0.0874 & 0.1452 & 0.3483 & 0.2903 & 4.7941 & 5.4349 & 0.00623\\
\hline
\end{tabular}
\end{table}


To illustrate the accumulation phase of a DC plan, we consider a 35-year-old investor with an annual salary of \$100,000 and the total contribution to the plan account is 20\% of the salary each year. This investor plans to retire at age 65, yielding a 30-year money saving horizon \cite{PA2020a}. The investment scenario considered is summarized in Table~\ref{tab: investment scenario from Peter's Mean-CVaR}, and the numerical discretization parameters used in this work are listed in Table~\ref{tab: localized parameters from Peter's Mean-CVaR}.


\noindent
\begin{table}[!hbt]
\vspace*{-0.4cm}
\centering
  \begin{minipage}[c]{0.48\textwidth}
    \caption{Investment scenario}
    \label{tab: investment scenario from Peter's Mean-CVaR}
    \vspace*{-0.2cm}
    \centering
    \begin{tabular}{l|c}
    \hline
    Investment horizon $T$ & 30 years \\
    Rebalancing frequency & yearly\\
    Initial wealth $W_{0}$ & 0\\
    Cashflow $\left\{q_{m}\right\}_{m=0,1,\ldots, 29}$ & 20,000\\
    \hline
    \end{tabular}
  \end{minipage}
  \hfill
  \begin{minipage}[c]{0.48\textwidth}
    \caption{Discretization parameters}
    \label{tab: localized parameters from Peter's Mean-CVaR}
    \vspace*{-0.2cm}
    \centering
    \begin{tabular}{c|c||c|c}
    \hline
    $y_{\min}^{\dagger}$ & $\log{\left(10^{5}\right)}-16$ & $b_{\max}$ & $5\times 10^{8}$ \\
    $y_{\min}$ & $\log{\left(10^{5}\right)}-8$ & $W_{a}^{\max}$ & $5\times 10^{8}$ \\
    $y_{\max}$ & $\log{\left(10^{5}\right)}+8$ & $W_{o}^{\max}$ & $5\times 10^{8}$ \\
    $y_{\max}^{\dagger}$ & $\log{\left(10^{5}\right)}+16$ & $N_{b}$ & $333$ \\
    $N_{y}$ & $512$ & $N_{w}$ & $333$ \\
    $N_{y}^{\dagger}$ & $1024$ & $N_{u}$ & $333$  \\
    \hline
    \end{tabular}
  \end{minipage}
\vspace*{-0.4cm}
\end{table}

\subsection{Precommitment Mean--bPoE and Mean--CVaR}
We now numerically illustrate the equivalence between the
pre-commitment Mean–bPoE and Mean–CVaR formulations, as established in Lemmas~\ref{lem: Equiv_CVaR_to_bPoE} and~\ref{lem: Equiv_bPoE_to_CVaR}. Specifically, in Subsection~\ref{ssc:Investment_outcomes}, we examine terminal wealth distributions and key performance metrics, including the mean, CVaR, bPoE, and the 5th, 50th, and 95th percentiles. Subsection~\ref{ssc:Optimal_controls} analyzes the structure of optimal investment strategies, and Subsection~\ref{ssc:Efficient_frontiers} presents the efficient frontiers.
{\sifin{
To support these comparisons, Subsection~\ref{ssc:Convergence_behaviour}
first reports a representative convergence check to indicate the magnitude of
discretization error.}}

{\sifin{\subsubsection{Convergence behaviour}
\label{ssc:Convergence_behaviour}
We take the confidence level
$\alpha=0.05$ (a standard choice in practice) and scalarization parameter
$\gamma_a=10$. Table~\ref{tab:convergence_results} reports a grid refinement test
for the pre-commitment Mean--CVaR problem $PCMa_{t_0}(\alpha,\gamma_a)$.
For each grid size, we first compute the optimal feedback policy and its
associated threshold $W_a^{p\ast}$ using the proposed numerical scheme, and then
evaluate this fixed policy by Monte Carlo simulation with $2.56\times 10^{6}$
paths to estimate $\mathbb{E}[W_T]$ and $\mathrm{CVaR}_\alpha(W_T)$.
The finest grid in Table~\ref{tab:convergence_results} corresponds to the
discretization listed in Table~\ref{tab: localized parameters from Peter's Mean-CVaR}.

\begin{table}[!htbp]
\caption{{\sifin{Convergence test for $PCMa_{t_0}\!\left(\alpha,\gamma_a\right)$ with
$\alpha=0.05$ and $\gamma_a=10$.
Grid size denotes the discretization used in the numerical scheme ($N_y\times N_b$),
where $N_y$ is the number of nodes in the $y=\log S$ direction and $N_b$ is the
number of nodes in the $B$ direction; $N_w$ and $N_u$ are set equal to $N_b$
for all refinement levels. Units: thousands of dollars (real).}}}
\label{tab:convergence_results}
\vspace*{-0.1cm}
\centering
\begin{tabular}{c c c c}
\hline
Grid size & $E[W_T]$ & $\mathrm{CVaR}_{\alpha}(W_T)$ & $W_a^{p\ast}$ \\
\hline
$128\times 84$  & $2387.57$ & $663.62$ & $723.20$ \\
$256\times 167$ & $2432.20$ & $666.82$ & $731.20$ \\
$512\times 333$ & $2441.27$ & $668.81$ & $750.00$ \\
\hline
\end{tabular}
\vspace*{-0.25cm}
\end{table}
As seen in Table~\ref{tab:convergence_results}, between the two finest grids the
changes in $E[W_T]$ and $\mathrm{CVaR}_\alpha(W_T)$ are below $0.4\%$ and about
$0.3\%$, respectively. The optimal threshold $W_a^{p\ast}$ is chosen from a
discrete threshold grid, so shifts of a few grid points under refinement are
expected even when the underlying objective is well resolved. A similar
grid refinement check for the $PCMo$ formulation (not reported here) shows
changes of comparable magnitude, so for brevity we only present the $PCMa$ case.
Overall, this indicates that discretization error in the main reported
risk--return metrics at the finest grid is small (below about one percent) for
this representative case, which is adequate for the numerical comparisons
reported in this section. Unless otherwise stated, all numerical results in this
paper use the discretization listed in
Table~\ref{tab: localized parameters from Peter's Mean-CVaR}.}}

\subsubsection{Investment outcomes}
\label{ssc:Investment_outcomes}
{\sifin{We now report investment outcomes, keeping
$\alpha=0.05$ and $\gamma_a=10$ as in the convergence study above.}}
Solving $PCMa_{t_{0}}\left( \alpha, \gamma_{a}\right)$ yields the optimal threshold and optimal control pair $\left( W_{a}^{p \ast}, \ \mathcal{U}_{0, a}^{p \ast} \right)$ and a point
$\left(\mathcal{C}_{a}^{\ast}=\text{CVaR}_{\alpha}\left(W_{T}\right), \  \mathcal{E}_{a}^{\ast}=E\left[W_{T}\right]\right)$ in the Mean-CVaR scalarization optimal set $\mathcal{S}_{a} \left(\alpha, \gamma_{a}\right)$. We then let $D=\mathcal{C}_{a}^{\ast}$ and compute the corresponding $\gamma_{o}$ via the relationship~\eqref{eq:gamma_o} proposed in Lemma~\ref{lem: Equiv_CVaR_to_bPoE}. Knowing $D$ and $\gamma_{o}$ allows us to solve $PCMo_{t_{0}}\left(D, \gamma_{o}\right)$ and obtain a point $\left(\mathcal{B}_{o}^{\ast}=\text{bPoE}_{D}\left(W_{T}\right), \  \mathcal{E}_{o}^{\ast}=E\left[W_{T}\right]\right)$ in the Mean-bPoE scalarization optimal set $\mathcal{S}_{o} \left(D, \gamma_{o}\right)$.

Table~\ref{tab: Precommitment results} presents the optimal thresholds and relevant statistics obtained by conducting the numerical experiment described above.
{\purple{The statistics are computed via Monte Carlo simulation of the portfolio using the optimal control from the numerical scheme. Overall, the reported investment outcomes are virtually identical, with relative errors under 1.3\% across all metrics, and under 1\% for most---consistent with Monte Carlo simulation error.}}

\noindent
\begin{table}[!hbt]
\vspace*{-0.4cm}
\caption{Investment outcomes of $PCMa_{t_{0}}\left( \alpha, \gamma_{a} \right)$ and $PCMo_{t_{0}}\left(D, \gamma_{o} \right)$. $\alpha=0.05$, $D=668.81$, $\gamma_{a} = 10$, $\gamma_{o}=1.6238\times 10^{4}$. Results computed using Monte Carlo simulations with $2.56 \times 10^{6}$ paths. Units: thousands of dollars (real).}
\label{tab: Precommitment results}
\vspace*{-0.2cm}
\centering{}
\begin{tabular}{|c|c|c|c|c|c|c|c|}
\hline
 & $E\left[W_{T}\right]$ & $\text{CVaR}_{\alpha}\left(W_{T}\right)$ &
$\text{bPoE}_{D}\left(W_{T}\right)$  & $W_{a \ (o)}^{p \ast}$ & 5th $W_T$ & 50th $W_T$ & 95th $W_T$\\
\hline
$PCMa$ & $2441.27$ & $668.81$ & $5.00 \%$ & $750.00$ & $759.14$ & $1107.78$ & $7977.45$ \\
\hline
$PCMo$ & $2462.99$ & $665.45$ & $5.17 \%$ & $759.39$ & $766.31$ & $1121.58$ & $8024.51$ \\
\hline
\end{tabular}
\vspace*{-0.2cm}
\end{table}

Specifically, using the input $D = \text{CVaR}_{\alpha}(W_T)$, the solution to $PCMo$ achieves the same expected terminal wealth as $PCMa$, and the resulting $\text{bPoE}_{D}(W_T)$ matches the pre-specified confidence level $\alpha$. This confirms the mapping from the point $\left(\mathcal{C}_a^{\ast}, \ \mathcal{E}_a^{\ast}\right) \in \mathcal{S}_a(\alpha, \gamma_a)$ to the corresponding point $\left(\mathcal{B}_o^{\ast} = \alpha, \ \mathcal{E}_o^{\ast} = \mathcal{E}_a^{\ast}\right) \in \mathcal{S}_o(D = \mathcal{C}_a^{\ast}, \gamma_o)$.
Moreover, both formulations yield the same optimal threshold, $W_a^{p\ast} = W_o^{p\ast}$, which numerically coincides with the 5th percentile of the terminal wealth distribution—confirming the interpretation of the optimal threshold as $\text{VaR}_{\alpha}(W_T)$. Although our experiment proceeds by mapping from $PCMa$ to $PCMo$, the reverse direction can be carried out analogously.

\noindent
\begin{minipage}[t]{0.5\textwidth}
Finally, as shown in Figure~\ref{fig:PCMa_PCMo_comp}, the terminal wealth distributions under $PCMa$ and $PCMo$ nearly overlap, indicating that their investment outcomes are essentially indistinguishable across the entire distribution. This visual agreement complements the close alignment of all key statistics in Table~\ref{tab: Precommitment results}, and confirms the theoretical equivalence established by Lemmas~\ref{lem: Equiv_CVaR_to_bPoE} and~\ref{lem: Equiv_bPoE_to_CVaR}.

\subsubsection{Optimal rebalancing controls}
\label{ssc:Optimal_controls}
Having shown that both $PCMa$ and $PCMo$ lead to numerically identical investment outcomes, we now examine their optimal rebalancing controls.
\end{minipage}
\hfill
\begin{minipage}[t]{0.5\textwidth}
  \centering\raisebox{\dimexpr \topskip-\height}{%
  \includegraphics[width=0.8\textwidth, height=0.5\textwidth]{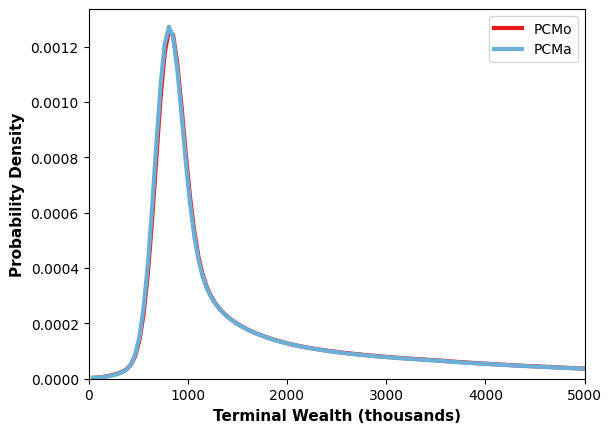}}
\captionof{figure}{Terminal wealth distribution comparison--PCMa vs.\ PCMo}
\label{fig:PCMa_PCMo_comp}
\end{minipage}

\vspace*{+0.25cm}
\noindent Figure~\ref{Fig: Precommitment optimal control heat maps} presents the heat maps of the optimal controls for both frameworks. A direct side‐by‐side comparison reveals virtually identical controls under these two frameworks.
\begin{figure}[!htb]
    \centering
    \subfigure[$PCMa$ optimal control]{
        \label{fig: PCMa_Kou3_heatmap}
        \includegraphics[width = 0.45\textwidth]{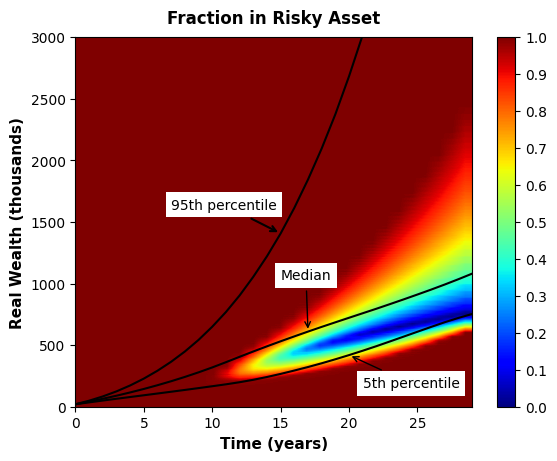}}
    \subfigure[$PCMo$ optimal control]{
        \label{fig: PCMo_Kou3_heatmap}
        \includegraphics[width = 0.45\textwidth]{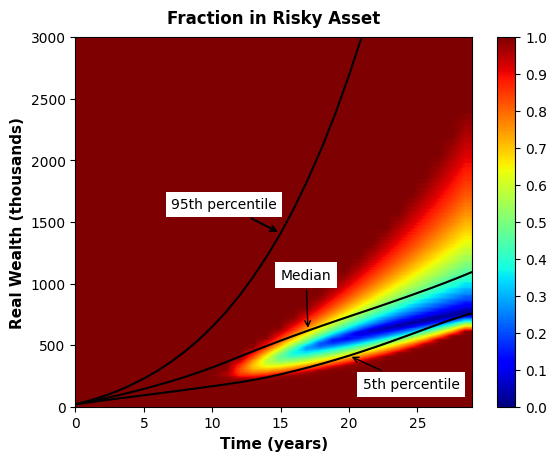}}
\caption{Precommitment optimal control heat maps}
\label{Fig: Precommitment optimal control heat maps}
\vspace*{-0.4cm}
\end{figure}
These heat maps illustrate the optimal proportion in the risky asset as a function of current real wealth and time. The optimal proportion can be determined by comparing the color of each point to the legend on the right-hand side. Redder colors indicate a higher proportion should be invested in the risky asset, while bluer colors suggest allocating more funds to the risk-free asset.

We observe from those heatmaps in Figure~\ref{Fig: Precommitment optimal control heat maps}  that initially the strategies recommend allocating all funds to the risky asset to maximize return. However, about 15 years later, the investment strategies begin to depend on the current wealth realized. We also note that a distinct triangular band appears in the second half of the investment horizon. This band indicates that investors should move their funds from risky to risk-free asset as their realized wealth approaches \$0.6-0.8 million. Recall that the optimal threshold $W^{p \ast}$ is around \$0.75 million, see Table~\ref{tab: Precommitment results}. Financially, this threshold can be interpreted as the dollar level below which investors want to avoid falling. Therefore, if the current wealth is below this threshold, a higher allocation to risky asset is suggested to pursue the return. In contrast, if the current wealth is already at the nearby level, funds should be shifted to risk-free asset to ensure that the terminal wealth does not fall below this amount. The reason for the reallocation to risky asset in the upper right corner is that, in this area, investors have accumulated \$2-3 million, which is far above the threshold level. Thus, they do not worry about falling into a poor situation and can allocate all funds to the risky asset to pursue enhanced return.

{\sifin{
With the numerically estimated optimal controls in
Figure~\ref{Fig: Precommitment optimal control heat maps} and the numerical
thresholds in Table~\ref{tab: Precommitment results}, the results indicate that the $PCMa$ and $PCMo$ solutions are practically indistinguishable. This is consistent with Lemmas~\ref{lem: Equiv_CVaR_to_bPoE} and~\ref{lem: Equiv_bPoE_to_CVaR}, which show that, for each corresponding pair of efficient points, there exists a threshold/control pair, not necessarily unique, that is simultaneously optimal for both $PCMa$ and $PCMo$ and attains those points.}}

\subsubsection{Efficient frontiers}
\label{ssc:Efficient_frontiers}
Table~\ref{tab: Precommitment results} and Figure~\ref{Fig: Precommitment optimal control heat maps} illustrate the detailed one-to-one correspondence between $PCMa$ and $PCMo$ for a specific scalarization-optimal set. To demonstrate that this correspondence holds across the full efficient frontier, we vary the value of $\gamma$ and repeat the procedure described above. {\purple{The resulting efficient frontiers are reported in Figure~\ref{Fig: Equivalent efficient frontier}.}}

{\purple{Although the two efficient frontiers shown in Figure~\ref{Fig: Equivalent efficient frontier} appear as mirror images, they plot different risk measures along the $x$-axis. In {\sifin{Figure~\ref{Fig: Equivalent efficient frontier}(a)}},
risk is quantified by $\text{CVaR}_{\alpha}(W_T)$; increasing $\gamma_a$ places greater emphasis on risk reduction, resulting in strategies that increase $\text{CVaR}_{\alpha}(W_T)$ (i.e.\ reduce downside risk) at the cost of lower expected terminal wealth $E[W_T]$. This drives the ($\text{CVaR}_{\alpha}(W_T), E[W_T])$ efficient frontier downward and to the right. In contrast, {\sifin{Figure~\ref{Fig: Equivalent efficient frontier}(b)}}
uses $\text{bPoE}_{D}(W_T)$ as the risk measure; increasing $\gamma_o$ places more emphasis on lowering $\text{bPoE}_{D}(W_T)$) at the cost of reduced $E[W_T]$, which drives the ($\text{bPoE}_{D}(W_T), E[W_T]$) efficient frontier downward and to the left.}}

\begin{figure}[!hbt]
    \centering
    \subfigure[Direction: $PCMa$ to $PCMo$]{
        \label{fig: EF_CVaR_bPOE}
        \includegraphics[width = 0.45\textwidth]{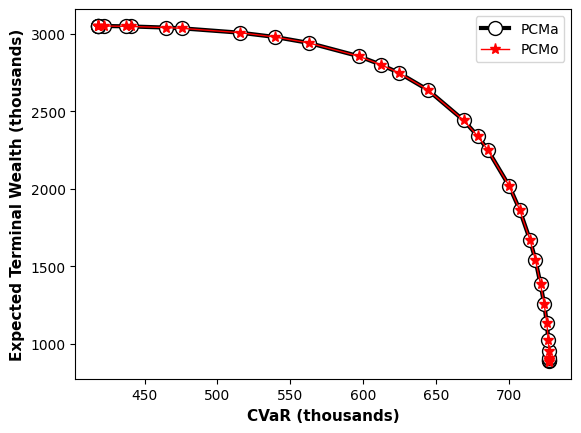}}
    \subfigure[Direction: $PCMo$ to $PCMa$]{
        \label{fig: EF_bPOE_CVaR}
        \includegraphics[width = 0.45\textwidth]{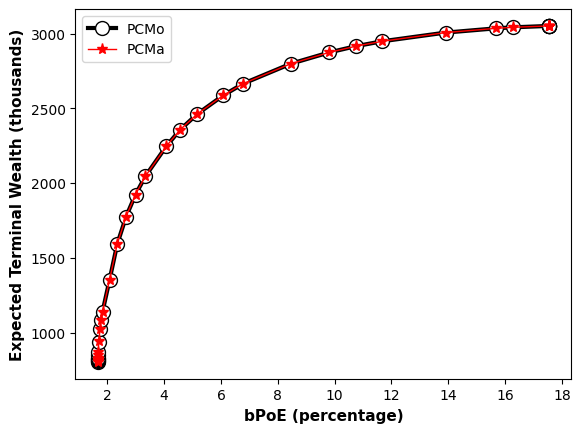}}
\caption{Equivalent efficient frontiers of $PCMa$ and $PCMo$}
\label{Fig: Equivalent efficient frontier}
\vspace*{-0.4cm}
\end{figure}
{\sifin{Figure~\ref{Fig: Equivalent efficient frontier}(a)}},
each black point generated by $PCMa$ is mapped via Lemma~\ref{lem: Equiv_CVaR_to_bPoE} into a red point of $PCMa$. Conversely, Figure~\ref{Fig: Equivalent efficient frontier}(b)
starts from $PCMo$ then applies Lemma~\ref{lem: Equiv_bPoE_to_CVaR} to map to $PCMa$ points. The perfect overlap of these two frontiers in both mapping directions provides excellent numerical confirmation of the theoretical equivalence between $PCMa$ and $PCMo$ established in Lemma~\ref{lem: Equiv_CVaR_to_bPoE} and~\ref{lem: Equiv_bPoE_to_CVaR}.

\subsection{Time-consistent Mean--bPoE and Mean--CVaR}
{\sifin{
Unless otherwise stated, all time-consistent results below use the finest
discretization listed in Table~\ref{tab: localized parameters from Peter's Mean-CVaR}.
In addition, we carried out a grid refinement check analogous to
Table~\ref{tab:convergence_results} for representative time-consistent instances
(not reported here) and observed changes of comparable magnitude in the main
risk--return metrics. For brevity, we do not include a separate convergence
table for the time-consistent case.}}
\subsubsection{Investment outcomes}
{\sifin{We now report investment outcomes for the time-consistent Mean--bPoE and
Mean--CVaR formulations.}} In the time-consistent setting, Remark~\ref{rm:no_equiv} suggests that no global parameter mapping can preserve the same efficient frontier and optimal control/threshold structure between the TCMa and TCMb formulations.
Nevertheless, we compare the investment outcomes of $TCMa$ and $TCMo$ by first matching their expected terminal wealth and then comparing other statistics, such as CVaR, bPoE, and percentiles. Specifically, for $TCMo$, we adopt the same disaster level $D$ and parameter $\gamma_o$ as in $PCMo$, enabling a direct comparison between $PCMo$ and $TCMo$. We then solve the $TCMo$ problem using the proposed numerical scheme to obtain its expected terminal wealth. For $TCMa$, we numerically determine the corresponding $\gamma_a$ using Newton's method to match this value.
\noindent
\begin{table}[!hbt]
\vspace*{-0.1cm}
\caption{Investment outcomes of $TCMa_{t_{0}}\left( \alpha, \gamma_{a} \right)$ and $TCMo_{t_{0}}\left(D, \gamma_{o} \right)$. $\alpha=0.05$, $D=668.81$, $\gamma_{a} = 0.4$, $\gamma_{o}=1.6238\times 10^{4}$. Results computed using Monte Carlo simulations with $2.56 \times 10^{6}$ paths. Units: thousands of dollars (real).}
\label{tab: Time-consistent results}
\vspace*{-0.1cm}
\centering{}
\begin{tabular}{|c|c|c|c|c|c|c|}
\hline
 & $E\left[W_{T}\right]$ & $\text{CVaR}_{\alpha}\left(W_{T}\right)$ &
$\text{bPoE}_{D}\left(W_{T}\right)$  & 5th $W_T$ & 50th $W_T$ & 95th $W_T$\\
\hline
$TCMa$ & $2140.11$ & $476.49$ & $19.00 \%$ & $575.84$ & $1618.12$ & $5508.10$ \\ 
\hline
$TCMo$ & $2133.97$ & $672.50$ & $4.62 \%$ & $717.92$ & $952.42$ & $7176.15$ \\
\hline
\end{tabular}
\vspace*{-0.1cm}
\end{table}

\noindent
\begin{minipage}[t]{0.5\textwidth}
{\purple{Table~\ref{tab: Time-consistent results} presents the investment outcomes of $TCMa$ and $TCMo$. With the same expected terminal wealth, $TCMa$ produces a substantially lower $\text{CVaR}_{\alpha}\left(W_{T}\right)$ and a much higher $\text{bPoE}_{D}\left(W_{T}\right)$. Since a larger $\text{CVaR}_{\alpha}\left(W_{T}\right)$ and a smaller $\text{bPoE}_{D}\left(W_{T}\right)$ indicate a more favorable outcome, $TCMo$ dominates $TCMa$ in both the Mean-CVaR and Mean-bPoE sense. Thus, although the expected terminal wealth is matched, the differing risk levels imply that the scalarization optimal sets of $TCMa$ and $TCMo$ are not equivalent--as $PCMa$ and $PCMo$ are in the pre-commitment case--and no pair of efficient points coincides.}}
\end{minipage}
\hfill
\begin{minipage}[t]{0.5\textwidth}
  \centering\raisebox{\dimexpr \topskip-\height}{%
  \includegraphics[width=0.9\textwidth, height=0.6\textwidth]{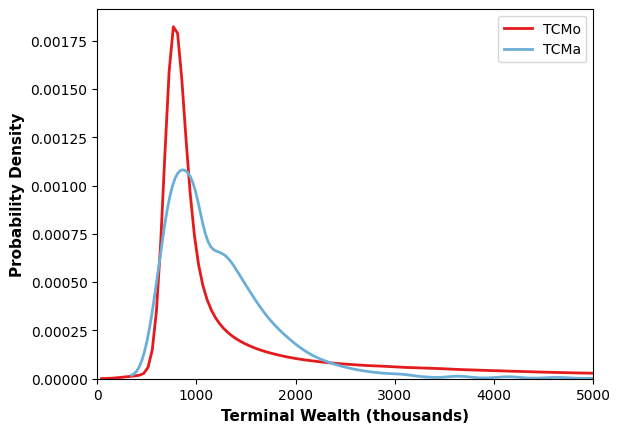}}
\captionof{figure}{Terminal wealth distribution comparison--TCMa vs.\ TCMo}
\label{fig: TCMa_TCMo}
\end{minipage}

\vspace*{+0.25cm}
{\purple{We note that among the other statistics reported in Table~\ref{tab: Time-consistent results}, the only metric in which $TCMa$ outperforms $TCMo$ is the median of terminal wealth. To investigate this further, we compare their terminal wealth density functions, shown in Figure~\ref{fig: TCMa_TCMo}. An interesting observation from this figure is that $TCMa$ concentrates more probability mass around the middle range (e.g., \$0.5–2 million), whereas $TCMo$ spreads its density more into the tails.

This behavior reflects the key difference between these two risk measures.
In $TCMa$, the strategy aims to minimize the average of the worst $\alpha$ fraction
of outcomes at each rebalancing time, hence intuitively  encouraging the portfolio to compress its distribution toward the middle to reduce extreme downside risk, which in turn elevates the median metric. By contrast, $TCMo$ enforces a fixed disaster level $D$, ensuring the wealth avoids falling below this specific wealth floor, while allowing the remainder of the distribution (including the median and upper tail) to spread more widely, if needed.
Consequently, the portfolio may accept greater variance in the median or mid-range outcomes so that it can preserve or even enhance both the upside (e.g.\ high-end returns) and downside protection. The result is a potentially lower median outcome than Mean–CVaR, but better tail performance at the extremes (both upside and downside).

This suggests that bPoE is a strictly tail-oriented measure that prioritizes
guarding against catastrophic shortfalls while still permitting meaningful upside exposure--making it especially appealing for investors focused on long-term wealth security rather than distributional tightness.}}

{\purple{
Further comparison between the $TCMo$ results in Table~\ref{tab: Time-consistent results} and the $PCMo$ results in Table~\ref{tab: Precommitment results} indicates that, under the same inputs $D$ and $\gamma_{o}$, the investment outcomes of $TCMo$ closely resemble those of its precommitment counterpart. In particular, the resulting $\text{CVaR}_{\alpha}\left(W_{T}\right)$ of $TCMo$ is nearly equal to the input disaster level $D$, and its $\text{bPoE}_{D}\left(W_{T}\right)$ is close to the pre-specified confidence level $\alpha = 5\%$. This consistency across the $TCMo$ and $PCMo$ formulations motivates the next investigation, which focuses on a detailed heatmap analysis of the $TCMo$ optimal rebalancing control and its comparison with that of $PCMo$.}}

\subsubsection{Optimal rebalancing controls}
{\purple{Figure~\ref{Fig: Time-consistent optimal control heat maps} displays the heat maps of the optimal controls for $TCMa$ and $TCMo$. In contrast to the precommitment case—where the optimal control heatmaps of $PCMa$ and $PCMo$ are identical—the control behaviors of $TCMa$ and $TCMo$ differ significantly.}}
\begin{figure}[!hbt]
    \centering
    \subfigure[$TCMa$ optimal control]{
    \label{fig: TCMa_Kou3_heatmap}
        \includegraphics[width=0.48\textwidth]{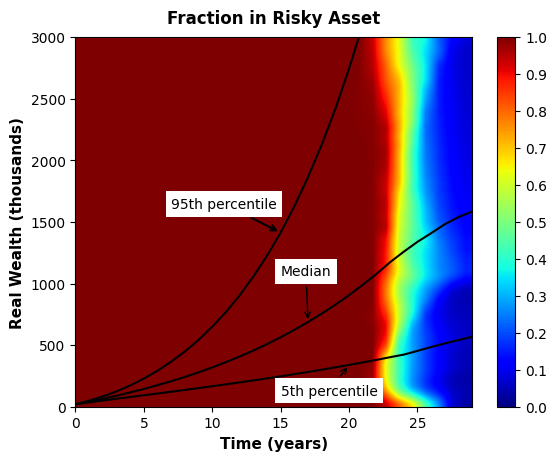}}
    \subfigure[$TCMo$ optimal control]{
        \label{fig: TCMo_Kou3_heatmap}
        \includegraphics[width=0.48\textwidth]{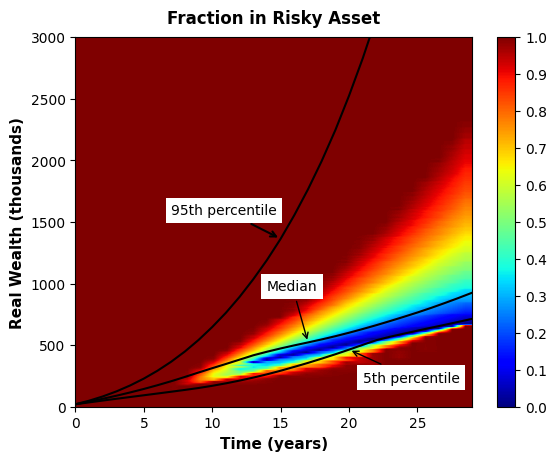}}
\caption{Time-consistent optimal control heat maps}
\label{Fig: Time-consistent optimal control heat maps}
\vspace*{-0.4cm}
\end{figure}
\noindent
{\purple{We highlight two key observations below.
\begin{itemize}[noitemsep, topsep=0pt, leftmargin=*]
\item
As shown in {\sifin{Figure~\ref{Fig: Time-consistent optimal control heat maps}(a)}},
the $TCMa$ control depends primarily on time and exhibits a weak sensitivity to current wealth. Although our numerical experiment includes annual contributions rather than a lump-sum investment, the optimal rebalancing control remains nearly wealth-independent, consistent with the scaling property discussed earlier in Lemma~\ref{lem:scaling} and Remark~\ref{rm:no_equiv}.

    Consequently, the $TCMa$ control resembles a glide-path strategy. It begins with full allocation to the risky asset to pursue growth, and gradually shifts to the risk-free asset over time to reduce risk exposure. This mirrors the principle of glide-path design: aggressive investment in early periods followed by systematic de-risking as the investment horizon shortens.

\item
The $TCMo$ control shown in {\sifin{Figure~\ref{Fig: Time-consistent optimal control heat maps}(b)}}
closely resembles the $PCMo$ control in {\sifin{Figure~\ref{Fig: Precommitment optimal control heat maps}(b)}}.
This similarity supports the idea that $TCMo$ closely mirrors $PCMo$ not only in outcomes but also in the structure of optimal rebalancing decisions. By contrast, the Mean–CVaR case shows a clear divergence between $PCMa$ and $TCMa$, with markedly different control behavior.
\end{itemize}
To better understand the structural differences between $TCMa$ and $TCMo$, we now examine the behavior of their optimal  thresholds $W^{c\ast}$, which serve as the key drivers of each strategy.}}

\subsubsection{\texorpdfstring{Optimal thresholds $\boldsymbol{W^{c \ast}}$}{Threshold}}
{\purple{Figure~\ref{Fig: Time-consistent optimal threshold heat maps} displays heat maps of the optimal thresholds $W^{c \ast}$ for $TCMa$ and $TCMo$. In the $TCMa$ case
({\sifin{Figure~\ref{Fig: Time-consistent optimal threshold heat maps}(a))}},
$W_{a}^{c \ast}$ varies substantially, sweeping the full range from \$0 to over \$3.5 million as wealth increases from \$0 to \$3 million and time evolves.

\begin{figure}[!hbt]
    \centering
    \subfigure[$TCMa$ optimal threshold]{
    \label{fig: TCMa_outer_Kou3_heatmap}
        \includegraphics[width = 0.48\textwidth]{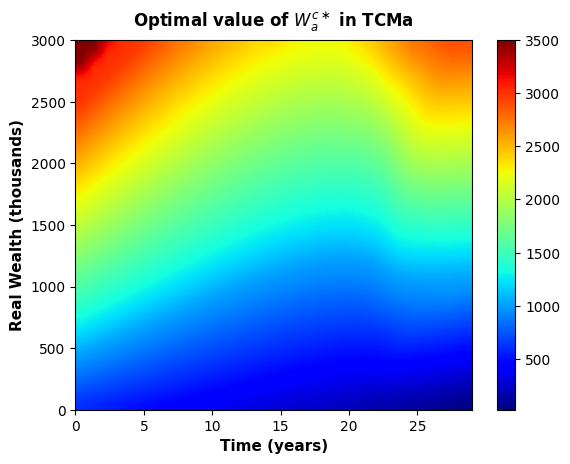}}
    \subfigure[$TCMo$ optimal threshold]{
    \label{fig: TCMo_outer_Kou3_heatmap}
        \includegraphics[width=0.48\textwidth]{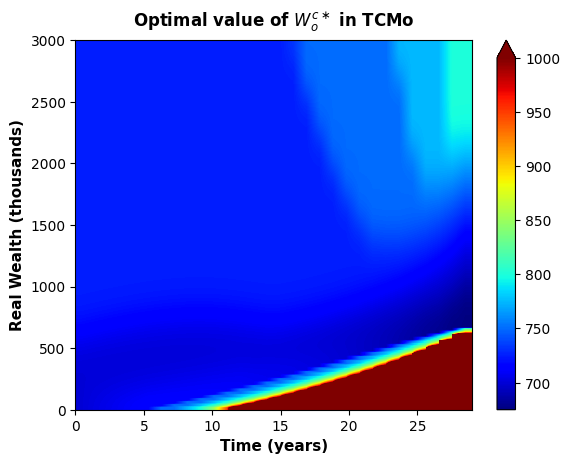}}
\caption{Time-consistent optimal threshold heat maps}
\label{Fig: Time-consistent optimal threshold heat maps}
\vspace*{-0.4cm}
\end{figure}
{\sifin{
By contrast, in the $TCMo$ case (Figure~\ref{Fig: Time-consistent optimal threshold heat maps}(b)), more than 80\% of the computed values of $W_{o}^{c \ast}$ lie in a narrow band around the disaster level, typically between $D \approx \$0.67$ million and about \$0.8 million. This confirms that, on most of the state space, the time–consistent bPoE problem behaves like a near--constant--threshold policy.

The dark red region in the lower–right corner of
Figure~\ref{Fig: Time-consistent optimal threshold heat maps}(b) corresponds to states where the current wealth is very low and the remaining horizon is short. In this region, even the most aggressive admissible strategy (investing fully in the risky asset) cannot generate a terminal wealth distribution whose lower tail is aligned with the disaster level $D$.
Numerically, the optimisation over
$W_o^c$ then favors very large thresholds and selects the aggressive control
$u_{o,h}^{c\ast} = 1$ (see the corresponding region in
Figure~\ref{Fig: Time-consistent optimal control heat maps}(b)), since in this regime the bPoE component is essentially saturated and thus
offers little scope for improvement, so there is no incentive to reduce risky
exposure; the resulting optimizer takes $u_{o,h}^{c\ast}=1$ to maximize
expected terminal wealth.


In the numerical scheme, we work with a truncated threshold domain
$\Gamma_o = [D, W_{o}^{\max}]$. In the red region described above, the
discrete optimisation therefore selects $W_{o,h}^{c\ast} \approx W_{o}^{\max}$
and $u_{o,h}^{c\ast} = 1$, so the truncation affects only the reported
value of the threshold (which appears as the saturated red area), not the
optimal control or the backward value update: once $u_{o}^{c\ast}=1$, the
intervention step \eqref{eq:intervention_c} depends only on this control and is
essentially insensitive to how large $W_{o}^{c\ast}$ is, provided $W_{o}^{\max}$ is chosen sufficiently large. On the remainder of the domain,
where a finite optimal threshold exists, $W_{o}^{c\ast}$ lies strictly inside
$\Gamma_o$ and our convergence theorem applies directly.
}}

Given that $PCMo$ and $TCMo$ produce remarkably similar rebalancing controls and investment outcomes, investors may interchange the precommitment and time-consistent solutions with minimal loss of accuracy. This behavioral similarity stands in sharp contrast to the Mean–CVaR case, where time consistency leads to counterintuitive, wealth-independent controls that diverge substantially from their precommitment counterparts. The key difference lies in the structural role of the disaster level $D$: by fixing $D$ from the outset, bPoE preserves a stable tail-risk objective across time, thereby avoiding the shifting risk preferences and unintuitive rebalancing behavior commonly caused by time-consistency constraints. In this sense, bPoE not only offers intuitive tail protection but also resolves the well-known dynamic inconsistency challenges that arise in multi-period mean–risk portfolio optimization.

}}

\section{Conclusion and future work}
\label{sc:conclude}
{\purple{This paper investigates the use of bPoE as a viable and intuitive alternative to CVaR in multi-period portfolio optimization for Defined Contribution plans. We formulate both pre-commitment and time-consistent Mean–bPoE and Mean–CVaR problems under realistic investment constraints and jump-diffusion dynamics, and develop a provably convergent numerical framework capable of solving all formulations.

In the pre-commitment setting, we establish a one-to-one correspondence between the scalarization optimal sets of Mean–bPoE and Mean–CVaR. {\sifin{For each pair of corresponding frontier points, there exists a threshold/rebalancing control pair
that is optimal for both formulations and attains these points, allowing bPoE to
be seamlessly integrated into existing CVaR--based workflows.}}

In the time-consistent setting, however, this equivalence no longer holds. Mean–CVaR strategies often exhibit counterintuitive, wealth-independent controls due to dynamic threshold re-optimization. By contrast, Mean-bPoE maintains a fixed disaster level, resulting in stable shortfall thresholds and wealth-dependent rebalancing behavior that better aligns with investor preferences for a minimum acceptable terminal wealth. As a result, time-consistent Mean–bPoE closely resembles its pre-commitment counterpart---both in control structure and investment outcomes---and consistently delivers superior tail performance across a range of key metrics.

These findings show that bPoE retains the computational tractability of CVaR while offering practical advantages for long-horizon retirement planning. Its stable threshold and control structure, intuitive interpretation, and ability to resolve key limitations of time-consistent Mean–CVaR formulation make it a compelling risk measure for DC lifecycle portfolio design.

Future work may extend the bPoE framework to incorporate interim withdrawals, inflation-linked liabilities, or stochastic mortality, to assess its robustness in more realistic lifecycle settings.

}}

\section*{Appendices}
\appendix
\section{Proof of Proposition~\ref{prop:F_continuous}}
\label{app:F_continuous}
Consider a fixed control $\mathcal{U}_0 \in \mathcal{A}$. Define
the function
\EQ
\label{eq:fW}
f(W_{o}^p;\mathcal{U}_0) = \Ebb_{\mathcal{U}_0}^{X_{0}^+,t_{0}^+}\left[
h(W_{o}^p;W_T)\bigg| X_0^- = (s_0, b_0) \right],
\EN
where $h(W_{o}^p;w) = \gamma \max\left( 1 - \frac{w-D}{W_{o}^p-D} , 0 \right) - w$.
Clearly, $h(W_{o}^p;w) $ is continuous in $W_{o}^p$ for each fixed~$w$.
Note that $F(W_o^p) = \inf_{\mathcal{U}_0\in \mathcal{A}} f(W_o^p;\mathcal{U}_0)$.

We now show that $f(W_{o}^p;\mathcal{U}_0)$ is continuous in $W_{o}^p$
for this fixed control $\mathcal{U}_0$. Specifically, for any fixed $\widehat{W}_o \in (D, \infty)$, and any sequence $\{W_{o}^{p\, (n)}\}$  with $W_{o}^{p\,(n)} > D$ for all $n$ and $W_{o}^{p\, (n)} \to \widehat{W}_o$, we will prove
\EQ
\label{eq:f_lim}
\lim_{n\to\infty}
f\bigl(W_{o}^{p\, (n)};\mathcal{U}_0\bigr)
~=\;
f\bigl(\widehat{W}_o;\mathcal{U}_0\bigr).
\EN
Equivalently, we will show
\EQ
\label{eq:E_lim}
\lim_{n\to\infty}
\Ebb_{\mathcal{U}_0}^{X_{0}^+,t_{0}^+}\Bigl[
  h\bigl(W_{o}^{p\,(n)}; W_T(\omega)\bigr) ~\big|~\cdot~
\Bigr]
= \Ebb_{\mathcal{U}_0}^{X_{0}^+,t_{0}^+}\Bigl[
  h\bigl(\widehat{W}_o; W_T(\omega)\bigr)~ \big|~\cdot~
\Bigr].
\EN
Once \eqref{eq:E_lim} is established, by the compactness of $\mathcal{A}$ and the continuity argument in parametric optimization \cite{bonnans2000perturbation},
we conclude that $F(W_{o}^p) =  \inf_{\mathcal{U}_0\in \mathcal{A}} f(W_{o}^p;\mathcal{U}_0)$ is also continuous for all $W_{o}^p > D$.

It remains to prove \eqref{eq:E_lim}, which we will do using a dominated convergence argument.
In the remainder of the proofs, let $C$ and $C'$ denote bounded constants that potentially depend only on the fixed parameters $\widehat{W}_o$, $\gamma$, and $D$, and are independent of $n$ and $w$. Their values may vary from line to line.

Let $\delta = \frac{1}{2}\,\bigl(\widehat{W}_o-D\bigr) >0$.
For sufficiently large $n$, say $n \ge N_*$, $W_{o}^{p\,(n)} \in
[\widehat{W}_o-\delta,\;\widehat{W}_o+\delta\,] ~\subset~ (D,\infty)$.
Thus, for $n\ge N_*$, we have
\EQ
\label{eq:W_oD}
W_{o}^{p\,(n)} - D
~\ge~
\widehat{W}_o- D - \delta
~=~
\tfrac12\,(\widehat{W}_o-D).
\EN
In addition,
\EQ
\label{eq:W_oD_less}
W_{o}^{p\,(n)}
~\le~
\widehat{W}_o+\delta
~=~
\widehat{W}_o + \tfrac12\,(\widehat{W}_o-D)
~\le~
2\,(\widehat{W}_o + |D|).
\EN
Now, we bound $h(W_{o}^{p\,(n)};w)$ for sufficiently large $n$, i.e.\ $n\ge N_*$, so that $W_{o}^{p\,(n)} \in
[\widehat{W}_o-\delta,\;\widehat{W}_o+\delta\,]$.
\begin{itemize}
\item If $w\ge W_{o}^{p\,(n)}$: we have $h(W_{o}^{p\,(n)};w) = \gamma \cdot 0 -w  = -w$,
so $|h(W_{o}^{p\,(n)};w)| \le |w|$.
\item If $w< W_{o}^{p\,(n)}$: $\max\left(1-\tfrac{w-D}{\,W_{o}^{p\,(n)}-D},0\right)=1-\tfrac{w-D}{\,W_{o}^{p\,(n)}-D}$, so
    $h\bigl(W_{o}^{p\,(n)};w\bigr)
    ~=~
    \gamma\,\Bigl(\tfrac{W_{o}^{p\,(n)}-w}{\,W_{o}^{p\,(n)}-D}\Bigr)
    \;-\;w$.

    By \eqref{eq:W_oD}, for sufficiently large $n$, the denominator
    $(W_{o}^{p\,(n)} - D) \ge \tfrac12\,(\widehat{W}_o-D)$.
    The numerator $(W_{o}^{p\,(n)}-w) \le (W_{o}^{p\,(n)} +|w|)$.
    Thus,
    \EQS
    \begin{aligned}
    \Bigl|\gamma\,\tfrac{W_{o}^{p\,(n)}-w}{W_{o}^{p\,(n)}-D} -w\Bigr|
    &\le
    \gamma\,\tfrac{W_{o}^{p\,(n)} +|w|}{\tfrac12(\widehat{W}_o-D)}
    \;+\;
    |w|
    \overset{\text{\eqref{eq:W_oD_less}}}{\le}
    \tfrac{4\gamma\,\bigl(\widehat{W}_o + |D| +|w|\bigr)}{\widehat{W}_o-D}  + |w|
    \overset{\text{(ii)}}{=} C + C'\,|w| + |w|.
  \end{aligned}
  \ENS
\end{itemize}
Hence, in either of the two cases, we can conclude that $\forall n \ge N_*$ and $w\ge 0$
\[
\bigl|h\bigl(W_{o}^{p\,(n)}; w\bigr)\bigr|
~\le~
C + (C'+1)\,\lvert w\rvert.
\]
For $\omega \in \Omega$, we define
$B(\omega) = C + (C'+1)\,W_T(\omega)$.
Hence, for $n \ge N_*$,
$\bigl|h(W_{o}^{p\,(n)};W_T(\omega))\bigr| \;\le\; B(\omega)$.
By Remark~\ref{rm:box},
$\Ebb_{\mathcal{U}_0}^{X_{0}^+,t_{0}^+}
    \left[W_{T} ~\big|~ \cdot \right]< \mathcal{E}_{\max}$, so
the random variable $B(\omega)$ is integrable.

By the Dominated Convergence Theorem and the pointwise continuity in $W_o$, we conclude that
\[
\lim_{n\to\infty}
\Ebb_{\mathcal{U}_0}^{X_{0}^+,t_{0}^+}\Bigl[
  h\bigl(W_{o}^{p\,(n)}; W_T(\omega)\bigr) ~\big|~\cdot~
\Bigr]
=
\Ebb_{\mathcal{U}_0}^{X_{0}^+,t_{0}^+}\Bigl[
  \lim_{n\to\infty}
   h\bigl(W_{o}^{p\,(n)}; W_T(\omega)\bigr) ~\big|~\cdot~
\Bigr]
~=\;
\Ebb_{\mathcal{U}_0}^{X_{0}^+,t_{0}^+}\Bigl[
  h\bigl(\widehat{W}_o; W_T(\omega)\bigr)~ \big|~\cdot~
\Bigr].
\]
This is \eqref{eq:E_lim}. 
In conclusion, for each fixed $\mathcal{U}_0 \in \mathcal{A}$, the function $f(W_o^p;\mathcal{U}_0)$ is continuous in $W_o^p$.

\section{Proof of Lemma~\ref{lem:infF}}
\label{app:infF}
For brevity, we will write $F(W_o^p)$, with the dependence on
the initial state $(s_0, b_0)$ understood implicitly.
Let $f(W_o^p;\mathcal{U}_0)$ be as in \eqref{eq:fW},
so $F(W_o^p) = \inf_{\mathcal{U}_0\in \mathcal{A}} f(W_o^p;\mathcal{U}_0)$.

First, we examine the behavior of $F(W_o^p)$ as $W_o^p \to D^+$ and $W_o^p \to \infty$.
As $W_o^p \to D^+$, the term $\frac{1}{\,W_o^p-D\,}$ blows up, causing
$F(W_o^p) \to +\infty$.

For $W_{o}^{p} \rightarrow +\infty$, the term $\max \biggl(1 - \tfrac{W_T - D}{\,W_o^p - D\,},\;0\biggr) \rightarrow 1$, and the function $F\left(W_{o}^{p}\right)$ becomes
\begin{eqnarray}
    \lim_{W_{o}^{p} \rightarrow +\infty} F\left(W_{o}^{p}\right) & = &
      \inf_{\mathcal{U}_{0} \in \mathcal{A}}
      \mathbb{E}_{\mathcal{U}_{0}}^{\,X_{0}^+,\,t_{0}^+}
      \bigl[
        \gamma - W_T
       \bigg| X_0^- = (s_0, b_0) \bigr].\label{eq: W_o^p goes infty}
\end{eqnarray}
The only target for optimization problem~\eqref{eq: W_o^p goes infty} is to maximize the expected return, so the optimal control is the strategy that fully invests in the risky asset,
as noted in Remark~\ref{rm:box}. Denote by $\widehat{\mathcal{U}_{0}}$ the optimal control in this case. Thus, recalling $\mathcal{E}_{\max}$ from \eqref{eq:E_max}, we have
\begin{eqnarray*}
    \lim_{W_{o}^{p} \rightarrow +\infty} F\left(W_{o}^{p}\right) & = & \gamma - \mathcal{E}_{\max}.
\end{eqnarray*}
However, as noted in Remark~\ref{rm:bpoE_Wstar}, there exists a threshold value $W_{o}^{\prime} =\text{VaR}_{\mathcal{B}^{\prime}}(W_{T};\widehat{\mathcal{U}_{0}})\in (D, +\infty)$, where $\mathcal{B^{\prime}} = \mathbb{E}_{\widehat{\mathcal{U}_{0}}}^{X_{0}^{+}, t_{0}^{+}}\left[ \max\left( 1-\frac{W_{T}-D}{W_{o}^{\prime}-D} ,0 \right) \right] \in (0,1)$, such that
\begin{eqnarray}
    F\left( W_{o}^{\prime} \right) \le f\left( W_{o}^{\prime}; \widehat{\mathcal{U}_{0}} \right) = \gamma \mathcal{B} - \mathcal{E}_{\max} < \gamma - \mathcal{E}_{\max}.
\end{eqnarray}
Therefore, we show that $W_{o}^{p} \rightarrow +\infty$ can not be the minimizer.


By Proposition~\ref{prop:F_continuous}, $F(W_o^p)$ is continuous on $(D,\infty)$.
Combined with the boundary behavior, where $F(W_o^p) \to +\infty$ as
$W_o^p \to D^+$ and no further decrease is possible for $F(W_o^p)$
as $W_o^p \to \infty$, we conclude that $\inf_{W_o^p>D}F(W_o^p)$ is finite
and is attained at some finite point
$W_o^{p*}(s_0, b_0)\in(D,\infty)$.

\section{Proof of Proposition~\ref{prop:equiv}}
\label{app:equiv}
From \eqref{eq:W_op_W0} and \eqref{eq: PCMb_new}, the pre-commitment value function becomes
\EQ
\label{eq:precommit_obj}
V^p_o\left(s_0, b_0,t_{0}^-\right) =
\inf_{\mathcal{U}_0 \in \mathcal{A}} \Ebb_{\mathcal{U}_0}^{X_{0}^+,t_{0}^+}\left[ \gamma \max\left( 1 - \frac{W_{T}-D}{W_o^{p*}(s_0, b_0)-D} , 0 \right) - W_{T}  \bigg| X_0^- = (s_0, b_0) \right].
\EN
Meanwhile, for $TCEQ_{t_0}(D,\gamma;W_o^{p*}(s_0, b_0))$ at $t_0$, the value function reads
\EQ
\label{eq:tceq_obj}
\widehat{Q}^c_o(s_0, b_0,t_{m}) \coloneqq \inf_{\mathcal{U}_{0}}
    \bigg\{\Ebb_{\mathcal{U}_{m}}^{X_0^+,t_{0}^+}\bigg[ \max(\widehat{W_o}-W_T,\,0)
  - (\widehat{W_o}-D)\,\tfrac{W_T}{\gamma}
  \bigg| X_0^- = (s_0, b_0)\bigg] \bigg\}.
\EN
Since $\gamma>0$, and $\widehat{W_o}-D>0$,  any optimal control $\mathcal{U}_{0,o}^{p*}$ for \eqref{eq:precommit_obj}
also minimizes \eqref{eq:tceq_obj}, and vice versa.
Moreover, \eqref{eq:tceq_obj} is solvable via standard dynamic programming
over the controls. Hence, $\mathcal{U}_{0,o}^{p*}$ itself is time-consistent.
This completes the proof.

\section{Proof of Lemma~\ref{lem:CVaR_threshold_existence}}
\label{app:CVaR_threshold_existence}
For each fixed admissible  control $\mathcal{U}_0 \in \mathcal{A}$, we define
\[
 f_a\bigl(W_a^p;\mathcal{U}_0\bigr)
       \;=\;
       \Ebb_{\mathcal{U}_0}^{\,X_0^+,\,t_0^+}\!\Bigl[\,
         \gamma\Bigl(W_a^p + \tfrac{1}{\alpha}\min(W_T - W_a^p,\,0)\Bigr)
         + W_T \big|~ \cdot~
      \Bigr].
\]
A standard dominated convergence argument (analogous to Proposition~\ref{prop:F_continuous}
for the pre-commitment Mean–bPoE case) shows that $f_a(W_a^p;\mathcal{U}_0)$ is continuous in $W_a^p$. By compactness of $\mathcal{A}$ and parametric optimization results
\cite{bonnans2000perturbation}, the function $F_a(W_a^p) = \sup_{\mathcal{U}_0\in \mathcal{A}} f_a(W_a^p;\,\mathcal{U}_0)$ is continuous for all $W_a^p$ in $[0,\infty)$.

We next verify that $F_a(W_a^p)$ cannot blow up as $W_a^p\to 0^+$ or $W_a^p\to \infty$. As $W_a^p\to 0^+$, $f_a$ behaves like $\Ebb_{\mathcal{U}_0} [\gamma \,\min(W_T,\,0)/\alpha + W_T| \cdot]$. Since $W_T$ is integrable (Remark~\ref{rm:box_cvar}),  this expectation is finite,
so no unbounded positive growth occurs.

For the case $W_a^p\to \infty$, for each fixed $W_a^p$, we partition the sample space $\Omega$ into
\[
\Omega_1(W_a^p) \;=\;\{\omega \in \Omega | \,W_T(\omega) \le W_a^p\}
\quad\text{and}\quad
\Omega_2(W_a^p) \;=\;\{ \omega \in \Omega | \,W_T (\omega) > W_a^p\}.
\]
On $\Omega_1(W_a^p)$, we have $W_T \le W_a^p$, so
$W_a^p \;+\;
\frac{1}{\alpha}\,\min(W_T -W_a^p,\,0)
~=~
W_a^p\,\Bigl(1-\tfrac1\alpha\Bigr)
\;+\;
\tfrac{W_T}{\alpha}$. Since $1-\tfrac1\alpha< 0$, pushing $W_a^p$ larger
actually lowers this part.

\noindent On $\Omega_2(W_a^p)$, $W_T>W_a^p$, so  $\min(W_T-W_a^p,0)=0$. Therefore,
\EQS
\begin{aligned}
f_a\bigl(W_a^p;\mathcal{U}_0\bigr) &= \Ebb_{\mathcal{U}_0}^{\,X_0^+,\,t_0^+} \left[\, \gamma\,W_a^p \;+\; W_T\right |~ \cdot~]
=
\int_{\{W_T(\omega)>W_a^p\}} \bigl[\gamma\,W_a^p+W_T(\omega)\bigr] \,d\Pbb(\omega)
\\
&= \gamma\,W_a^p
 \,\Pbb\!\bigl(W_T>W_a^p\bigr)
+
\int_{\{W_T(\omega)>W_a^p\}}\!\!W_T(\omega)\,d\Pbb(\omega).
\\
&\overset{(i)}{\le}
 \gamma\,W_a^p
\frac{\Ebb_{\mathcal{U}_0}^{\,X_0^+,\,t_0^+}[\,W_T\,]}{\,W_a^p\,} +
\Ebb_{\mathcal{U}_0}^{\,X_0^+,\,t_0^+}\big[\,W_T\mathbb{I}_{\{W_T>W_a^p\}}\big]
\\
&\overset{(ii)}{=} \gamma  \Ebb_{\mathcal{U}_0}^{\,X_0^+,\,t_0^+}[\,W_T\,]
+
\int_{W_a^p}^\infty \Pbb(W_T > u) du + W_a^p \Pbb(W_T > W_a^p).
\end{aligned}
\ENS
Here, (i) follows from Markov's inequality,  while (ii) is due to
Fubini's theorem. Since $W_T$ has finite expectation,
standard tail‐integration arguments imply that, as $W_a^p\to \infty$,
both  $\int_{W_a^p}^\infty \Pbb(W_T > u) du$ and
and $W_a^p \Pbb(W_T > W_a^p)$ vanish.
Therefore, in all cases, no blow-up can occur as $W_a^p\to \infty$.

In summary, $F_a(W_a^p)$ remains finite and is continuous on $[0,\infty)$, and it does not blow up at either $0$ or $+\infty$. Therefore, $\sup_{W_a^p \ge 0} F_a(W_a^p)$ must be attained at some finite $W_a^{p*} \in [0,\infty)$. This completes the proof.

\section{Proof of Lemma~\ref{lem: Equiv_CVaR_to_bPoE}}
\label{app:Equiv_CVaR_to_bPoE}
By Lemma~\ref{lem:non_emptiness_cvar}, the scalarization optimal set $\mathcal{S}_{a}(\alpha, \gamma_a)$ for the Mean--CVaR problem is nonempty. As such, the existence of a point $\bigl(\mathcal{C}_a^*,\,\mathcal{E}_a^*\bigr) \in
\mathcal{S}_{a}(\alpha, \gamma_a)$ is guaranteed. Furthermore, the associated optimal threshold $W_{a}^{p*}$ exists by Lemma~\ref{lem:CVaR_threshold_existence}.
Therefore, $\gamma_{o}$ given by \eqref{eq:gamma_o} is well-defined.

Let $\mathcal{U}_{0, a}^{p\ast}$ denote the optimal control associated with $\bigl(\mathcal{C}_a^*,\,\mathcal{E}_a^*\bigr)$.
Then using definitions in \eqref{eq:ECVaR}, the point $\bigl(\mathcal{C}_a^*,\,\mathcal{E}_a^*\bigr)$ can be expressed in
terms of $W_{a}^{p*}$ and $\mathcal{U}_{0, a}^{p\ast}$ as follows:
\EQ
\label{eq:CE}
\mathcal{C}_a^* = W_{a}^{p\ast}+ \tfrac{1 }{\alpha }
\Ebb_{\mathcal{U}_{0, a}^{p\ast}}\left[ \min\left( W_{T}-W_{a}^{p\ast} , 0 \right)\right],
\quad \text{and}\quad
\mathcal{E}_a^* = \Ebb_{\mathcal{U}_{0, a}^{p\ast}}\left[W_{T}\right].
\EN
In order to prove that $\bigl(\mathcal{B}_o^*,\,\mathcal{E}_o^*\bigr)=  \bigl(\alpha,\,\mathcal{E}_a^*\bigr)$
is in $\mathcal{S}_{o}\!\bigl(\mathcal{C}_a^*, \gamma_{o}\bigr)$,
by definition of the Mean--bPoE scalarization optimal set given in  \eqref{eq:S_o_gamma},
we need to show
\EQ
\label{eq:scar_*}
\gamma_{o} \alpha  - \mathcal{E}_a^*
=
\underset{\substack{W_o^p > \mathcal{C}_a^*,\\ \mathcal{U}_0}}{\inf}
    \left\{
    \tfrac{\gamma_{a}(W_{a}^{p*} - \mathcal{C}_a^*)}{\alpha} \Ebb_{\mathcal{U}_0}\left[
  \max\left(1 - \tfrac{W_T - \mathcal{C}_a^*}{W_o^p - \mathcal{C}_a^*},\, 0\right)
\right]
- \Ebb_{\mathcal{U}_0}[W_T]
\right\}.
\EN
We first carry out further manipulations on both sides of \eqref{eq:scar_*}.
Using \eqref{eq:gamma_o} and the formula for $\mathcal{C}_a^*$ in \eqref{eq:CE} on the lhs of \eqref{eq:scar_*}, we obtain
\EQA
\text{lhs of \eqref{eq:scar_*}} &=&   \gamma_{a}(\,W_{a}^{p*} \;-\; \mathcal{C}_a^*\,) -  \mathcal{E}_a^*
\nonumber
\\
&=& -\tfrac{\gamma_{a} }{\alpha }
\Ebb_{\mathcal{U}_{0, a}^{p\ast}}\left[ \min\left( W_{T}-W_{a}^{p\ast} , 0 \right) \right] - \mathcal{E}_a^*.
\label{eq:gamma_o_star_alpha}
\ENA
Simplifying the $\max(\cdot, 0)$ term on the rhs of  \eqref{eq:scar_*}, noting
$W_o^p - \mathcal{C}_a^*>0$,  gives
\begin{align}
\text{rhs of \eqref{eq:scar_*}}
&= \underset{\substack{W_o^p > \mathcal{C}_a^*,\\ \mathcal{U}_0}}{\inf}
    \left\{
     -\tfrac{\gamma_{a} (W_{a}^{p*} - \mathcal{C}_a^*)}{\alpha \left( W_{o}^p-\mathcal{C}_a^* \right)} \Ebb_{\mathcal{U}_0}\left[ \min\left(W_{T}-W_{o}^p , 0 \right) \right]
     - \Ebb_{\mathcal{U}_0}\left[W_{T}\right]\right\}.
\label{eq:rhs}
\\
&\overset{(i)}{=} \!-\! \underset{\substack{W_o^p > \mathcal{C}_a^*,\\ \mathcal{U}_0}}{\sup}
\!\!\left\{
     \tfrac{\gamma_{a} (W_{a}^{p*} - \mathcal{C}_a^*)}{\alpha \left( W_{o}^p-\mathcal{C}_a^* \right)}\, \Ebb_{\mathcal{U}_0}\left[ \min\left(W_{T}-W_{o}^p , 0 \right) \right]
     + \Ebb_{\mathcal{U}_0}\left[W_{T}\right]\right\}.
      \label{eq: opt Mean_bPOE v3*}
\end{align}
Here, (i) is due to the identity
$-\underset{z \in \mathcal{X}}{\sup}\{f(z)\} = \underset{z \in \mathcal{X}}{\inf}\{-f(z)\}$
for any function $f$ and any set $\mathcal{X}$.

By noting that, with $W_o^p = W_{a}^{p\ast}$ and $\mathcal{U}_0 = \mathcal{U}_{0, a}^{p\ast}$
in \eqref{eq:rhs}, we obtain \eqref{eq:gamma_o_star_alpha} as a candidate
for the infimum in \eqref{eq:rhs}. This leads to the following inequality
\EQS
\begin{aligned}
\underset{\substack{W_o^p > \mathcal{C}_a^*,\\ \mathcal{U}_0}}{\inf}
    \left\{
     \tfrac{-\gamma_{a} (W_{a}^{p*} - \mathcal{C}_a^*)}{\alpha \left( W_{o}^p-\mathcal{C}_a^* \right)} \Ebb_{\mathcal{U}_0}\left[ \min\left(W_{T}-W_{o}^p , 0 \right) \right] - \Ebb_{\mathcal{U}_0}\left[W_{T}\right]\right\}
\le
-\tfrac{\gamma_{a} }{\alpha }
\Ebb_{\mathcal{U}_{0, a}^{p\ast}}\left[ \min\left( W_{T}-W_{a}^{p\ast} , 0 \right)\right]\, -\,  \mathcal{E}_a^*.
\end{aligned}
\ENS
This shows that the rhs of \eqref{eq:scar_*} is less than or equal to its lhs.

To show the reverse inequality, note that the right-hand side of \eqref{eq:scar_*} equals the right-hand side of \eqref{eq: opt Mean_bPOE v3*}.  Therefore, noting $\gamma_a W_{a}^{p*}$ is
a constant, it suffices to show
\[
\text{rhs of \eqref{eq: opt Mean_bPOE v3*}}\;\ge\; \text{lhs of \eqref{eq:scar_*}}, ~~
\text{or equivalently}, ~~
-\,\text{(rhs of \eqref{eq: opt Mean_bPOE v3*})} + \gamma_{a} W_{a}^{p*}   \;\le\; -\,\text{(lhs of \eqref{eq:scar_*})} + \gamma_{a} W_{a}^{p*}.
\]
That is, with the lhs of \eqref{eq:scar_*} given in \eqref{eq:gamma_o_star_alpha},
we will show
\EQ
\begin{aligned}
\label{eq:equ}
&\underset{\substack{W_o^p > \mathcal{C}_a^*,\\ \mathcal{U}_0}}{\sup}
\left\{\gamma_a W_{a}^{p*} +
     \tfrac{\gamma_{a} (W_{a}^{p*} - \mathcal{C}_a^*)}{\alpha \left( W_{o}^p-\mathcal{C}_a^* \right)}\, \Ebb_{\mathcal{U}_0}\left[ \min\left(W_{T}-W_{o}^p , 0 \right) \right]
     + \Ebb_{\mathcal{U}_0}\left[W_{T}\right]\right\}
\\
&
\qquad \qquad\qquad \qquad \le
 \gamma_a W_{a}^{p*} +\tfrac{\gamma_{a} }{\alpha }
\Ebb_{\mathcal{U}_{0, a}^{p\ast}}\left[ \min\left( W_{T}-W_{a}^{p\ast} , 0 \right) \right] + \mathcal{E}_a^*.
\end{aligned}
\EN
Starting from the lhs of \eqref{eq:equ}, we have
\EQS
\begin{aligned}
 \text{lhs of \eqref{eq:equ}} &\overset{(i)}{\le}
 \underset{\substack{W_o^p,\, \mathcal{U}_0}}{\sup}
\left\{\gamma_a W_{o}^{p} +
     \tfrac{\gamma_{a} (W_{o}^{p} - \mathcal{C}_a^*)}{\alpha \left( W_{o}^p-\mathcal{C}_a^* \right)}\, \Ebb_{\mathcal{U}_0}\left[ \min\left(W_{T}-W_{o}^p , 0 \right) \right]
     + \Ebb_{\mathcal{U}_0}\left[W_{T}\right]\right\}
\\
& \overset{(ii)}{=}
 \underset{\substack{W_o^p,\, \mathcal{U}_0}}{\sup}
\left\{\gamma_a W_{o}^{p} +
     \tfrac{\gamma_{a}}{\alpha}\, \Ebb_{\mathcal{U}_0}\left[ \min\left(W_{T}-W_{o}^p , 0 \right) \right]
     + \Ebb_{\mathcal{U}_0}\left[W_{T}\right]\right\}
\\
&\overset{(iii)}{=}
 \gamma_a W_{a}^{p*} +\tfrac{\gamma_{a} }{\alpha }
\Ebb_{\mathcal{U}_{0, a}^{p\ast}}\left[ \min\left( W_{T}-W_{a}^{p\ast} , 0 \right)\right]\, +\,  \mathcal{E}_a^*,
\end{aligned}
\ENS
which is the inequality \eqref{eq:equ} we need to prove.
Here, in (i), the inequality arises because $W_a^{p\ast}$ is replaced by a free variable $W_o^p$, and  the supremum is taken over a larger set for $W_{o}^{p}$;
(ii) follows from simplification; and (iii) is due to the fact that
$W_{o}^{p}$ is a dummy variable for the optimization, noting
$\mathcal{U}_{0, a}^{p\ast}$ and $W_{a}^{p*}$
are the optimal control/threshold pair associated with
$\bigl(\mathcal{C}_a^*,\mathcal{E}_a^*\bigr)$.
This concludes the proof.

\section{Proof of Lemma~\ref{lem: Equiv_bPoE_to_CVaR}}
\label{app:Equiv_bPoE_to_CVaR}
By Lemma~\ref{lem:non_emptiness},  the existence of a point $\bigl(\mathcal{B}_o^*,\,\mathcal{E}_o^*\bigr) \in \mathcal{S}_{o}\!\bigl(D, \gamma_{o}\bigr)$  is guaranteed. Furthermore, the associated optimal threshold $W_{o}^{p*}>D$ exists by Lemma~\ref{lem:infF}. Therefore, $\gamma_{a}$ given by \eqref{eq:gamma_a} is well-defined.

Let $\mathcal{U}_{0, o}^{p\ast}$ denote the optimal control associated with $\bigl(\mathcal{B}_o^*,\,\mathcal{E}_o^*\bigr)$.
Then using definitions in \eqref{eq:EbPoE}, the point $\bigl(\mathcal{B}_o^*,\,\mathcal{E}_o^*\bigr)$ can be expressed in
terms of $W_{o}^{p*}$ and $\mathcal{U}_{0, o}^{p\ast}$ as follows:
\EQ
\label{eq:CE*}
\mathcal{B}_o^* =
\Ebb_{\mathcal{U}_{0, o}^{p\ast}}\left[ \max\left(1 - \tfrac{W_T - D}{W_o^{p\ast} - D},\, 0\right)\right] = -\tfrac{\Ebb_{\mathcal{U}_{0, o}^{p\ast}}\left[ \min\left(W_T - W_o^{p\ast}, \, 0\right)\right]}{W_o^{p\ast} - D},
\quad \text{and}\quad
\mathcal{E}_o^* = \Ebb_{\mathcal{U}_{0, o}^{p\ast}}\left[W_{T}\right].
\EN

From \eqref{eq:CE*}, we can express the disaster level $D$ in terms of
$W_o^{p\ast}$, $\mathcal{B}_o^*$, and the expectation as
\EQ
\label{eq:DE}
D = W_o^{p\ast} + \tfrac{1}{\mathcal{B}_o^*}\, \Ebb_{\mathcal{U}_{0, o}^{p\ast}}\left[ \min\left(W_T - W_o^{p\ast}, \, 0\right)\right].
\EN
In order to prove that
$\bigl(\mathcal{C}_a^*,\mathcal{E}_a^*\bigr) =  \bigl(D,\,\mathcal{E}_o^*\bigr)$
belongs to $\mathcal{S}_{a}\!\bigl(\mathcal{B}_o^*, \gamma_{a}\bigr)$,
by definition \eqref{eq:S_a_gamma}, we need to verify
\begin{eqnarray}
    \gamma_{a} D + \mathcal{E}_{o}^{\ast} & = & \underset{W_{a}^{p}, \ \mathcal{U}_{0}}{\sup} \left\{ \gamma_{a}W_{a}^{p} + \frac{\gamma_{a}}{\mathcal{B}_{o}^{\ast}} \mathbb{E}_{\mathcal{U}_{0}}\left[ \min{\left(
W_{T}-W_{a}^{p}, 0 \right)} \right] + \mathbb{E}_{\mathcal{U}_{0}} \left[ W_{T} \right] \right\}.\label{eq: 5.28}
\end{eqnarray}
The lhs of~\eqref{eq: 5.28} can be written as
\begin{eqnarray}
    \gamma_{a} D + \mathcal{E}_{o}^{\ast} & = & \gamma_{a}W_{o}^{p \ast} + \frac{\gamma_{a}}{\mathcal{B}_{o}^{\ast}} \mathbb{E}_{\mathcal{U}_{0,o}^{p \ast}}\left[ \min{\left(
W_{T}-W_{o}^{p \ast}, 0 \right)} \right] + \mathcal{E}_{o}^{\ast}.
\end{eqnarray}
By noting that with $W_{a}^{p} = W_{o}^{p \ast}$ and $\mathcal{U}_{0} = \mathcal{U}_{0,o}^{p \ast}$, we obtain a candidate for the supremum of rhs of~\eqref{eq: 5.28}. This shows that
\begin{eqnarray}
    \gamma_{a} D + \mathcal{E}_{o}^{\ast} & \le & \underset{W_{a}^{p}, \ \mathcal{U}_{0}}{\sup} \left\{ \gamma_{a}W_{a}^{p} + \frac{\gamma_{a}}{\mathcal{B}_{o}^{\ast}} \mathbb{E}_{\mathcal{U}_{0}}\left[ \min{\left(
W_{T}-W_{a}^{p}, 0 \right)} \right] + \mathbb{E}_{\mathcal{U}_{0}} \left[ W_{T} \right] \right\}.\label{eq: 5.28 le}
\end{eqnarray}
To establish the reverse inequality, we note that the rhs of~\eqref{eq: 5.28} can be written as
\begin{eqnarray}
    & & \underset{W_{a}^{p}, \ \mathcal{U}_{0}}{\sup} \left\{ \gamma_{a}W_{a}^{p} + \frac{\gamma_{a}}{\mathcal{B}_{o}^{\ast}} \mathbb{E}_{\mathcal{U}_{0}}\left[ \min{\left(
W_{T}-W_{a}^{p}, 0 \right)} \right] + \mathbb{E}_{\mathcal{U}_{0}} \left[ W_{T} \right] \right\}
\nonumber
\\
& = & \underset{W_{a}^{p}, \ \mathcal{U}_{0}}{\sup} \left\{  \frac{\gamma_{o} \mathcal{B}_{o}^{\ast} W_{a}^{p} }{W_{o}^{p \ast} - D}  + \frac{\gamma_{o}  }{W_{o}^{p \ast} - D} \mathbb{E}_{\mathcal{U}_{0}}\left[ \min{\left(
W_{T}-W_{a}^{p}, 0 \right)} \right] + \mathbb{E}_{\mathcal{U}_{0}} \left[ W_{T} \right] \right\} \quad \left( \gamma_{a} = \frac{\gamma_{o} \mathcal{B}_{o}^{\ast} }{W_{o}^{p \ast} - D} \right)
\nonumber
\\
& = & \underset{W_{a}^{p}, \ \mathcal{U}_{0}}{\sup} \left\{  \frac{\gamma_{o} \mathcal{B}_{o}^{\ast} W_{a}^{p} }{W_{o}^{p \ast} - D}  - \gamma_{o} \mathbb{E}_{\mathcal{U}_{0}}\left[ \max{\left(
\frac{W_{a}^{p} - D  }{W_{o}^{p \ast} - D} - \frac{W_{T} - D  }{W_{o}^{p \ast} - D}, 0 \right)} \right] + \mathbb{E}_{\mathcal{U}_{0}} \left[ W_{T} \right] \right\}
\nonumber
\\
& = & - \underset{W_{a}^{p}, \ \mathcal{U}_{0}}{\inf} \left\{  -\frac{\gamma_{o} \mathcal{B}_{o}^{\ast} W_{a}^{p} }{W_{o}^{p \ast} - D}  + \gamma_{o} \mathbb{E}_{\mathcal{U}_{0}}\left[ \max{\left(
\frac{W_{a}^{p} - D  }{W_{o}^{p \ast} - D} - \frac{W_{T} - D  }{W_{o}^{p \ast} - D}, 0 \right)} \right] - \mathbb{E}_{\mathcal{U}_{0}} \left[ W_{T} \right] \right\}. \label{eq: 5.28 rhs v2}
\end{eqnarray}
And lhs of~\eqref{eq: 5.28} can be written as
\begin{eqnarray}
    \gamma_{a} D + \mathcal{E}_{o}^{\ast} & = & \frac{\gamma_{o} \mathcal{B}_{o}^{\ast} D }{W_{o}^{p \ast} - D} + \mathcal{E}_{o}^{\ast}
\end{eqnarray}
So our goal is to show
{\small
\begin{eqnarray}
    \gamma_{a} D + \mathcal{E}_{o}^{\ast} & \ge & \underset{W_{a}^{p}, \ \mathcal{U}_{0}}{\sup} \left\{ \gamma_{a}W_{a}^{p} + \frac{\gamma_{a}}{\mathcal{B}_{o}^{\ast}} \mathbb{E}_{\mathcal{U}_{0}}\left[ \min{\left(
W_{T}-W_{a}^{p}, 0 \right)} \right] + \mathbb{E}_{\mathcal{U}_{0}} \left[ W_{T} \right] \right\}
\nonumber
\\
\frac{\gamma_{o} \mathcal{B}_{o}^{\ast} D }{W_{o}^{p \ast} - D} + \mathcal{E}_{o}^{\ast} & \ge & - \underset{W_{a}^{p}, \ \mathcal{U}_{0}}{\inf} \left\{  -\frac{\gamma_{o} \mathcal{B}_{o}^{\ast} W_{a}^{p} }{W_{o}^{p \ast} - D}  + \gamma_{o} \mathbb{E}_{\mathcal{U}_{0}}\left[ \max{\left(
\frac{W_{a}^{p} - D  }{W_{o}^{p \ast} - D} - \frac{W_{T} - D  }{W_{o}^{p \ast} - D}, 0 \right)} \right] - \mathbb{E}_{\mathcal{U}_{0}} \left[ W_{T} \right] \right\}
\nonumber
\\
-\frac{\gamma_{o} \mathcal{B}_{o}^{\ast} D }{W_{o}^{p \ast} - D} - \mathcal{E}_{o}^{\ast} & \le &  \underset{W_{a}^{p}, \ \mathcal{U}_{0}}{\inf} \left\{  -\frac{\gamma_{o} \mathcal{B}_{o}^{\ast} W_{a}^{p} }{W_{o}^{p \ast} - D}  + \gamma_{o} \mathbb{E}_{\mathcal{U}_{0}}\left[ \max{\left(
\frac{W_{a}^{p} - D  }{W_{o}^{p \ast} - D} - \frac{W_{T} - D  }{W_{o}^{p \ast} - D}, 0 \right)} \right] - \mathbb{E}_{\mathcal{U}_{0}} \left[ W_{T} \right] \right\}
\nonumber
\\
\frac{\gamma_{o} \mathcal{B}_{o}^{\ast} \left(W_{o}^{p \ast} - D\right) }{W_{o}^{p \ast} - D} - \mathcal{E}_{o}^{\ast} & \le &  \underset{W_{a}^{p}, \ \mathcal{U}_{0}}{\inf} \left\{  \frac{\gamma_{o} \mathcal{B}_{o}^{\ast} \left(W_{o}^{p \ast} - W_{a}^{p}\right) }{W_{o}^{p \ast} - D}  + \gamma_{o} \mathbb{E}_{\mathcal{U}_{0}}\left[ \max{\left(
\frac{W_{a}^{p} - D  }{W_{o}^{p \ast} - D} - \frac{W_{T} - D  }{W_{o}^{p \ast} - D}, 0 \right)} \right] - \mathbb{E}_{\mathcal{U}_{0}} \left[ W_{T} \right] \right\}
\nonumber
\\
\gamma_{o} \mathcal{B}_{o}^{\ast} - \mathcal{E}_{o}^{\ast} & \le &  \underset{W_{a}^{p}, \ \mathcal{U}_{0}}{\inf} \left\{  \frac{\gamma_{o} \mathcal{B}_{o}^{\ast} \left(W_{o}^{p \ast} - W_{a}^{p}\right) }{W_{o}^{p \ast} - D}  + \gamma_{o} \mathbb{E}_{\mathcal{U}_{0}}\left[ \max{\left(
\frac{W_{a}^{p} - D  }{W_{o}^{p \ast} - D} - \frac{W_{T} - D  }{W_{o}^{p \ast} - D}, 0 \right)} \right] - \mathbb{E}_{\mathcal{U}_{0}} \left[ W_{T} \right] \right\} \label{eq: 5.28 ge target}
\end{eqnarray}
}
Now we focus on rhs of~\eqref{eq: 5.28 ge target},
\begin{eqnarray*}
    & & \underset{W_{a}^{p}, \ \mathcal{U}_{0}}{\inf} \left\{  \frac{\gamma_{o} \mathcal{B}_{o}^{\ast} \left( {\color{black}W_{o}^{p \ast}} - W_{a}^{p} \right) }{W_{o}^{p \ast} - D}  + \gamma_{o} \mathbb{E}_{\mathcal{U}_{0}}\left[ \max{\left(
\frac{W_{a}^{p} - D  }{ {\color{black}W_{o}^{p \ast}} - D} - \frac{W_{T} - D  }{ {\color{black}W_{o}^{p \ast}} - D}, 0 \right)} \right] - \mathbb{E}_{\mathcal{U}_{0}} \left[ W_{T} \right] \right\}
\nonumber
\\
& \overset{(i)}{\ge} & \underset{W_{a}^{p}, \ \mathcal{U}_{0}}{\inf} \left\{  \frac{\gamma_{o} \mathcal{B}_{o}^{\ast} \left( {\color{black}W_{a}^{p }} - W_{a}^{p} \right) }{W_{o}^{p \ast} - D}  + \gamma_{o} \mathbb{E}_{\mathcal{U}_{0}}\left[ \max{\left(
\frac{W_{a}^{p} - D  }{ {\color{black}W_{a}^{p }} - D} - \frac{W_{T} - D  }{ {\color{black}W_{o}^{p \ast}} - D}, 0 \right)} \right] - \mathbb{E}_{\mathcal{U}_{0}} \left[ W_{T} \right] \right\}
\nonumber
\\
& = & \underset{ \mathcal{U}_{0}}{\inf} \left\{   \gamma_{o} \mathbb{E}_{\mathcal{U}_{0}}\left[ \max{\left(
1 - \frac{W_{T} - D  }{ {\color{black}W_{o}^{p \ast}} - D}, 0 \right)} \right] - \mathbb{E}_{\mathcal{U}_{0}} \left[ W_{T} \right] \right\}
\nonumber
\\
& \overset{(ii)}{\ge} & \underset{ W_{o}^{p} > D, \ \mathcal{U}_{0}}{\inf} \left\{   \gamma_{o} \mathbb{E}_{\mathcal{U}_{0}}\left[ \max{\left(
1 - \frac{W_{T} - D  }{ {\color{black}W_{o}^{p }} - D}, 0 \right)} \right] - \mathbb{E}_{\mathcal{U}_{0}} \left[ W_{T} \right] \right\}
\nonumber
\\
& \overset{(iii)}{=} & \gamma_{o} \mathcal{B}_{o}^{\ast} - \mathcal{E}_{o}^{\ast}.
\nonumber
\end{eqnarray*}
Here, (i) arises because $W_{o}^{p \ast}$ is replaced by a free variable $W_{a}^{p}$;
(ii) results from replacing $W_{o}^{p \ast}$ by a free variable $W_{o}^{p} > D$. Recall $W_{o}^{p \ast}>D$, so a free variable $W_{o}^{p} > D$ indeed provides a larger range;
(iii) is due to the fact that $W_{o}^{p \ast}$ and $\mathcal{U}_{0, o}^{\ast}$ are the optimal solution associated with $\left( \mathcal{B}_{o}^{\ast}, \mathcal{E}_{o}^{\ast} \right)$. This concludes the proof.

%
\def\r{\right}
\def\l{\left}
\def\f{\frac}
\def\b{\textbf}
\def\s{\sqrt}
\def\i{\infty}
\def\t{\text}
\section{\texorpdfstring{An infinite series representation of $\boldsymbol{g(y,\Delta t)}$}{Transition density}}
\label{app:num_scheme}
Under the log-transformation, the jump-diffusion dynamics in~\eqref{eq: dSt}—where $\log(\xi)$ follows an asymmetric double-exponential distribution with density given in~\eqref{eq: pdf for Kou model}—admit an infinite series representation for the conditional transition density $g(y, \Delta t)$~\cite{zhang2024monotone}[Corollary 3.1], as presented below. Define
\[
\alpha = \frac{\sigma^2}{2} \, \Delta t, \quad
\beta = \left(\mu - \lambda \kappa - \frac{\sigma^2}{2}\right) \Delta t, \quad
\theta = -\lambda \Delta t.
\]
Then
\EQ
\label{eq:series}
g(y, \Delta t) = g_0(y, \Delta t) + \sum_{\ell=1}^{\infty} \Delta g_\ell(y, \Delta t),
\EN
where $g_0(y, \Delta t) = \frac{\exp\left(\theta - \frac{(\beta + y)^2}{4\alpha}\right)}{\sqrt{4\pi \alpha}}$, and the remaining terms $\Delta g_\ell(y, \Delta t)$ are given by:
\begin{align}
\label{eq:g_EJ_double_exp}
g_{\ell}(y, \Delta t) =
&\f{e^{\theta}}{\s{4 \pi \alpha}}
\f{\left(\lambda \Delta t\right)^{\ell}}{\ell!}
\l[
\sum_{k=1}^\ell \, Q_1^{\ell,k} \,
\left(\eta_1 \, \sqrt{2\alpha}\right)^{k} \,
\e^{\eta_1 \, \left({{\beta+y}}\right) + \eta_1^2 \alpha} \,
\mathrm{Hh}_{k-1}\left(\eta_1 \sqrt{2\alpha} + \f{{{\beta+y}}}{\sqrt{2\alpha}}\right)
\r. \nonumber\\
& \qquad \qquad \qquad
\l.
+ \sum_{k=1}^\ell \, Q_2^{\ell, k} \,
\left(\eta_2 \, \sqrt{2\alpha}\right)^{k} \,
\e^{-\eta_2 \, \left({{\beta+y}} \right) + \eta_2^2 \alpha} \,
\mathrm{Hh}_{k-1}\left(\eta_2 \sqrt{2\alpha} - \f{{{\beta+y}}}{\sqrt{2\alpha}} \right)
\r],
\end{align}
where $Q_1^{\ell, k}$, $Q_2^{\ell, k}$ and $\mathrm{Hh}_{k}$ are defined as follows
\begin{align}
\label{eq:PQ}
Q_1^{\ell, k} &= \sum_{i=k}^{\ell-1} \binom{\ell-k-1}{i-k} \binom{\ell}{i}
\left(\f{\eta_1}{\eta_1 + \eta_2}\right)^{i-k}\left(\f{\eta_2}{\eta_1 + \eta_2}\right)^{\ell-i} p_{up}^{i}(1-p_{up})^{\ell-i},
\quad
1 \le k \le \ell-1,
\nonumber\\
Q_2^{\ell, k} &= \sum_{i=k}^{\ell-1} \binom{\ell-k-1}{i-k} \binom{\ell}{i}
\left(\f{\eta_1}{\eta_1 + \eta_2}\right)^{\ell-i}\left(\f{\eta_2}{\eta_1 + \eta_2}\right)^{i-k} p_{up}^{\ell-i}(1-p_{up})^{i},
\quad
1 \le k \le \ell-1,
\end{align}
where $Q_1^{\ell,\ell}=p_{up}^\ell$ and $Q_2^{\ell,\ell}=(1-p_{up})^\ell$, and
\begin{align}
\label{eq:Hhk}
Hh_{k} (x) = \f{1}{k !} \int_{x}^{\i} \left(z-x\right)^{k} e^{-\f{1}{2} z^2} d z,
\t{ with }
Hh_{-1} (x) = e^{-x^2/2}, \text{ and }
Hh_{0} (x) = \s{2\pi}\text{NorCDF}(-x).
\end{align}
Here, NorCDF denotes CDF of standard normal distribution $\mathcal{N}(0,1)$.
For this case, we note that function $Hh_{\ell}(\cdot)$ can be evaluated very efficiently using the standard normal density function
and standard normal distribution function via the three-term recursion \cite{AbramowitzStegun1972}
\EQA
k \, Hh_{k}(x) = Hh_{k-2}(x) - x Hh_{k-1}(x),
\quad k \ge 1.
\nonumber
\ENA
For computational purposes, the infinite series in~\eqref{eq:series} is truncated after $N_g$ terms, yielding the approximation $g(y, \Delta t; N_g)$. As $N_g \to \infty$, the approximation becomes exact. For finite $N_g$, however, truncation introduces an error. As shown in~\cite{zhang2024monotone}, this truncation error can be bounded by
\EQ
\label{eq:Kbound}
\left| g(y, \Delta t) - g(y, \Delta t; N_g) \right|
\le
\f{\l(\lambda\Delta t\r)^{N_g+1}}{(N_g+1)!}\,
\f{1}{\s{2 \pi \sigma^2 \Delta t}}.
\EN
Therefore, from \eqref{eq:Kbound}, as $N_g \to \infty$, we have $\f{\l(\lambda\Delta t\r)^{N_g+1}}{(N_g+1)!} \rightarrow 0$,
and the truncated approximation $g(y, \Delta t; N_g)$ converges to the exact density $g(y, \Delta t)$. For a given tolerance $\epsilon_g > 0$, one can choose $N_g$ such that the truncation error satisfies $\l| g(y, \Delta t) - g(y, \Delta t; N_g) \r| < \epsilon_g$.
This can be ensured by requiring $N_g$ to satisfy
$\ds \f{\l(\lambda\Delta t\r)^{N_g+1}}{(N_g+1)!} \le \epsilon_g \s{2 \pi \sigma^2 \Delta t}$.
It then follows that if $\epsilon_g = \mathcal{O}(h)$, we can choose $N_g = \mathcal{O}(\ln(h^{-1}))$ as $h \to 0$.

\section{Proof of Theorem~\ref{thm:convergence}}
\label{app:convergence}
We now present the convergence proof for the pre-commitment Mean–bPoE/CVaR formulations.
\subsection{Value function convergence}
We first establish the convergence of the value function:
$\ds \lim_{h \to 0} \bigl| V_h^p(s_0, b_0, t_0^-) - V^p(s_0, b_0, t_0^-) \bigr| = 0$.
The proof consists of two parts: we first establish a convergence bound for the inner optimization problem at a fixed discrete threshold, and then analyse the convergence of the outer optimization at time $t_0$ as  $h\to 0$.
\subsubsection{Inner optimization} For now, we fix $W^p \in \Gamma$ and consider the inner control problem, which corresponds to a portfolio optimization over the bounded domain $\Omega \times \mathcal{T}\cup\{T\}$. We show that the scheme for this inner problem satisfies three key properties: $\ell_\infty$-stability, monotonicity, and local consistency. To facilitate the analysis, we reformulate both the pre-commitment Mean–bPoE/CVaR localized problem in Definition~\ref{def:glwb} and the numerical scheme~\eqref{eq:terminal_sym}–\eqref{eq:incept_pre}, including both interior and boundary equations, each in a unified operator form.
We first start with the localized problem formulation.

For notational convenience,  let $\hat{x} = (y, b, W^p) \in \Omega \times \Gamma$ and $\hat{x}^m = (\hat{x}, t_m)$ with $t_m \in \mathcal{T} \cup \{t_M = T\}$. For fixed $W^p \in \Gamma$, we denote by $\widehat{V}^p(W^p, t_{m+1})$ the function $\widehat{V}^p(y, b, W^p, t_{m+1})$. When clear from context, we write $\widehat{V}^p(y, b, \cdot, t_{m+1})$ or $\widehat{V}^p(\cdot, t_{m+1})$ in place of the full expression.

Given a state $(y, b)$ and rebalancing control $u_m \in \mathcal{Z}$, we recall the definitions $y^+ = y^+(y, b, u_m)$ and $b^+ = b^+(y, b, u_m)$ from~\eqref{eq:sbplus}. Including both interior and boundary equations, we now express the pre-commitment Mean–bPoE/CVaR problem at the reference point $\hat{x}^m$ (with fixed $W^p$) using the operator $\mathcal{D}(\cdot)$ as follows:
\begin{align}
\label{eq:con_mean}
 \widehat{V}^p\big(\hat{x}^m\big)   = \mathcal{D}\left(\hat{x}^m,\,  \widehat{V}^p(\cdot, t_{m+1}^-)\right)
 &=
 \begin{cases}
 \underset{u_m \in \mathcal{Z}}{\inf}\, \widehat{V}\left(y^+(y, b, u_m),b^+(y, b, u_m), \cdot,  t_m^+\right),
 & \text{Mean--bPoE},
 \\
 \underset{u_m \in \mathcal{Z}}{\sup}\, \widehat{V}\left(y^+(y, b, u_m),b^+(y, b, u_m), \cdot,  t_m^+\right),
 & \text{Mean--CVaR}.
  \end{cases}
\end{align}
In both cases, the quantity $\widehat{V} \in \{\widehat{V}_o^p,\, \widehat{V}_a^p\}$ denotes the appropriate value function, and is given as follows:
$\widehat{V}\left(y^+(y, b, u_m),b^+(y, b, u_m), \cdot,  t_m^+\right) = \ldots$
\begin{linenomath}
\begin{subequations}\label{eq:scheme_v_all*}
\begin{empheq}[left={
\ldots
= \empheqlbrace}]{alignat=6}
&
\widehat{V}\left(y^+(y, be^{r\Delta t}, u_m), b^+(y, be^{r\Delta t}, u_m), \cdot, t_{m+1}^-\right)
&&
&& \quad\quad (y, b) \in \Omega_{y_{\min}} ,
\label{eq:scheme_v_all_1*}
\\
&
\int_{y_{\min}^{\dagger}}^{y_{\max}^{\dagger}}
\widehat{V}\left(y^+(y', be^{r\Delta t}, u_m), b^+(y, be^{r\Delta t}, u_m), \cdot, t_{m+1}^-\right)\, g(y - y', \Delta  t )~dy'
&&
&&\quad\quad (y, b) \in  \Omega_{\myin},
\label{eq:scheme_v*}
\\
&
e^{\mu\Delta t} ~ \widehat{V}\left(y^+(y, be^{r\Delta t}, u_m), b^+(y, be^{r\Delta t}, u_m),
\cdot, t_{m+1}^-\right)
&&
&&\quad\quad (y, b) \in \Omega_{y_{\max}}.
\label{eq:scheme_v_all_3*}
\end{empheq}
\end{subequations}
\end{linenomath}
Let $\Omega^h \times \Gamma^h$ denote the computational grid parameterized by $h$.
Let $\Omega_{\myin}^h$ denote the interior sub-grid, and $\Omega_{y_{\min}}^h$, $\Omega_{y_{\max}}^h$ the boundary sub-grids in $y$.
Including both interior and boundary equations, we now write the numerical scheme at the reference  node $\hat{x}_{n,j}^{k,m} = (y_n, b_j, W_k, t_m) \in \Omega^h \times \Gamma^h \times \{t_m\}$ in operator form via $\mathcal{D}_h(\cdot)$ as follows:
\begin{align}
\label{eq:con_mean_disc}
 \widehat{V}_h^p\big(\hat{x}_{n, j}^{k, m}\big)   &= \mathcal{D}_h\left(\hat{x}_{n, j}^{k, m},\,  \left\{\widehat{V}_h^p\big(\hat{x}_{l, j}^{k, (m+1)-}\big)\right\}_{l=-N^{\dagger}/2}^{N^{\dagger}/2}\right)
 \nonumber
 \\
 &=
 \begin{cases}
 \underset{\{u_i\}}{\min}\, \widehat{V}_{h}\left(y^+(y_n, b_j, u_i),b^+(y_n, b_j, u_i), \cdot,  t_m^+\right),
 & \text{Mean--bPoE},
 \\
 \underset{\{u_i\}}{\max}\, \widehat{V}_{h}\left(y^+(y_n, b_j, u_i),b^+(y_n, b_j, u_i), \cdot,  t_m^+\right),
 & \text{Mean--CVaR}.
  \end{cases}
\end{align}
In both cases, the quantity $\widehat{V}_h \in \{\widehat{V}_{o, h}^p,\, \widehat{V}_{a, h}^p\}$ denotes the appropriate discrete approximation, and is defined as
$\widehat{V}_h\left(y^+(y_n, b_j, u_i),b^+(y_n, b_j, u_i), \cdot,  t_m^+\right) = \ldots$
\begin{linenomath}
\begin{subequations}\label{eq:scheme_v_all*_disc}
\begin{empheq}[left={
\ldots
= \empheqlbrace}]{alignat=6}
&
\widehat{V}_h\left(y^+(y_n, b_je^{r\Delta t}, u_i), b^+(y_n, b_je^{r\Delta t}, u_i), \cdot, t_{m+1}^-\right)
&&
&& \quad (y_n, b_j) \in \Omega_{y_{\min}}^h ,
\label{eq:scheme_v_all_1*}
\\
&
 \sum_{l=-N^{\dagger}/2}^{N^{\dagger}/2} \varphi_{l}~
g(y_n - y_l, \Delta t)
\widehat{V}_h\left(y^+(y_l, b_je^{r\Delta t}, u_i), b^+(y_n, b_je^{r\Delta t}, u_i), \cdot, t_{m+1}^-\right)
&&
&&\quad (y_n, b_j) \in  \Omega_{\myin}^h,
\label{eq:scheme_v*_disc}
\\
&
e^{\mu\Delta t}  \widehat{V}_h\left(y^+(y_n, b_je^{r\Delta t}, u_i), b^+(y_n, b_je^{r\Delta t}, u_i),
\cdot, t_{m+1}^-\right)
&&
&&\quad(y_n, b_j) \in \Omega_{y_{\max}}^h,
\label{eq:scheme_v_all_3*_disc}
\end{empheq}
\end{subequations}
\end{linenomath}
where $y^+(y', b', u_m)$ and $b^+ = b^+(y', b', u_m)$ are  given by \eqref{eq:sbplus}.

Let $\epsilon_g$ denote the truncation error bound in approximating the transition density function $g(\cdot, \Delta t)$ by its $N_g$-term series expansion;
that is, $|g(\cdot, \Delta t) - g(\cdot, \Delta t; N_g)| < \epsilon_g$.
Suppose linear interpolation is used for the intervention step.
Also suppose that as $h \to 0$, $N_y^\dagger, N_b, N_w, N_u, N_g, \to \infty$.
Following the general framework of convergence proofs for numerical approximations of
solutions in the stochastic control setting (see \cite{kushner2001numerical, barles-souganidis:1991}), we show that our scheme, for each fixed $W_k \in \Gamma^h$,
is $\ell_\infty$-stable, locally consistent, and monotone. Specifically,
\begin{itemize}[noitemsep, topsep=0pt, leftmargin=*]
\item $\ell_\infty$-stability: the scheme  \eqref{eq:terminal_sym}--\eqref{eq:incept_pre} for $V_h^p(\cdot)$ satisfies the bound:
    \EQ
    \label{eq:stab}
    \sup_{h > 0} \big\| \widehat{V}_h^p(W_k, t_m) \big\|_{\infty} < \infty, \quad \forall t_m \in \mathcal{T} \cup \{T\}, \text{ as  $h \to 0$}.
    \EN
    Here, we have $\big\|  \widehat{V}_h^p(W_k, t_m) \big\|_{\infty} = \max_{n, j} |\widehat{V}_h^p\big(\hat{x}_{n,j}^{k,m}\big)|$, where $(y_n, b_j, W_k) \in \Omega^h\times \Gamma^h$ and $W_k$ is fixed.

\item Local consistency: For any smooth test function $\phi \in \mathcal{C}^{\infty}(\Omega \cup \Omega_{b_{\max}} \times [0,T])$ for a sufficiently small $h, \chi$,
\begin{align}
\label{eq:pointwise_consistency}
\mathcal{D}_h\left(\hat{x}_{n,j}^{k,m}, \left\{\widehat{V}_h^p\big(
\hat{x}_{l,j}^{k,(m+1)-}\big) +\chi  \right\}_{l=-N^{\dagger}/2}^{N^{\dagger}/2}\right)
=
\mathcal{D}\left(\hat{x}_{n,j}^{k,m}, \widehat{V}^p(\cdot, t_{m+1}^-)\right)
+
\mathcal{E}\left(\hat{x}_{n,j}^{k,m}, \epsilon_g, h\right)
+ \mathcal{O}\left(\chi+ h\right),
\end{align}
where $\mathcal{E}(\hat{x}_{n,j}^{k,m}, \epsilon_g, h) \to 0 $ as $\epsilon_g, h \to 0$.


\item Monotonicity: the numerical scheme $\mathcal{D}_h(\cdot)$ satisfies
\EQA
\label{eq:mon}
\mathcal{D}_h\left(\hat{x}_{n,j}^{k,m},
\left\{\varphi_{l,j}^{k,{m+1}}\right\}_{l=-N^{\dagger}/2}^{N^{\dagger}/2}\right)
\leq
\mathcal{D}_h\left(\hat{x}_{n,j}^{k,m},
\left\{\psi_{l,j}^{k,m+1})\right\}_{l=-N^{\dagger}/2}^{N^{\dagger}/2}\right)
\ENA
for bounded discrete data sets $\left\{\varphi_{l,j}^{k,{m+1}}\right\}$ and $\left\{\psi_{l,j}^{k,m+1}\right\}$ having $\left\{\varphi_{l,j}^{k,{m+1}}\right\} \leq \left\{\psi_{l,j}^{k,m+1}\right\}$, where the inequality is understood in the component-wise sense.
\end{itemize}
The $\ell_\infty$-stability follows from a maximum-principle argument, which applies since $\Omega$ is bounded, together with the monotonicity of linear interpolation, which is preserved under the scheme’s $\min/\max$ operations. Monotonicity of the scheme itself follows from this same interpolation structure. Local consistency is established via standard interpolation error analysis, Taylor expansion of the smooth test function, truncation error from the infinite series, and compactness of the admissible control set $\mathcal{Z}$. For further details, we refer the reader to~\cite{zhang2024monotone}, which develops similar techniques in the context of pre-commitment Mean–Variance optimization.

\textbf{Convergence bound for the inner problem:}
We now state a convergence bound for the inner problem using $h$-indexed notation to reflect the role of mesh refinement. A generic grid point in $\Omega_{\myin}^h$ is denoted by $(y_h, b_h)$, and we write $W_h \in \Gamma^h$ to emphasize the $h$-dependence of the discretized threshold. The following bound holds for the inner optimization problem at fixed $W_h \in \Gamma^h$:
\begin{center}
\begin{minipage}{0.95\textwidth}
\emph{Let $(y', b') \in \Omega$ be arbitrary. Suppose that linear interpolation is used for intervention step. Under the assumption that $\epsilon_g \to 0$ as $h \to 0$, we have: for each $m \in \{M-1, \ldots, 0\}$ and any sequence $\{(y_h, b_h)\}$ with $(y_h, b_h) \in \Omega_{\myin}^h$ and $(y_h, b_h) \to (y', b')$ as $h \to 0$,
\begin{align}
\label{eq:con_inner}
\bigl| \widehat{V}_h^p(y_h, b_h, W_h, t_m) - \widehat{V}^p(y', b', W_h, t_m) \bigr| \le \chi_h^m, \quad \chi_h^m \text{ is bounded } \forall h > 0 \text{ and }
\chi_h^m \to 0 \text{ as } h \to 0.
\end{align}
}
\end{minipage}
\end{center}
The result in~\eqref{eq:con_inner} can be proved using $\ell_\infty$-stability~\eqref{eq:stab}, local consistency~\eqref{eq:pointwise_consistency}, monotonicity~\eqref{eq:mon}, and backward induction on $m$.

\subsubsection{\texorpdfstring{Outer optimization at time $\boldsymbol{t_0}$ }{Outer optimization}}
\label{ssc:outer}
At time $t_0$, for each fixed $W^p\in \Gamma$ (resp.\ $W_h \in \Gamma^h$), we define
\EQ
\label{eq:FFh}
F(W^p) \quad \big(\text{resp.\ } F_h(W_h)\big) :=
 \begin{cases}
    ~~\widehat{V}_o^{p}(y_0,b_0,W_o^p,t_0) \quad \big(\text{resp.\ } ~~~\widehat{V}_{o, h}^{p}(y_0,b_0, W_h,t_0)\big)
      &\text{Mean--bPoE},
      \\
    -\widehat{V}_a^{p}(y_0,b_0,W^p_a,t_0) \quad \big(\text{resp.\ } -\widehat{V}_{a, h}^{p}(y_0,b_0,W_h,t_0)\big)
      &\text{Mean--CVaR}.
  \end{cases}
\EN
We include the minus sign in the Mean–CVaR case to unify both problems under a minimization framework for the outer optimization.
We note that the function $F$ is continuous in $W^p$.
Moreover, since the problem is localized, $F(W^p)$ remains finite on $\Gamma$.
For any $W^p \in \Gamma$, the triangle inequality yields:
\begin{equation}
\label{eq:tri_F}
\bigl|F_h(W_h) - F(W^p)\bigr|
\le
\bigl|F_h(W_h) - F(W_h)\bigr|
+
\bigl|F(W_h) - F(W^{p})\bigr|
\overset{\text{(i)}}{\le} \chi_h  +
\bigl|F(W_h) - F(W^{p})\bigr|
\overset{\text{(ii)}}{\xrightarrow[h \to 0]{}}
 0.
\end{equation}
Here, (i) follows from the definitions of $F$ and $F_h$, and from the bound~\eqref{eq:con_inner} applied at $t_0$, which gives $\bigl|F_h(W_h) - F(W_h)\bigr| \le \chi_h$, where $\chi_h \to 0$ as $h \to 0$.
The limit in (ii) is due to the denseness of $\Gamma^h$ in $\Gamma$ as $h \to 0$:
for any $W^p \in \Gamma$, there exists a sequence $\{W_h\}$ with $W_h \in \Gamma^h$ and $W_h \to W^p$ as $h \to 0$, and since $F$ is continuous, it follows that $\lim_{h \to 0} F(W_h) = F(W^p)$.

We now show
\EQ
\label{eq:conv_v}
\displaystyle \lim_{h \to 0}~\min_{W_h \in \Gamma_h } F_h(W_h) =  \min_{W^p \in \Gamma} F(W^p)
\EN
by establishing the corresponding upper and lower bounds:
\begin{align}
\label{eq:limsup}
\limsup_{\,h\to 0} \min_{W_h\in \Gamma^h} F_h(W_h) \le
\min_{W^p\in \Gamma} F(W^p), \quad \text{and}\quad
\liminf_{\,h\to 0} \min_{W_h\in \Gamma^h} F_h(W_h) \ge
\min_{W^p\in \Gamma} F(W^p).
\end{align}
Once~\eqref{eq:conv_v} is established, the convergence of the value function follows:
\[
\lim_{h\to 0}
     \bigl\lvert
       V_h^p(s_0,b_0,t_0^-)
       -
       V^p(s_0,b_0,t_0^-)
     \bigr\rvert =
     0.
\]
\\
\noindent We first show the $\limsup$ portion of \eqref{eq:limsup}.
Pick an arbitrary $\varepsilon> 0$ and  let $\rho =\min_{W\in\Gamma}F(W)$.
By definition of $\min$, there exists $W'\in \Gamma$ with $F(W')\le \rho+\tfrac{\varepsilon}{2}$. Next, because $\Gamma^h\to\Gamma$ as $h \to 0$,
choose $W_h\in \Gamma^h$ so $\lvert W_h -W'\rvert \to 0$. By continuity,
$F(W_h)\le F(W')+\tfrac{\varepsilon}{2}$.
Recalling the inner problem convergence bound \eqref{eq:con_inner},
we have  $\lvert F_h(W_h)-F(W_h)\rvert\le \chi_h$, where $\chi_h \to 0$
as $h \to 0$. Then
\[
\min_{W_i\in \Gamma^h}F_h(W_i)
     \le
     F_h(W_h)
     \le
     F(W_h)+\chi_h
     \le
     F(W') + \tfrac{\varepsilon}{2} + \chi_h
     \le
     \rho + \varepsilon + \chi_h.
\]
For sufficiently small $h$, $\chi_h\le \varepsilon$. Thus $\min_{W_i\in \Gamma^h}F_h(W_i)\le \rho +2\varepsilon$. Hence, we obtain the $\limsup$ portion
\[
\limsup_{h\to0}\min_{W_i\in\Gamma^h}F_h(W_i)\le \min_{W\in\Gamma}F(W).
\]
For the $\liminf$ portion of \eqref{eq:limsup}, we fix any $W_h\in\Gamma^h$.
By the inner problem convergence bound \eqref{eq:con_inner}, we have $F_h(W_h) \ge  F(W_h)- \chi_h$, and since $\min_{W\in\Gamma}F(W)\le F(W_h)$, we have
$F_h(W_h) \ge \min_{W \in \Gamma}F(W)- \chi_h$, from which, taking the minimum over $W_h$ gives $\min_{W_h\in \Gamma^h}\,F_h(W_k)  \ge  \min_{W\in \Gamma}F(W) -\chi_h$.
Taking limit both sides of the above $h\to0$, noting $\chi_h\to 0$, gives
the $\liminf$ portion:
\[
 \liminf_{\,h\to0}
     \min_{W_h\in\Gamma^h} F_h(W_h)
     \ge
     \min_{W\in \Gamma}F(W).
 \]

\subsection{\texorpdfstring{Convergence to optimal threshold $\boldsymbol{W^{p\ast}}$}{Optimal threshold}}
We denote by $W_{h}^{\ast}$ the computed optimal threshold
which minimizes the discrete objective $F_h(\cdot)$ defined in \eqref{eq:FFh}, i.e.\
$F_h\bigl(W_{h}^{\ast}\bigr) =   \min_{\,W_h\in \Gamma^h}\,F_h(W_h)$.
For a fixed $h > 0$, as noted earlier, the value $W_h^\ast$ is obtained by exhaustive search over the discrete set $\Gamma^h$, ensuring that the discrete global minimiser is found.
Since $\Gamma$ is compact, and the discrete thresholds $W_{h}^{\ast}\in \Gamma^h\subset \Gamma$ all lie in this compact set, any infinite sequence $\{W_{h}^{\ast}\}$ has a subsequence $\{W_{h_k}^{\ast}\}$ convergent to some $W^{\ast} \in \Gamma$, i.e.\
$W_{h_k}^{\ast} \;\xrightarrow[k\to\infty]{} W^{\ast} \in \Gamma$.

We now show that $W^{\ast}$ is indeed the optimal threshold of the pre-commitment Mean-bPoE/CVaR problem:
\[
W^{\ast} \in \argmin_{W \in \Gamma} F(W), \quad
\text{$F(\cdot)$ defined in \eqref{eq:FFh}}.
\]
Because $\Gamma^h\subset\Gamma$ and by the convergence bound for the inner problem in
\eqref{eq:con_inner}, for any fixed $W \in \Gamma^h$, we have  $\bigl|F_h(W)-F(W)\bigr|\to 0$ as $h \to 0$. This implies
$\bigl|F_h(W_{h_k}^{\ast}) - F(W_{h_k}^{\ast})\bigr| \;\xrightarrow[k\to\infty]{}0$.
Since $W_{h_k}^{\ast}\to W^{\ast}$ and $F$ is continuous,
   we get $F(W_{h_k}^{\ast})\to F(W^{\ast})$. Thus, we also have
$F_h(W_{h_k}^{\ast})\to F(W^{\ast})$ as $k \to \infty$.
Putting together, we arrive at
\EQ
\label{eq:FhtoF}
F_h\bigl(W_{h_k}^{\ast}\bigr) \;\xrightarrow[k\to \infty]{} F\bigl(W^\ast\bigr).
\EN
By convergence of the outer optimization problem at time $t_0$
given in \eqref{eq:conv_v}:
\EQ
\label{eq:minFhtoF}
\min_{W_h\in\Gamma^h}F_h(W_h)
\;\xrightarrow[h\to 0]{}
\min_{W^p \in \Gamma}F(W^p).
\EN
From \eqref{eq:minFhtoF},
we conclude that
\[
F_h\bigl(W_{h_k}^{\ast}\bigr) =
  \min_{\,W\in\Gamma^{h_k}}F_{h_k}(W)
  \;\;\xrightarrow[k\to\infty]{}
  \min_{\,W^p\in \Gamma}F(W^p),
\]
which, together with \eqref{eq:FhtoF}, gives $F(W^\ast) = \min_{\,W^p\in \Gamma}F(W^p)$,
meaning
$W^\ast\in \argmin_{\,W\in \Gamma}F(W)$, as wanted.
This concludes the proof.

\section{Proof of Theorem~\ref{thm:convergence_tc}}
\label{app:convergence_tc}
For the purpose of analysis, the time-consistent Mean–bPoE/CVaR formulations and their numerical schemes can be expressed using unified operators similar to $\mathcal{D}$ and $\mathcal{D}_h$ defined in~\eqref{eq:con_mean}–\eqref{eq:con_mean_disc} for the pre-commitment case. 
{\sifin{Specifically, including both interior and boundary equations,
we write the time–consistent Mean–bPoE/CVaR problem at the reference point
$\hat{x}^{m} = (y,b,W^c,t_m)$ using the operator $\mathcal{G}(\cdot)$ as follows
\begin{align}
\label{eq:con_mean_c}
\widehat{V}^c\bigl(\hat{x}^{m}\bigr)
   &= \mathcal{G}\bigl(\hat{x}^{m},\,\widehat{V}^c(\cdot,t_{m+1}^-)\bigr),
\nonumber
\\
&=
\begin{cases}
\displaystyle
\underset{u_m \in \mathcal{Z}}{\inf}\;
\widehat{V}_o^c\Bigl(
   y^+(y,b,u_m),\,
   b^+(y,b,u_m),\,
   W^c,\,
   t_m^+
\Bigr),
  & \text{Mean--bPoE},\\[6pt]
\displaystyle
\underset{u_m \in \mathcal{Z}}{\sup}\;
\widehat{V}_a^c\Bigl(
   y^+(y,b,u_m),\,
   b^+(y,b,u_m),\,
   W^c,\,
   t_m^+
\Bigr),
  & \text{Mean--CVaR},
\end{cases}
\end{align}
and $y^+(\cdot)$, $b^+(\cdot)$ are given in~\eqref{eq:sbplus}.}}

{\sifin{
The numerical scheme at the reference node
$\hat{x}_{n,j}^{k,m} = (y_n,b_j,W_k,t_m)
 \in \Omega^h \times \Gamma^h \times \{t_m\}$,
including both interior and boundary conditions,
is written as
\begin{align}
\label{eq:con_mean_disc_c}
\widehat{V}_h^c\bigl(\hat{x}_{n,j}^{k,m}\bigr)
&=
\mathcal{G}_h\Bigl(
   \hat{x}_{n,j}^{k,m},\,
   \bigl\{\widehat{V}_h^c\bigl(\hat{x}_{l,j}^{k,(m+1)-}\bigr)\bigr\}_{l=-N^{\dagger}/2}^{N^{\dagger}/2}
\Bigr)
\nonumber
\\
&=
\begin{cases}
\displaystyle
\underset{\{u_i\}}{\min}\;
\widehat{V}_{o,h}^c\Bigl(
   y^+(y_n, b_j, u_i),\,
   b^+(y_n, b_j, u_i),\,
   W_k,\,
   t_m^+
\Bigr),
  & \text{Mean--bPoE},\\[6pt]
\displaystyle
\underset{\{u_i\}}{\max}\;
\widehat{V}_{a,h}^c\Bigl(
   y^+(y_n, b_j, u_i),\,
   b^+(y_n, b_j, u_i),\,
   W_k,\,
   t_m^+
\Bigr),
  & \text{Mean--CVaR},
\end{cases}
\end{align}
where, again, $y^+(\cdot)$ and $b^+(\cdot)$ are defined via~\eqref{eq:sbplus}.}}


We then establish $\ell_\infty$-stability, consistency, and monotonicity of our scheme. Specifically,
\begin{itemize}[noitemsep, topsep=0pt, leftmargin=*]
\item $\ell_\infty$-stability: the scheme  \eqref{eq:terminal_c_sym}–\eqref{eq:incept_pre_c_sym} for $V_h^c(\cdot)$ satisfies the bound:
    \EQ
    \label{eq:stab_c}
    \sup_{h > 0} \big\| \widehat{V}_h^c(W_k, t_m) \big\|_{\infty} < \infty, \quad \forall t_m \in \mathcal{T} \cup \{T\}, \text{ as  $h \to 0$}.
    \EN

\item Local consistency: For any smooth test function $\phi \in \mathcal{C}^{\infty}(\Omega \cup \Omega_{b_{\max}} \times [0,T])$ for a sufficiently small $h, \chi$,
\begin{align}
\label{eq:pointwise_consistency_c}
\mathcal{G}_h\left(\hat{x}_{n,j}^{k,m}, \left\{\widehat{V}_h^c\big(
\hat{x}_{l,j}^{k,(m+1)-}\big) +\chi  \right\}_{l=-N^{\dagger}/2}^{N^{\dagger}/2}\right)
=
\mathcal{G}\left(\hat{x}_{n,j}^{k,m}, \widehat{V}^c(\cdot, t_{m+1}^-)\right)
+
\mathcal{E}'\left(\hat{x}_{n,j}^{k,m}, \epsilon_g, h\right)
+ \mathcal{O}\left(\chi+ h\right),
\end{align}
where $\mathcal{E}'(\hat{x}_{n,j}^{k,m}, \epsilon_g, h) \to 0 $ as $\epsilon_g, h \to 0$.

\item Monotonicity: the numerical scheme $\mathcal{G}_h(\cdot)$ satisfies
\EQA
\label{eq:mon}
\mathcal{G}_h\left(\hat{x}_{n,j}^{k,m},
\left\{\varphi_{l,j}^{k,{m+1}}\right\}_{l=-N^{\dagger}/2}^{N^{\dagger}/2}\right)
\leq
\mathcal{G}_h\left(\hat{x}_{n,j}^{k,m},
\left\{\psi_{l,j}^{k,m+1})\right\}_{l=-N^{\dagger}/2}^{N^{\dagger}/2}\right)
\ENA
for bounded discrete data sets $\left\{\varphi_{l,j}^{k,{m+1}}\right\}$ and $\left\{\psi_{l,j}^{k,m+1}\right\}$ having $\left\{\varphi_{l,j}^{k,{m+1}}\right\} \leq \left\{\psi_{l,j}^{k,m+1}\right\}$, where the inequality is understood in the component-wise sense.
\end{itemize}
\noindent
Proofs of these properties follow the same steps as in the pre-commitment case (see Appendix~\ref{app:convergence} for details), as the re-optimization of thresholds at each state–time node introduces no additional difficulty. This is because the $\min/\max$ operation over thresholds is simply a monotone transformation of real values.

By $\ell_{\infty}$-stability, local consistency, and monotonicity, we obtain a convergence bound analogous to the pre-commitment case:
\begin{center}
\begin{minipage}{0.95\textwidth}
\emph{Let $(y', b') \in \Omega$ be arbitrary. Suppose that linear interpolation is used for intervention step. Under the assumption that $\epsilon_g \to 0$ as $h \to 0$, we have:
for a fixed discretized threshold value $W_h \in \Gamma^h$, and each $m \in \{M-1, \ldots, 0\}$ and any sequence $\{(y_h, b_h)\}$ with $(y_h, b_h) \in \Omega_{\myin}^h$ and $(y_h, b_h) \to (y', b')$ as $h \to 0$,
\EQ
\label{eq:bound_c}
\bigl|\widehat{V}_h^c(y_h,b_h,W_h,t_m) - \widehat{V}^c(y',b',W_h,t_m)\bigr| \le \chi_h^m, \quad \chi_h^m \text{ is bounded } \forall h > 0 \text{ and }
\chi_h^m \to 0 \text{ as } h \to 0.
\EN
}
\end{minipage}
\end{center}
\paragraph{Value function convergence.}
To handle the threshold re-optimization (unifying both Mean-bPoE and Mean-CVaR problems), we follow the same approach as in the pre-commitment case (Section~\ref{ssc:outer}) but do it locally at each node: fix $(y', b', t_m) \in \Omega_{\myin} \times \{t_m\}$ (resp.\ $(y_h, b_h, t_m) \in \Omega_{\myin}^h \times \{t_m\}$) and let the threshold $W^c\in \Gamma$ (resp.\ $W_h\in \Gamma^h$) we define $F(y', b', W^c, t_m)$ (resp.\ $F_h(y_h, b_h, W_h, t_m)$) as follows
{\sifin{
\begin{equation}
\label{eq:FFh_c}
F(\cdot)~~
\;\bigl(\text{resp. }F_h(\cdot)\bigr)
:=
\begin{cases}
\widehat{V}_o^{c}(y',b',W^c,t_m) &
\bigl(\text{resp. }\widehat{V}_{o,h}^{c}(y_h,b_h,W_h,t_m)\bigr),
  \qquad ~~~~\text{Mean--bPoE},\\
-\widehat{V}_a^{c}(y',b',W^c,t_m) &
\bigl(\text{resp. }-\widehat{V}_{a,h}^{c}(y_h,b_h,W_h,t_m)\bigr),
  \qquad \text{Mean--CVaR}.
\end{cases}
\end{equation}
Here, $\widehat{V}^c$ and $\widehat{V}_h^c$ are defined by in
Definition~\ref{def:glwb_c} and \eqref{eq:terminal_c_sym}–\eqref{eq:whVhc}, respectively. In the Mean--CVaR case, the minus sign allows us to treat both problems
as minimizations in $W^c$.
}}

We note that
\EQ
\label{eq:VVh}
\widehat{V}^c(y',b',t_m)
  \;:=\;
  \min_{\,W^c \in \Gamma}\,F\bigl(y',b',W^c,t_m\bigr),
  \quad\text{and}\quad
  \widehat{V}_h^c(y_h,b_h,t_m)
  \;:=\;
  \min_{\,W_h \in \Gamma^h}\,F_h\bigl(y_h,b_h,W_h,t_m\bigr).
\EN
We now establish
\EQ
\label{eq:goal_time_con}
  \lim_{h\to 0}
  \Bigl\lvert
    \min_{W_h\in \Gamma^h}F_h(y_h,b_h,W_h,t_m)
    \;-\;
    \min_{W^c\in \Gamma}F(y',b',W^c,t_m)
  \Bigr\rvert
  =0
  \quad
  \text{as }(y_h,b_h)\to(y',b')
\EN
using the $\limsup$ and $\liminf$ arguments: $(y_h,b_h)\to(y',b')$ as $h \to 0$, we have
\EQ
\label{eq:limsupinf_c}
\begin{aligned}
\limsup_{h\to 0}
    \min_{W_h\in \Gamma^h}F_h(y_h,b_h,W_h,t_m)
  &\le
  \min_{W^c\in \Gamma}F(y',b',W^c,t_m),
\\
\liminf_{h\to0}
    \min_{W_h\in \Gamma^h}F_h(y_h,b_h,W_h,t_m)
  &\ge
  \min_{W^c\in \Gamma}F(y',b',W^c,t_m).
\end{aligned}
\EN
Once \eqref{eq:goal_time_con} is established,
recalling \eqref{eq:VVh}, we conclude that
\[
  \lim_{h \to 0}\,
  \Bigl\lvert
    \widehat{V}_h^c(y_h,b_h,t_m)
    \;-\;
    \widehat{V}^c(y',b',t_m)
  \Bigr\rvert
  \;=\;
  0
  \quad\text{as }(y_h,b_h)\to(y',b').
\]
By exponentiating $s = e^{y}$ and letting $(s_h,b_h)\to(s',b')$,
we recover the value function convergence in the original coordinates
stated in \eqref{eq:time_consistent_v_conv}.

We first show the $\limsup$ portion of \eqref{eq:limsupinf_c}. This is similar to the
proof for the pre-commitment case.
By the definition of the continuous minimum,
\EQ
\label{eq:We}
\exists W^\ast\in\Gamma \text{ such that }
     F\bigl(y',b',W^\ast,t_m\bigr) \le \min_{W^c\in \Gamma}F(y',b',W^c,t_m) +\tfrac{\varepsilon}{2}.
\EN
Because $\Gamma^h\subset \Gamma$ is dense as $h\to0$, choose $W_h\in \Gamma^h$ such that $\lvert W_h -W^\ast\rvert\to 0$.
By a triangle inequality and for sufficiently small $h$, we have
\begin{align*}
\bigl|
  F_h(y_h,b_h,W_h,t_m)
  -
  F(y',b',W^\ast,t_m)
\bigr|
&\le
  \bigl|
    F_h(y_h,b_h,W_h,t_m)
    -F(y',b',W_h,t_m)
  \bigr|
\\
& \quad +
  \bigl|
    F(y',b',W_h,t_m)
    -F(y',b',W^\ast,t_m)
  \bigr|
 \overset{(i)}{\le} \tfrac{\varepsilon}{4} + \tfrac{\varepsilon}{4} = \tfrac{\varepsilon}{2}
\end{align*}
Here, in (i),  by \eqref{eq:bound_c}, for sufficiently small $h$, we can bound the first term by $\tfrac{\varepsilon}{4}$, while due to the continuity of $F(\cdot)$ in the threshold argument, the second term can be bounded above by $\tfrac{\varepsilon}{4}$.
Thus, we have
\[
F_h(y_h,b_h,W_h,t_m)
     \le
     F\bigl(y',b',W^\ast,t_m\bigr) + \tfrac{\varepsilon}{2}.
\]
Therefore,
\EQ
\label{eq:minF_c}
     \min_{\,W_h\in \Gamma^h}F_h(y_h,b_h,W_h,t_m)
     \;\le\;
     F_h(y_h,b_h,W_h,t_m)
     \;\le\;
     F\bigl(y',b',W^\ast,t_m\bigr) +\tfrac{\varepsilon}{2}.
\EN
Using~\eqref{eq:We}–\eqref{eq:minF_c}, we obtain
\[
\min_{W_h \in \Gamma^h} F_h(y_h, b_h, W_h, t_m) \le \min_{W^c \in \Gamma} F(y', b', W^c, t_m) + \varepsilon,
\]
from which taking $\limsup_{h \to 0}$ yields the upper bound in~\eqref{eq:limsupinf_c}.

For the $\liminf$ portion of \eqref{eq:limsupinf_c}, we fix any $\widetilde{W}_h\in\Gamma^h$. By the bound \eqref{eq:bound_c}, we have
\[
F_h(y_h,b_h,\widetilde{W}_h,t_m) \ge F(y',b',\widetilde{W}_h,t_m) -\chi_h,
\]
for some small $\chi_h\to0$. Since $\Gamma^h\subseteq \Gamma$, we have
\[
\min_{W_h\in \Gamma^h}F_h(y_h,b_h,W_h,t_m)
\ge
\min_{W^c\in \Gamma} F(y',b',W^c,t_m) - \chi_h.
\]
Taking $\liminf$ as $h\to0$ and $(y_h,b_h)\to(y',b')$ yields the
 $\liminf$ portion of \eqref{eq:limsupinf_c}, as wanted.

\paragraph{Optimal threshold convergence.}
Let $W_h^{c\ast}$ be an optimal threshold for the discrete problem
$F_h(\cdot)$ at time $t_m$, i.e.\
\[
F_h\bigl(y_h,b_h,W_h^{c\ast},t_m\bigr) = \min_{W_h \in \Gamma^h}\,
  F_h\bigl(y_h,b_h,W_h,t_m\bigr).
\]
Since $\Gamma$ is a compact subset of $\mathbb{R}$
and $W_h^{c\ast}\in \Gamma^h\subset \Gamma$,
any infinite sequence $\{W_h^{c\ast}\}$
has a subsequence $\{W_{h_k}^{c\ast}\}$ that converges to
some $W^{\ast} \in \Gamma$ as $h \to 0$.
We claim that $W^{\ast}$ is indeed an optimal threshold for the exact time-consistent problem:
\EQ
\label{eq:claimW}
  W^{\ast} \in  \argmin_{W^c \,\in\, \Gamma}\,F\bigl(y',b',W^c,t_m\bigr).
\EN
To prove \eqref{eq:claimW}, we first note that, due to local consistency in $(y,b)$,
similar to pre-commitment case,  for any $W\in\Gamma$, consider $\{(y_h, b_h)\}$ with $(y_h, b_h) \in \Omega_{\myin}^h$ and $(y_h, b_h) \to (y', b')$ as $h \to 0$, we have $\bigl|\,F_h(y_h,b_h,W,t_m)\, - \, F(y',b',W,t_m)\bigr|\to 0$ as $h\to0$.
Combining this with $W_{h_k}^{c\ast} \;\xrightarrow[k\to \infty]{} W^{\ast}$, we have
\EQ
\label{eq:iq}
  F_h\bigl(y_{h_k},b_{h_k},W_{h_k}^{c\ast},t_m\bigr)
  \;\longrightarrow\;
  F\bigl(y',b',W^{\ast},t_m\bigr) \text{ as $k \to \infty$}.
\EN
In addition, by \eqref{eq:goal_time_con}, we have
\EQ
\label{eq:iqq}
  \min_{W_h\in \Gamma^h}F_h(y_h,b_h,W_h,t_m)
    \;\xrightarrow[h\to 0]{}
    \min_{W^c\in \Gamma}F(y',b',W^c,t_m).
\EN
Consequently, by \eqref{eq:iq}-\eqref{eq:iqq}, we conclude
\[
  F\bigl(y',b',W^{\ast},t_m\bigr)
  \;=\;
  \lim_{k\to\infty}\,F_h\bigl(y_{h_k},b_{h_k},W_{h_k}^{c\ast},t_m\bigr)
  \;=\;
  \min_{W^c\in \Gamma}F\bigl(y',b',W^c,t_m\bigr).
\]
Thus $W^{\ast}\in \argmin_{W^c\in\Gamma}\,\{\,F(y',b',W^c,t_m)\}$,
proving that $W^{\ast}$ is an exact optimal threshold of the continuous
problem. This concludes the proof.

{
\bibliographystyle{plain}
\setlength{\bibsep}{0pt plus 0.3ex}
\small
\bibliography{paperbib_cc}
}

\end{document}